\newcounter{opteq}
\newenvironment{opteq}{\refstepcounter{opteq}\align}{\tag{P\theopteq}\endalign}
\DeclareMathOperator*{\argmin}{arg\,min}
\newcommand*{\rom}[1]{\expandafter\@slowromancap\romannumeral #1@}
\newtheorem{definition}{Definition}
\newtheorem{lemma}{Lemma}
\newtheorem{proposition}{Proposition}
\newtheorem{thm}{Theorem}
\newtheorem{remark}{Remark}
\newcommand{\txt}[1]{\text{\normalfont #1}}   
\title{Hybrid Beamforming for Active Sensing using Sparse Arrays}
\author{Robin~Rajam\"{a}ki$^\dagger$,~\IEEEmembership{Student Member,~IEEE,}
	Sundeep Prabhakar Chepuri$^\star$,~\IEEEmembership{Member,~IEEE,}
	and~Visa~Koivunen$^\dagger$,~\IEEEmembership{Fellow,~IEEE}
	\thanks{$^\dagger$Department of Signal Processing and Acoustics, Aalto University, Espoo, Finland. (e-mail: robin.rajamaki@aalto.fi, visa.koivunen@aalto.fi)
		\newline$^\star$Department of Electrical Communication Engineering, Indian Institute of Science, Bangalore, India (e-mail: spchepuri@iisc.ac.in).}
	\thanks{This work was supported in part by the Academy of Finland project ``Massive and Sparse Antenna Array Processing for Millimeter-Wave Communications'', and the Tata Trusts.}}%
\begin{document}

	\maketitle

	\begin{abstract}	
	This paper studies hybrid beamforming for active sensing applications, such as millimeter-wave or ultrasound imaging. Hybrid beamforming can substantially lower the cost and power consumption of fully digital sensor arrays by reducing the number of active front ends. Sparse arrays can be used to further reduce hardware costs. We consider phased arrays and employ linear beamforming with possibly sparse array configurations at both the transmitter and receiver. The quality of the acquired images is improved by adding together several component images corresponding to different transmissions and receptions. In order to limit the acquisition time of an image, we formulate an optimization problem for minimizing the number of component images subject to achieving a desired point spread function. Since this problem is not convex, we propose algorithms for finding approximate solutions in the fully digital beamforming case, as well as in the more challenging hybrid and analog beamforming cases that employ quantized phase shifters. We also determine upper bounds on the number of component images needed for achieving the fully digital solution using fully analog and hybrid architectures, and derive closed-form expressions for the beamforming weights in these cases. Simulations demonstrate that a hybrid sparse array with very few elements, and even fewer front ends, can achieve the resolution of a fully digital uniform array at the expense of {a} longer image acquisition time.
	\end{abstract}
	\begin{IEEEkeywords}
	Active sensing, hybrid beamforming, image addition, phased array, sparse arrays, sum co-array.
	\end{IEEEkeywords}
	
	\section{Introduction}\label{sec:introduction}
	{Sensor} arrays are a key technology with several applications in radar, sonar, microwave imaging, medical ultrasound, and wireless communications, to list a few \cite{vantrees2002optimum}. The many advantages of arrays include high {signal-to-noise ratio} (SNR) {gain}, {spatial diversity,} and the capability to {cancel} interferences {by beamforming}. {The ability to resolve targets improves with increasing aperture, which encourages using short carrier wavelengths. This allows for designing electrically large arrays with small form factors by packing {a very large number (on the order of hundreds) of} elements into a tiny {(on the order of a cm$ ^2 $)} physical area.} On the other hand, the cost, power consumption, and computational load commonly associated with signal processing for many antenna elements and dedicated transceiver chains may become prohibitively large. These issues are especially pronounced for fully digital arrays, where each array element is connected to separate \emph{front end}, which includes {\emph{radio} {and} \emph{intermediate frequency}} (RF-IF) components and an \emph{analog-to-digital converter} (ADC) or {a} \emph{digital-to-analog converter} (DAC). For example, a planar antenna array operating in the THz frequencies of the radio spectrum may in principle even fit thousands of elements in {an} area of only a few {square} centimeters. The practical applicability of such fully digital systems is limited by the number of required {RF-IF} front ends, and the typical high sampling rates and bandwidths imposed on the DACs/ADCs.
	
	\emph{Sparse arrays} can be used to reduce the cost of large arrays with a regular geometry. By utilizing a virtual array model called the \emph{co-array}, the number of elements can be significantly reduced compared to a uniform array of equivalent aperture, without sacrificing the array's ability to resolve scatterers or signal sources \cite{hoctor1990theunifying,pal2010nested,wang2017coarrays}. The co-array is a virtual array structure typically consisting of the pairwise vector sums or differences of the physical array element positions. For instance, the \emph{sum co-array} commonly arises in active {sensing} applications, where linear processing (delay-and-sum beamforming and matched filtering) is used at the transmitter and receiver. Sparse arrays exploit the fact that the co-array of a uniform array is redundant. {{Redundancy implies that} the same co-array can be achieved using fewer physical elements} {by carefully placing the sensors in a sparse manner.}
	
	The support of the co-array ultimately determines the achievable set of \emph{point spread functions} (PSFs){, which determine the} {properties} of the imaging system. A particular PSF may be achieved by weighting the co-array using the so-called \emph{image addition} technique~\cite{hoctor1990theunifying}. Image addition produces a desired co-array weighting by adding together several images, which are acquired using different transmit-receive beamforming weights. Each of these component weights correspond to a separate transmission, or pulse, and reception, when transmitters operate coherently as in a phased array. In this case, it is critical to keep the number of component images as low as possible {to reduce the image acquisition time}, while controlling the distortions to the PSF. If transmissions are incoherent, as in synthetic aperture radar, or orthogonal multiple-input-multiple-output (MIMO) radar \cite{li2009mimo}, the number of component images is less important. In such cases, image addition may be applied during post-processing after data acquisition \cite{ahmad2004designandimplementation}.
	
	\emph{Hybrid beamforming} may be used to further {lower} the cost of a fully digital array. Hybrid beamforming architectures reduce the number of front ends by pre-processing the transmitted or received signals by an analog beamforming network. This network usually consists of inexpensive, low power phase shifters connecting every array element to all front ends. Fig.~\ref{fig:architecture} depicts this \emph{fully connected} architecture. In a partially connected architecture, each front end is connected to only a subset of all the available elements {\cite{ahmed2018asurvey}}. The total power consumption and cost of the system may further be reduced by applying coarser quantization at the ADCs/DACs \cite{venkateswaran2010analog,alkhateeb2014mimo}, or by using sparse arrays. The typical application of hybrid beamforming is \emph{millimeter-wave} (mmWave) communications, where linear processing is used to precode and decode multiple data streams sent over a MIMO channel, with the goal of maximizing the {spectral efficiency} \cite{alkhateeb2014mimo} or {minimizing the mean squared error of the received data \cite{ioushua2019afamily}}.
	\begin{figure*}[t]
		\includegraphics[width=1\textwidth]{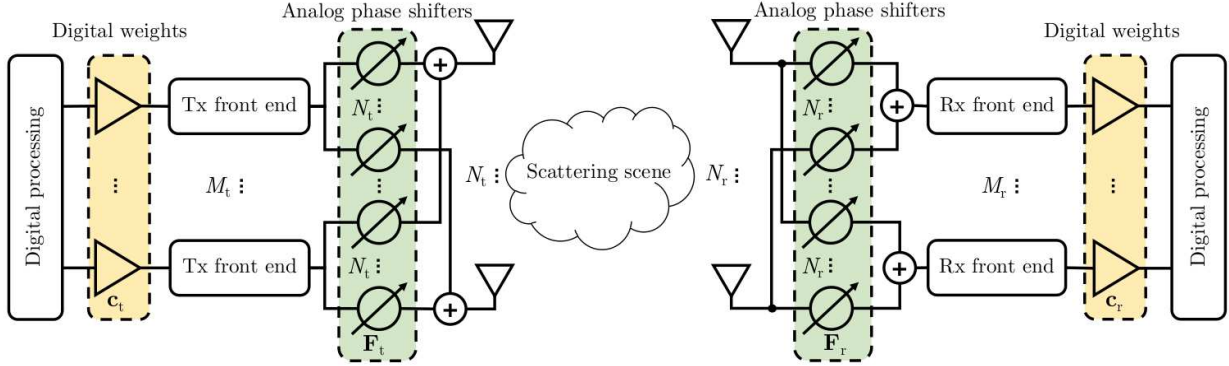} 
		\caption{Fully connected hybrid beamforming architecture of transmitter and receiver. Each digital weight (element of vector $ \mathbf{c}_\txt{x} $) and array element is connected via a front end and a phase shifter with discretized phase (element of matrix $\mathbf{F}_\txt{x}$).}\label{fig:architecture}
	\end{figure*}

	{The design of the hybrid beamformer is challenging as it requires solving a non-convex optimization problem.} In particular, non-convexity results from i) decomposing the fully digital beamformer into analog and digital parts; ii) introducing phase shifters in the analog beamforming network; and iii) using quantized phase shifts \cite{molisch2017hybrid}. Many authors have addressed these issues using {a {variety} of} analytical \cite{zhang2005variable,zhang2014onachieving,sohrabi2016hybrid,bogale2016onthenumber} {and} numerical tools \cite{sohrabi2016hybrid,bogale2016onthenumber,yu2016alternating,chen2017hybrid,jin2018hybridprecoding,ioushua2019afamily,arora2020hybrid}. Most analytical methods make use of the fact that any digital beamforming vector may be implemented by a fully connected hybrid beamformer using continuous phase shifters and two front ends \cite{zhang2005variable}. Actually, only a single front end per data stream is sufficient, if the number of phase shifters per front end is doubled \cite{zhang2014onachieving,sohrabi2016hybrid,bogale2016onthenumber}. {However, these} results are not applicable if the number of streams is greater than the number of front ends, or if the phase shifters are quantized. Consequently, several numerical approaches to solve the hybrid beamforming problem have been proposed, including alternating minimization {\cite{yu2016alternating,ioushua2019afamily}, majorization-minimization \cite{arora2020hybrid}}, quasi-Newton methods \cite{jin2018hybridprecoding}, Wirtinger {calculus} \cite{koochakzadeh2018beam}, {coordinate descent \cite{chen2017hybrid}}, and various heuristics \cite{sohrabi2016hybrid,bogale2016onthenumber}.
	
	\subsection{Contributions and organization of paper}
	The aforementioned works mainly consider hybrid beamforming in a mmWave {MIMO} communications context. In contrast, this paper {proposes a hybrid beamforming {phased array} architecture for} active sensing applications. The {co-located} transmitting and receiving arrays have a fully connected hybrid architecture and may be sparse. {Furthermore,} we utilize image addition to synthesize PSFs that are usually only achieved by uniform arrays employing fully digital beamforming. {To the best of our knowledge, the resulting multi-image joint transmit-receive beampattern matching problem {has not been studied before}. {In particular, it essentially differs from the typical hybrid beamformer design problem, where the optimization of the transmitter and receiver is decoupled, and spectral efficiency is used as the objective function \cite{ayach2014spatially}.}}
	
	{The} main contributions of the paper are threefold:
	\begin{enumerate}
		\item We formulate an optimization problem {to jointly find} the hybrid transmit and receive beamformers achieving a desired PSF using as few component images as possible.
		\item We develop {a greedy algorithm} for approximately solving this non-convex {hybrid} {beamformer design} problem. In the {special case of the} fully digital {beamformer}, {we propose using} an alternating minimization {algorithm, which {is also partly utilized} in the hybrid case.}
		\item We derive {closed-form beamforming weights yielding} upper bounds on the number of component images {required by the hybrid and fully analog beamformers to match the beampattern of the fully digital beamformer.}
	\end{enumerate}
	We address the general case when the analog beamforming network consists of phase shifters with quantized phases. In a related work, we study the special case of a single front end connected to phase shifters with continuous phases \cite{rajamaki2019analog}.

	The paper is organized as follows. Section~\ref{sec:definitions} introduces the signal model and defines key concepts, such as the point spread function and the image addition method. Section~\ref{sec:problem} formulates the hybrid beamformer weight optimization problem. Section~\ref{sec:prior_work} reviews key prior work that will be utilized in Section~\ref{sec:numerical}, where we propose algorithms for approximately solving the hybrid beamforming problem in both the fully digital and hybrid cases. Section~\ref{sec:analytical} develops closed-form expressions for the hybrid beamforming weights, which provide upper bounds on the number of component images in the case of continuous and discrete phase shifts. Finally, Section~\ref{sec:examples} demonstrates the performance of the proposed solutions via simulations using both linear and planar arrays. In particular, we show that sparse hybrid beamformers with quantized phase shifters can achieve image quality comparable to uniform fully digital beamformers, at the expense of an increase in the number of transmissions and a reduction in array gain.
	
	\subsection{Notation}
	Matrices are denoted {using} bold uppercase, vectors {using} bold lowercase, and scalars {using} unbolded letters. The $(n,m) $th element of matrix $ \mathbf{A} $ is denoted {as} $ A_{nm} $. {If the matrix is indexed by a subscript, say as $ \mathbf{A}_i $, the $(n,m) $th element is denoted as $ [\mathbf{A}_i]_{nm} $.} Furthermore, the $ n $th row and $ m $th column of matrix $ \mathbf{A} $ are denoted as $ \mathbf{A}_{n,:} $ and $ \mathbf{A}_{:,m} $. Subscripts ``$ \txt{t} $'' and ``$ \txt{r} $'' denote transmitter and receiver, respectively. We omit these subscripts, or use {``$\txt{x}$'' to indicate either of them to} avoid unnecessary repetition whenever possible. The $ N $-dimensional vector of ones is denoted by $\mathbf{1}_N $, and the $ N\times N $ identity matrix by $ \mathbf{I}_N $ {(subscripts are omitted when the dimensions are clear from the context)}. The standard unit vector, consisting of zeros except for the $ i $th entry, {which is unity,} is denoted by $ \mathbf{e}_i $ (dimension specified separately). The indicator function is denoted by $ \mathbbm{1}(\cdot) $. The $ \ell_p $ and Frobenius norms are{, respectively,} denoted {as} $ \|\cdot\|_p$ and $ \|\cdot\|_\txt{F} $, where $ p\geq 1 $. Operators $ (\cdot)^\txt{T} $, $ (\cdot)^\txt{H} $, $ (\cdot)^* $, and $ (\cdot)^\dagger ${, respectively,} denote the matrix transpose, complex conjugate transpose, complex conjugate, and pseudo-inverse. The Kronecker and Khatri-Rao products are denoted by $ \otimes $ and $ \odot $. The $ \txt{vec}(\cdot) $ operator stacks the columns of its matrix argument into a column vector, whereas, $ \txt{mat}_{N\times M}(\cdot) $ reshapes an $ NM $ dimensional vector into a $ N\times M $ matrix. The $ \txt{diag}(\cdot) $ operator constructs a diagonal matrix of its vector argument. {Basic operations}, such as rounding to the nearest integer $ \lceil \cdot \rfloor $ or the {the angle of a complex-valued number $ \measuredangle \cdot $,} are applied elementwise to matrix arguments. Table~\ref{tab:symbols} lists the symbols that are referred most frequently in the text.
	\begin{table}[t]
		\caption{Frequently used symbols. Subscript ``$ \txt{x} $'' denotes either the receiver (``$ \txt{r} $'') or transmitter (``$ \txt{t} $'').}\label{tab:symbols}
		\centering
		\begin{tabular}{c|c}
			Symbol&Description\\
			\hline
			$ N_\txt{x} $&No. of array elements\\
			$ M_\txt{x} $&No. of front ends\\
			$ {B} $&No. of bits (per phase shifter)\\
			$ Q $&No. of component images\\
			$ V $&No. of {spatial} sampling directions (angles)\\
			$ N_\Sigma $&No. of sum co-array elements\\
			\hline
		\end{tabular}
	\end{table}

	\section{Signal model and definitions} \label{sec:definitions}
	In this section, we introduce the signal model and key definitions. In particular, we consider active narrowband sensing of coherent far-field point scatterers using linear processing at both the transmitter and receiver. We first define matrix $ \mathbf{F}_\txt{x} $, which models the analog beamforming network consisting of phase shifters. We then briefly recall the key concept of the point spread function, which characterizes the performance of a linear imaging system. We also review the image addition method that extends the set of point spread functions that are achievable by hybrid or sparse arrays. Finally, we {introduce} the sum co-array, {which is fundamental in determining the set of achievable point spread functions}.
	
	\subsection{Signal model}
	Consider a phased planar array imaging system that sequentially scans a scattering scene by transmitting and receiving focused beams of narrowband signals. Such systems are typically used in, e.g., medical ultrasound imaging or radar. Let $N_\txt{t}$ denote the number of transmitting (Tx) and $N_\txt{r}$ the number of receiving (Rx) array elements. The number of Tx and Rx front ends are reduced using analog preprocessing networks comprising of phase shifters, as shown in Fig.~\ref{fig:architecture}. Specifically, we use a bank of $M_\txt{t} {\leq} N_\txt{t}$ Tx front ends and $M_\txt{r} {\leq} N_\txt{r}$ Rx front ends. We refer to the beamforming architecture as 
	\begin{itemize}
		\item {\emph{fully digital}, when $ M_\txt{x} =N_\txt{x}$}
		\item {\emph{hybrid}, when $ 2 \leq M_\txt{x}< N_\txt{x} $}
		\item {\emph{fully analog}, when $ M_\txt{x} =1$.}
	\end{itemize}
	Both the hybrid and analog architectures are assumed to be fully connected, whereas the digital architecture is partially connected, since each sensor has a dedicated front end.
	
	We transmit a modulated narrowband pulse using the transmit beamforming weights $\mathbf{w}_\txt{t} = \mathbf{F}_\txt{t}\mathbf{c}_\txt{t} \in \mathbb{C}^{N_\txt{t}}$, where $\mathbf{c}_\txt{t} \in \mathbb{C}^{M_\txt{t}}$ denotes the digital weight vector, and $\mathbf{F}_\txt{t} \in \mathbb{C}^{N_\txt{t} \times M_\txt{t}} $ the analog phase shift matrix (see section~\ref{sec:F} for details). The transmitted radiation is reflected off scatterers in the field-of-view of the transmit array and {observed by} the receiver, where it is processed by a hybrid beamforming network with the beamforming weights $\mathbf{w}_\txt{r} = \mathbf{F}_\txt{r}\mathbf{c}_\txt{r} \in \mathbb{C}^{N_\txt{t}}$. Here $\mathbf{c}_\txt{r} \in \mathbb{C}^{M_\txt{r}}$ denotes the digital, and $\mathbf{F}_\txt{r} \in \mathbb{C}^{N_\txt{r} \times M_\txt{r}}$ the analog beamforming weights of the receiver. The beamformed signal is then {processed} using a digital matched filter yielding 
	\begin{align}
	y(\mathbf{u})&=  \mathbf{w}_\txt{r}^\txt{T}(\mathbf{u}) {\mathbf{A}_\txt{r}\boldsymbol{\Gamma}\mathbf{A}_\txt{t}^\txt{T}}\mathbf{w}_\txt{t}(\mathbf{u})+\mathbf{w}_\txt{r}^\txt{T}(\mathbf{u})\mathbf{n},\label{eq:gamma_hat}
	\end{align}
	where $ \mathbf{u}\!\in\!\mathbb{R}^3$ is the scan direction {taking the form} $ \mathbf{u}\!=\![\sin\varphi\sin\theta, \cos\varphi\sin\theta, \cos\theta ]^\txt{T}$, when the array is focused in the far-field. Here $ \varphi\!\in\![-\pi/2,\pi/2]$ and $\theta\!\in\![0,\pi] $ are the azimuth and elevation angles of the scan direction, {respectively}. Matrix $ \boldsymbol{\Gamma} = \text{diag}(\boldsymbol{\gamma}) \in \mathbb{C}^{K\times K}$ is a diagonal matrix, with $\boldsymbol{\gamma} = [\gamma_1, \ldots, \gamma_K]^\txt{T} \in \mathbb{C}^K$ containing the {scattering coefficients} of the $ K $ {reflectors}, and $ \mathbf{n}\in\mathbb{C}^{N_\txt{r}}$ is a vector of {spatio-temporally white complex circular Gaussian noise with zero mean and covariance matrix $ \sigma^2\bm{I} $}. {Furthermore,} the {$ N_\txt{x} \times K $} steering matrix {of the} {Tx or Rx} {array is}
	\begin{align}
	\mathbf{A}_{\txt{x}}\!=\![\mathbf{a}_\txt{x}(\mathbf{v}_1), \ldots, \mathbf{a}_\txt{x}(\mathbf{v}_K)], \label{eq:Ax}
	\end{align}
	{where} the steering vector $  \mathbf{a}_{\txt{x}}(\mathbf{v}_k)\!\in\!\mathbb{C}^{N_\txt{x}}$ {is} evaluated in scatterer direction $ \{\mathbf{v}_k \in\mathbb{R}^3\}_{k=1}^K $. When the scatterers are located in the far-field of both the transmitting and receiving array, we have $ \mathbf{v}_k = [\sin\varphi_k\sin\theta_k, \cos\varphi_k\sin\theta_k, \cos\theta_k]^\txt{T}$, where $ \varphi_k\in[-\pi/2,\pi/2]$ and $\theta_k\in[0,\pi] $ denote the azimuth and elevation angles of the $ k $th scatterer. {Eq.~\eqref{eq:gamma_hat}, or its magnitude $ |y(\mathbf{u})| $, may be interpreted as an image of the scattering scene. Typically, this image is evaluated for a discrete set of steering directions, i.e., pixels.} 
	 
	\subsection{Analog phase shift matrix $ \mathbf{F}_\txt{x} $} \label{sec:F}
	The entries of the analog phase shift matrix $ \mathbf{F}_{\txt{x}} \in \mathscr{F}_\txt{x}({B})$ are complex exponentials with discrete phases. Specifically, let
	\begin{align}
	\mathscr{F}_\txt{x}({B}) &= \{\mathbf{F}\!=\!\exp({j\boldsymbol{\Phi}})\ |\ \boldsymbol{\Phi}\in\mathbb{R}^{N_\txt{x}\times M_\txt{x}}, \Phi_{nm}\!\in\!\varPhi({B}) \}, \label{eq:B_quantized}\\
	\varPhi({B}) &= \{0,2\pi/2^{{B}},\ldots,(2^{B}-1)2\pi/2^{{B}}\}, \label{eq:phases_set}
	\end{align}
	where the exponential function in \eqref{eq:B_quantized} is applied elementwise, and $ {B} $ denotes the number of bits used to uniformly quantize the phase of each entry of $ \mathbf{F} $ over the interval $ [0,2\pi) $. Note that \eqref{eq:phases_set} ensures that $ \varPhi({B}+1) \supset \varPhi({B})$, and thereby $ \mathscr{F}_\txt{x}({B}+1) \supset \mathscr{F}_\txt{x}({B})$. It also follows from \eqref{eq:phases_set} that the phase quantization operator $ {\mathcal{P}_{{B}}}(\boldsymbol{\Psi}) $, i.e., the projection of the elements of some matrix $ \boldsymbol{\Psi}\in[0,2\pi)^{N\times M} $ to set $ \varPhi({B}) $, {can be expressed as}
	\begin{align}
	{\mathcal{P}_{{B}}}(\boldsymbol{\Psi})\! =\frac{\pi}{2^{B-1}} \Big\lceil \frac{2^{B-1}}{\pi} \boldsymbol{\Psi} \Big\rfloor {\bmod\ 2\pi}.\label{eq:proj_Phi}
	\end{align}
	Letting the number of bits go to infinity yields the special case of continuous phase shifters: $ \mathscr{F}_\txt{x}(\infty) =\lim_{{B}\to\infty}\mathscr{F}_\txt{x}({B})  $; $ \varPhi(\infty)=\lim_{{B}\to\infty}\varPhi({B}) = [0,2\pi) $; and $ {\mathcal{P}_{\infty}}(\boldsymbol{\Psi}) = \boldsymbol{\Psi} $.
	
	\subsection{Point spread function}\label{sec:psf}
	The \emph{point spread function} (PSF) {fully defines the} spatial impulse response of a linear imaging system, {and {is} key in characterizing the achievable resolution and interference suppression capability. The effective PSF of an active array} is the product of the Tx and Rx PSFs, {as illustrated in Fig.~\ref{fig:psf_concept}}. Specifically, for {the array focused in the direction $ \mathbf{u}\!\in\!\mathbb{R}^3$, and a unit reflectivity point-scatterer in the direction $ \mathbf{v}\!\in\!\mathbb{R}^3$,} the PSF is defined as
	\begin{align*}
	{\psi(\mathbf{u},\mathbf{v}) =(\mathbf{a}_\txt{t}(\mathbf{v})\otimes \mathbf{a}_\txt{r}(\mathbf{v}))^\txt{T}\text{vec}(\mathbf{W}(\mathbf{u})),}
	\end{align*}
	where $ {\mathbf{W}(\mathbf{u})=\mathbf{w}_\txt{r}(\mathbf{u})\mathbf{w}_\txt{t}^\txt{T}(\mathbf{u})} \in\mathbb{C}^{N_\txt{r}\times N_\txt{t}} $ is a rank-1 matrix. {For a fixed $\mathbf{u}$ and a discrete set of $ V $ scatterer directions $ \{\mathbf{v}_i\}_{i=1}^V $ {determined by the desired imaging region and resolution,} the PSF may be expressed as a vector $\boldsymbol{\psi} \in\mathbb{C}^{V}$ satisfying}
	\begin{align}
	{\boldsymbol{\psi} = \mathbf{A}^\txt{T}\txt{vec}(\mathbf{W}).} \label{eq:psf_vec}
	\end{align}
	{Here the $ i $th column of the \emph{effective steering matrix} $ \mathbf{A}_{}\in\mathbb{C}^{N_\txt{r}N_\txt{t}\times V} $ is given by the Kronecker product of the Tx and Rx steering vectors evaluated in direction $\mathbf{v}_i $, i.e., $ \mathbf{a}_\txt{t}(\mathbf{v}_i)\otimes\mathbf{a}_\txt{r}(\mathbf{v}_i) $. Consequently, $ \mathbf{A} $ can be expressed as the Khatri-Rao product of the Tx and Rx steering matrices $ \mathbf{A}_\txt{x}\in\mathbb{C}^{N_\txt{x}\times V} $ in \eqref{eq:Ax}:
	\begin{align}
	\mathbf{A} = \mathbf{A}_\txt{t}\odot\mathbf{A}_\txt{r}.\label{eq:A}
	\end{align}
	Any feasible PSF $ \boldsymbol{\psi} $ thus lies in the row space of $\mathbf{A}$. For examples of typical PSFs, see Section~\ref{sec:num_def_psf_det}.}
\begin{figure}[t]
		\centering
		\includegraphics[width=.7\textwidth]{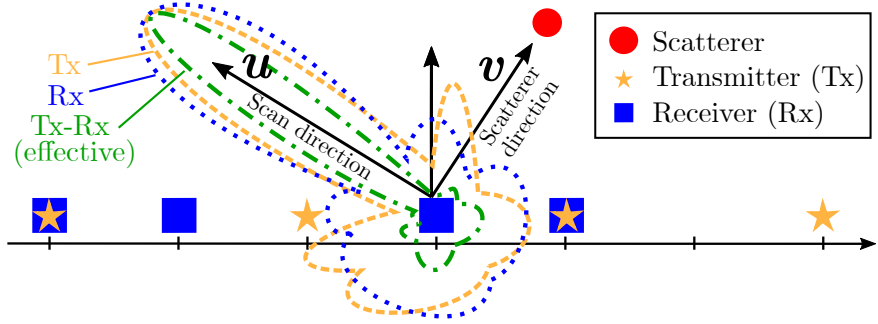}
	\caption{The PSF is the spatial response of the array to a point scatterer in direction $\mathbf{v}$, with the array steered in direction $ \mathbf{u} $. The effective PSF is the product of {the} Tx and Rx array PSFs.}\label{fig:psf_concept}
\end{figure}
		
	\subsection{Image addition}
	A single Tx-Rx weight pair {$\{\mathbf{w}_\txt{t},\mathbf{w}_\txt{r}\}$} may not always suffice to achieve a desired PSF. In this case, the {range of feasible PSFs may be extended} by \emph{image addition} \cite{hoctor1990theunifying}. Image addition synthesizes a composite image {with improved resolution and} {lower} {side lobe levels, by summing together several \emph{component images} that} are formed using different Tx-Rx weight pairs. {This corresponds to using a rank-$ Q $ \emph{co-array weight matrix} $\mathbf{W}\in \mathbb{C}^{N_\txt{r}\times N_\txt{t}} $ in \eqref{eq:psf_vec}, defined as \cite{kozick1991linearimaging}:}
	\begin{align}
	\mathbf{W} &= \sum_{q=1}^Q \mathbf{w}_{\txt{r},q}\mathbf{w}_{\txt{t},q}^\txt{T}=\mathbf{W}_\txt{r}\mathbf{W}_\txt{t}^\txt{T}, \label{eq:W_w} 
	\end{align}
	where $ \mathbf{W}_\txt{x}= [\mathbf{w}_{\txt{x},1},\ldots,\mathbf{w}_{\txt{x},Q}] \in\mathbb{C}^{N_\txt{x}\times Q}$. Each {rank-$ 1 $ matrix} $ \mathbf{w}_{\txt{r},q}\mathbf{w}_{\txt{t},q}^\txt{T} $ in \eqref{eq:W_w} corresponds to a transmission and reception with a different pair of Tx and Rx weight vectors $\mathbf{w}_{\txt{t},q}$ and $ \mathbf{w}_{\txt{r},q}$. These vectors may be found from the \emph{singular value decomposition} (SVD) of matrix $ \mathbf{W} $ in the case of {a fully digital beamformer} \cite{kozick1991linearimaging}. The smaller the \emph{number of component images} $ Q $ is, the shorter the image acquisition time, since fewer {transmissions} are required {in forming} an image. In the case of hybrid beamforming, \eqref{eq:W_w} becomes
	\begin{align}
	\mathbf{W}\!=\!\sum_{q=1}^Q \mathbf{F}_{\txt{r},q}\mathbf{c}_{\txt{r},q}\mathbf{c}_{\txt{t},q}^\txt{T}\mathbf{F}_{\txt{t},q}^\txt{T}\!=\! \mathbf{F}_\txt{r}({\mathbf{I}}\odot \mathbf{C}_\txt{r})({\mathbf{I}}\odot \mathbf{C}_\txt{t})^\txt{T}\mathbf{F}_\txt{t}^\txt{T},\label{eq:W_Bg} 
	\end{align}
	where {matrices $ \mathbf{F}_\txt{x}\!=\![\mathbf{F}_{\txt{x},1},\ldots,\mathbf{F}_{\txt{x},Q}] \in	\mathscr{F}_\txt{x}({B})\!\subset\!\mathbb{C}^{N_\txt{x}\times M_\txt{x}Q}$ and $ \mathbf{C}_\txt{x}\!=\![\mathbf{c}_{\txt{x},1},\ldots,\mathbf{c}_{\txt{x},Q}]\!\in\!\mathbb{C}^{M_\txt{x}\times Q}$, {respectively,} contain the analog and digital beamforming weights of all {the} $ Q $ component images.} {Note that \eqref{eq:W_Bg} is not necessarily unique, as is shown in Section~\ref{sec:bounds_h_cont}. Furthermore, although $\mathbf{c}_{\txt{x},q}$ could also be defined as a real-valued vector, allowing for complex-valued entries {is} more convenient for our purposes.}
	
	\subsection{Sum co-array}\label{sec:sum co-array}
	The sum co-array, {$\mathcal{D}_\Sigma$,} is a virtual array structure {defined as the set of} pairwise sums of the transmit and receive {element positions $ \mathcal{D}_\txt{x}=\{\mathbf{d}_{\txt{x},1},\ldots,\mathbf{d}_{\txt{x},N_\txt{x}}\}\subset \mathbb{R}^3$:}
	\begin{align}
	\mathcal{D}_\Sigma &= \{\mathbf{d}_\txt{t}+\mathbf{d}_\txt{r}\ |\ \mathbf{d}_\txt{x}\in \mathcal{D}_\txt{x}\}.\label{eq:co-array_supp}
	\end{align}
	The sum co-array ultimately determines the set of PSFs that the array can achieve \cite{hoctor1990theunifying}. The utility of the sum co-array stems from the fact that it has at least as many virtual elements $ N_\Sigma= |\mathcal{D}_\Sigma| $ as either of the physical arrays, since $N_\Sigma\geq N_\txt{t}+N_\txt{r}-1$. If the transceivers are co-located, that is, {$\mathcal{D}_\txt{t}=\mathcal{D}_\txt{r} $ and} $ N_\txt{x}=N $, a simple counting argument shows that $N_\Sigma \leq N(N+1)/2$. The array is redundant if $N_\Sigma < N(N+1)/2$. If the transmitting and receiving elements can be placed independently of each other, the redundancy condition is $ N_\Sigma < N_\txt{t}N_\txt{r}$.
	
	Steering matrix $\mathbf{A}$ in \eqref{eq:A} may have fewer than $ N_\txt{t}N_\txt{r} $ unique rows when the array configuration is redundant. In particular, the number of unique rows equals the number of sum co-array elements $ N_\Sigma $, for example, when the array elements are identical and mutual coupling is negligible. The unique rows of $\mathbf{A}$ are therefore contained in the \emph{{sum co-array} steering matrix} $ \mathbf{A}_\Sigma\in\mathbb{C}^{N_\Sigma\times V} $ satisfying
	\begin{align}\label{eq:A_S}
	\mathbf{A} =\boldsymbol{\Upsilon}^\txt{T}\mathbf{A}_\Sigma.
	\end{align}		
	Here $  \boldsymbol{\Upsilon}$ is {an $ N_\Sigma\times N_\txt{t}N_\txt{r} $} binary matrix {mapping} the set of virtual elements arising from the Kronecker structure in {\eqref{eq:A}} to the set of sum co-array elements in \eqref{eq:co-array_supp}. The sum co-array and {the multiplicities of its element} uniquely determine $\boldsymbol{\Upsilon}$.
	\begin{definition}[{Sum co-array} selection matrix]\label{def:sel_mat}
		Map $\boldsymbol{\Upsilon}:\mathbb{C}^{N_\txt{t}N_\txt{r}} \mapsto \mathbb{C}^{N_\Sigma}$ is a binary matrix $ \boldsymbol{\Upsilon}\in\{0,1\}^{N_\Sigma\times N_\txt{t}N_\txt{r}} $, where
		\begin{align*}
		\Upsilon_{n,m}
		=\begin{cases}
		1,\txt{ if } \mathbf{d}_{\Sigma,n} = \mathbf{d}_ {\txt{t},\lceil m/N_\txt{r} \rceil}+\mathbf{d}_ {\txt{r},1+(m-1)\bmod N_\txt{r}}\\
		0,\txt{ otherwise}.
		\end{cases}
		\end{align*} 
		Here $ \mathbf{d}_{\txt{x},i}\in \mathcal{D}_\txt{x} $ denote the elements of the physical array and $ \mathbf{d}_{\Sigma,n}\in\mathcal{D}_\Sigma $ the elements of the sum co-array.
	\end{definition}
	{Any feasible PSF may be expressed as $\boldsymbol{\psi}\!=\!\mathbf{A}_\Sigma^\txt{T}\mathbf{w}_\Sigma$,} where 
	\begin{align}
	\mathbf{w}_\Sigma = \boldsymbol{\Upsilon}\,\txt{vec}(\mathbf{W})\label{eq:w_sigma}
	\end{align}
	{is the $ N_\Sigma $-dimensional \emph{sum co-array beamforming weight vector.} Note that if $ V<N_\Sigma $, then \eqref{eq:w_sigma} is only sufficient for satisfying \eqref{eq:psf_vec}. If $ \txt{rank}(\mathbf{A}_\Sigma)=N_\Sigma $, which holds only if $ V\geq N_\Sigma $, then \eqref{eq:w_sigma} is also necessary.} When the physical array is a uniform array with $N$ co-located transceivers, then $N_\Sigma\propto N $. Conversely, a sparse array {may have} $ N_\Sigma\propto N^2 $. {Consequently, these arrays would typically require (at least) $ V\propto N $, respectively, $ V\propto N^2 $ angular samples.}

	\section{Problem formulation}\label{sec:problem}
	{In this section,} we formulate the hybrid beamformer weight optimization problem {as a {spatial} filter design problem. Our fidelity measure of choice is the {$ \ell_2 $ approximation error}
	\begin{align*}
	\varepsilon = \|\boldsymbol{\psi}-\mathbf{A}^\txt{T}\txt{vec}(\mathbf{W})\|_2^2. 
	\end{align*}
	Here $ \boldsymbol{\psi}\in\mathbb{C}^V $ is the desired PSF {(sampled in $ V $ directions)}, which is also assumed to be feasible (in the {row} space of {$\mathbf{A}$}). The realized PSF $ \mathbf{A}^\txt{T}\txt{vec}(\mathbf{W}) $ is determined by {the measurement (or steering) matrix $ \mathbf{A} $ given by \eqref{eq:A}, and} the co-array weight matrix $\mathbf{W}\in\mathbb{C}^{N_\txt{r}\times N_\txt{t}}$, which factorizes as \eqref{eq:W_Bg}.}
	
	The objective is to minimize the number of component images $ Q$, while achieving a desired PSF. This leads to the following non-convex optimization problem\footnote{{We do not constrain the transmit power for simplicity. However, the Rx and Tx weight vectors are normalized post optimization as $\mathbf{w}_{\txt{r},q}\gets \mathbf{w}_{\txt{r},q} \|\mathbf{w}_{\txt{t},q}\|_\infty $ and $\mathbf{w}_{\txt{t},q}\gets \mathbf{w}_{\txt{t},q}\|\mathbf{w}_{\txt{t},q}\|^{-1}_\infty $ to ensure that the transmitters are operated at saturation and SNR is maximized for each of the $ Q $ component images.}}:
	\begin{opteq}
		&\underset{\{\mathbf{F}_{\txt{x},q}\in \mathscr{F}_\txt{x}({B}),\mathbf{c}_{\txt{x},q}\in\mathbb{C}^{M_\txt{x}}\}_{q=1}^{Q\in\mathbb{N}_+}}{\text{minimize}} Q\nonumber \\
		&\text{subject to}\  \Bigg\|\boldsymbol{\psi}_{}\!-\!\mathbf{A}^\txt{T}_{}\text{vec}\Bigg(\sum_{q=1}^{Q} \mathbf{F}_{\txt{r},q}\mathbf{c}_{\txt{r},q}\mathbf{c}_{\txt{t},q}^\txt{T}\mathbf{F}_{\txt{t},q}^\txt{T}\Bigg)\Bigg\|_2^2\!\leq\!\varepsilon_{\max}. \label{p:h}
	\end{opteq}
	In \eqref{p:h}, $ \varepsilon_{\max} \geq 0 $ is an {approximation} error tolerance parameter, and $ \mathscr{F}_\txt{x} $ denotes the analog weight matrix constraint set in \eqref{eq:B_quantized}. The fact that $ Q $ is unknown further complicates problem \eqref{p:h}. If we instead fix $ Q $, we obtain the following slightly simpler optimization problem:
	\begin{opteq}
		\underset{\begin{subarray}{c}
				\{\mathbf{F}_{\txt{x},q}\in \mathscr{F}_\txt{x}({B}),\ \\
				\quad \mathbf{c}_{\txt{x},q}\in\mathbb{C}^{M_\txt{x}}\}_{q=1}^{Q}
		\end{subarray}}{\text{minimize}} & \Bigg\|\boldsymbol{\psi}_{}-\mathbf{A}^\txt{T}_{}\text{vec}\Bigg(\sum_{q=1}^{Q} \mathbf{F}_{\txt{r},q}\mathbf{c}_{\txt{r},q}\mathbf{c}_{\txt{t},q}^\txt{T}\mathbf{F}_{\txt{t},q}^\txt{T}\Bigg)\Bigg\|_2^2.\label{p:h_alt}	
	\end{opteq}
	{Note that $ Q $ determines the number of optimization variables in both optimization problems, which implies that the optimal value of the objective function of \eqref{p:h_alt} is non-increasing in $ Q $. We may therefore} recover the solution to \eqref{p:h} from \eqref{p:h_alt} by {finding} the smallest $ Q $ for which the objective function of \eqref{p:h_alt} does not exceed {the approximation error {tolerance}} $ \varepsilon_{\max} $. {This can easily be accomplished using binary search (bisection) at a small additional cost, given a search interval of feasible values of $ Q $. Consequently, we will henceforth focus on \eqref{p:h_alt} instead of  \eqref{p:h}.} If we know the weight vectors $ \mathbf{w}_{\txt{x},q} \in\mathbb{C}^{N}$, which may be the case when a fully digital solution is available, we may solve the even simpler optimization problem
	\begin{opteq}
		\underset{\mathbf{F}\in \mathscr{F}({B}),\mathbf{c}\in\mathbb{C}^{M}}{\text{minimize}}\ \|\mathbf{w}-\mathbf{F}\mathbf{c}\|_2^2 \label{p:h_d}
	\end{opteq}
	 {independently for the transmitter, receiver}, and each {component image} (we omit the subscripts in \eqref{p:h_d} for simplicity). Problem \eqref{p:h_d} recovers a {hybrid} solution to \eqref{p:h}, if a fully digital solution to \eqref{p:h} is available, and if this solution {can be factorized as in} \eqref{eq:W_Bg} for the same number of component images as in the {fully} digital case. {However, since this is generally not the case, \eqref{p:h_alt} needs to be solved instead.}
	
	\section{Key results in prior work}\label{sec:prior_work}
	In this section, we review two key results {(used in Sections~\ref{sec:numerical} and \ref{sec:analytical})} related to solving optimization problem \eqref{p:h_d} using hybrid and analog beamformers with continuous phase shifters. {We again} omit subscripts $ \txt{x} $ {and $ q $} for simplicity, since the results are independently applicable to both the transmitter and receiver, as well as any component image.
	
	Zhang et al. {\cite{zhang2005variable}} showed that two front ends with continuous phase shifts are sufficient for factorizing any $ \mathbf{w}\in\mathbb{C}^N$ as $ \mathbf{w} =\mathbf{F}\mathbf{c} $, thus optimally solving \eqref{p:h_d} when $ M\geq 2 $. The hybrid beamforming weights can {also} be expressed in closed form, as shown by the following lemma adapted\footnote{This is a reformulation of \cite[Theorem~1 and Appendix~A]{zhang2005variable}, where we give a slightly more general expression for the phase of $ \mathbf{F} $ in Appendix~\ref{proof:thm:M2F2_zhang}.} from \cite{zhang2005variable}.
	\begin{lemma}[Solution to \eqref{p:h_d} using two front ends and continuous phase shifters \cite{zhang2005variable}] \label{thm:M2F2_zhang}
		Let $ M\!=\!2$ and $ {B}\!\to\!\infty$. Given any $ \mathbf{w}\in\mathbb{C}^N $, {an} optimal solution to \eqref{p:h_d} achieving $ \mathbf{w}=\mathbf{Fc}$, where $ \mathbf{c}\in\mathbb{C}^2 $ and $ \mathbf{F}\!\in\!\mathscr{F}(\infty)$ following \eqref{eq:B_quantized}, is given by
		{\begin{align}
		\mathbf{F} &=\exp\Bigg(j\Big(\measuredangle\mathbf{w}\mathbf{1}_2^\txt{T}+\cos^{-1}\Big(\frac{|\mathbf{w}|}{\|\mathbf{w}\|_\infty}\Big)(\mathbf{1}_2-2\mathbf{e}_2)^\txt{T}\Big)\Bigg) \label{eq:B_opt}\\
		\mathbf{c} &= \frac{\|\mathbf{w}\|_\infty}{2}\mathbf{1}_2.\label{eq:g_opt}
		\end{align}
		Here $\mathbf{1}_2$ is a vector of ones, $ \mathbf{e}_2 $ is the standard unit vector, {and} $ \cos^{-1}$, $ |\cdot| $, and {the angle operator} $ \measuredangle $ are applied elementwise.}
	\end{lemma}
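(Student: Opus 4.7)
The plan is to construct an explicit pair $(\mathbf{F}, \mathbf{c})$ achieving $\mathbf{w} = \mathbf{F}\mathbf{c}$ exactly, which makes the objective of \eqref{p:h_d} equal to zero. Since the objective is a sum of squared moduli and hence non-negative, this automatically establishes optimality without any further argument; all the work goes into exhibiting the factorization.

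First I would fix the digital weights by setting $\mathbf{c} = (\|\mathbf{w}\|_\infty/2)\mathbf{1}_2$. Then the row-wise requirement $w_n = F_{n1}c_1 + F_{n2}c_2$ reduces to finding two unit-modulus numbers $F_{n1}, F_{n2}$ with
\begin{align*}
F_{n1} + F_{n2} = \frac{2 w_n}{\|\mathbf{w}\|_\infty}.
\end{align*}
Since the sum of two unit-modulus complex numbers traces out the closed disk of radius $2$, and since $|2 w_n/\|\mathbf{w}\|_\infty| \leq 2$ by definition of the infinity norm, the target always lies within the reachable set. This feasibility step is what motivates choosing both entries of $\mathbf{c}$ equal to $\|\mathbf{w}\|_\infty/2$ rather than anything smaller.

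Next, I would invoke the elementary identity $e^{j(\alpha+\beta)} + e^{j(\alpha-\beta)} = 2\cos\beta \cdot e^{j\alpha}$. Writing each $w_n = |w_n| e^{j \measuredangle w_n}$ and choosing $\beta_n = \cos^{-1}(|w_n|/\|\mathbf{w}\|_\infty) \in [0, \pi/2]$ (well-defined since the argument lies in $[0,1]$), this identity yields $F_{n1} = \exp(j(\measuredangle w_n + \beta_n))$ and $F_{n2} = \exp(j(\measuredangle w_n - \beta_n))$. Stacking these over $n$ and using $\mathbf{1}_2 - 2\mathbf{e}_2 = [1,-1]^\txt{T}$ assembles $\mathbf{F}$ exactly as in \eqref{eq:B_opt}. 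Since every entry of $\mathbf{F}$ has unit modulus by construction, $\mathbf{F} \in \mathscr{F}(\infty)$; reducing the phases modulo $2\pi$ places them in $\varPhi(\infty) = [0, 2\pi)$ as required. A direct substitution then confirms $\mathbf{Fc} = \mathbf{w}$ componentwise.

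The only real subtlety, and what the footnote alludes to as the \emph{slight} generalization over \cite{zhang2005variable}, is non-uniqueness: one may swap the roles of the two columns (replacing $\beta_n$ by $-\beta_n$), or add an arbitrary common phase offset to both columns and absorb the conjugate offset into $\mathbf{c}$. Making this flexibility explicit is what yields the more general phase expression promised in the appendix. The particular choice \eqref{eq:B_opt} is simply one canonical representative among a family of valid factorizations, so the statement correctly describes \eqref{eq:B_opt}--\eqref{eq:g_opt} as \emph{an} optimal solution rather than \emph{the} optimal solution.
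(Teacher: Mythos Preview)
Your proof is correct and, for the specific factorization \eqref{eq:B_opt}--\eqref{eq:g_opt}, arguably more direct than the paper's. The paper, however, takes a slightly different and more general route: rather than fixing $c_1=c_2=\|\mathbf{w}\|_\infty/2$ up front and invoking the identity $e^{j(\alpha+\beta)}+e^{j(\alpha-\beta)}=2\cos\beta\,e^{j\alpha}$, it starts from arbitrary $c_1,c_2\in\mathbb{C}$, applies the law of cosines to obtain closed-form phases $\phi_n,\vartheta_n$ valid for \emph{unequal} magnitudes $|c_1|\neq|c_2|$, and only then specializes to the equal-magnitude choice. This yields the feasibility conditions $|c_1|+|c_2|\geq\max_n|w_n|$ and $|c_1|-|c_2|\leq\min_n|w_n|$, which is precisely the ``slightly more general expression'' alluded to in the footnote and in the remark following the lemma. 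Your discussion of non-uniqueness (column swaps, common phase offsets) is valid but is not the generalization the paper has in mind; the paper's extra mileage comes from allowing distinct moduli for the two digital weights, which your sum-of-phasors identity does not directly accommodate.
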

	\begin{proof}\let\qed\relax 
		See Appendix~\ref{proof:thm:M2F2_zhang}.
	\end{proof}
	{Fig.~\ref{fig:zhang} illustrates Lemma~\ref{thm:M2F2_zhang}, showing how each entry $ w_n \in\mathbb{C}$ of vector $\mathbf{w}\in\mathbb{C}^N $ can be expressed as the sum of two phasors with magnitude $ \|\mathbf{w}\|_\infty/2$ and phases depending on $ w_n $ and $  \|\mathbf{w}\|_\infty $. Note that \eqref{eq:B_opt} and \eqref{eq:g_opt} are not unique, since any $c_1=c_2\geq  \|\mathbf{w}\|_\infty/2 $ yields a feasible factorization $ \mathbf{w}=\mathbf{F}\mathbf{c} $. In general, unequal magnitudes $ |c_1|\neq |c_2| $ are also possible if $ \min_n|w_n|>0 $ holds (see Appendix~\ref{proof:thm:M2F2_zhang}). We may also {easily} extend Lemma~\ref{thm:M2F2_zhang}} to the case $ M>2 $ by appending zeros to $ \mathbf{c} $ and columns with arbitrary phases to $\mathbf{F}$ \cite{zhang2005variable}.
	\begin{figure}[t]
			\centering
			\includegraphics[width=.7\linewidth]{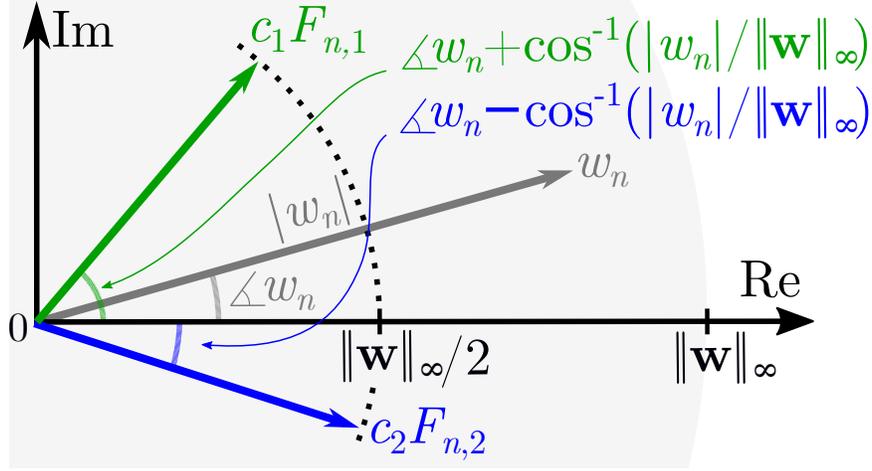}
		\caption{Illustration of Lemma~\ref{thm:M2F2_zhang}. Any point within the shaded disk can be expressed as the sum of two phasors with fixed equal magnitudes and appropriately chosen phases.}\label{fig:zhang}
	\end{figure}
	
	Lemma~\ref{thm:M2F2_zhang} implies that the number of front ends $ M $ required to implement any fully digital weight vector $ \mathbf{w}\in\mathbb{C}^N $ is independent of the number of array elements $ N $, provided continuous phase shifts are used. The number of front ends may actually be reduced to just one, if the digital weight vector $ \mathbf{c} \in\mathbb{C}^M$ can be selected as the scaled unit vector $ \mathbf{c}={c}\mathbf{1}_M$, where $ {c}\in\mathbb{C} $ \cite{zhang2014onachieving,sohrabi2016hybrid,bogale2016onthenumber}{, as in \eqref{eq:g_opt}}. However, a modification to the canonical fully-connected architecture is required. Namely, all $ MN $ phase shifters need to be connected to a single front end, as explained in the following remark.
	\begin{remark}[Analog beamformer with modified architecture \cite{zhang2014onachieving,sohrabi2016hybrid,bogale2016onthenumber}]\label{thm:M1F2}
		Consider a fully connected hybrid beamformer with $ M $ front ends connected to $ N $ phase shifters each. If $ \mathbf{c}={c}\mathbf{1}_M, {c}\in\mathbb{C} $, we may form an equivalent analog beamformer with a single front end connected to all $ NM $ phase shifters.
	\end{remark}
	Perfect factorization of $ \mathbf{w} $ is not generally possible, if $ M=1 $ and the number of phase shifters equals the number of array elements $N $. Nevertheless, \eqref{p:h_d} actually admits a closed-form solution when $ {B}\to \infty ${, as shown by the following lemma.}
	\begin{lemma}[Solution to \eqref{p:h_d} using single front end and continuous phase shifters] \label{thm:M1F1_approx}
		Let $ M\!=\!1$ and $ {B}\!\to\!\infty$. Given any $ \mathbf{w}\!\in\!\mathbb{C}^N $, an optimal solution to \eqref{p:h_d} that minimizes $ \|\mathbf{w}\!-\!{c}\mathbf{f}\|_2^2,$ where ${c}\!\in\!\mathbb{C} $ and $ \mathbf{f}\!\in\!\mathscr{F}(\infty)$ following \eqref{eq:B_quantized}, is given by
		\begin{align}
		\mathbf{f} &=\exp(j\measuredangle \mathbf{w}) \label{eq:b_opt_1}\\
		{c} &= \|\mathbf{w}\|_1/N.\label{eq:g_opt_1}
		\end{align}
		Furthermore, the optimal value of \eqref{p:h_d} is $\|\mathbf{w}\|_2^2-\|\mathbf{w}\|_1^2/N$.
	\end{lemma}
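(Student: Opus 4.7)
My plan is to prove the lemma by expanding the squared error, separating the variables $c$ and $\mathbf{f}$, and optimizing each in turn. The key fact to exploit is that every entry of $\mathbf{f}$ has unit modulus, so $\|\mathbf{f}\|_2^2 = N$, which makes the objective quadratic in $|c|$ once $\mathbf{f}$ is chosen optimally.

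First, I would write
\begin{align*}
\|\mathbf{w}-c\mathbf{f}\|_2^2 \;=\; \|\mathbf{w}\|_2^2 \;-\; 2\,\Re\!\bigg(c\sum_{n=1}^{N} f_n w_n^{*}\bigg) \;+\; N|c|^2,
\end{align*}
and observe that only the middle term depends on $\mathbf{f}$. Parameterizing $c=|c|e^{j\phi}$ and $f_n = e^{j\theta_n}$, the cross term becomes $\sum_n |c|\,|w_n|\cos(\phi+\theta_n-\measuredangle w_n)$, which is maximized pointwise in $n$ by choosing $\theta_n = \measuredangle w_n - \phi$. Since the product $c\mathbf{f}$ is invariant under the joint transformation $c\mapsto c e^{j\phi}$, $\mathbf{f}\mapsto \mathbf{f}e^{-j\phi}$, we may without loss of generality set $\phi=0$, which yields the stated choice $\mathbf{f} = \exp(j\measuredangle\mathbf{w})$ in \eqref{eq:b_opt_1}, and reduces the cross term to $|c|\,\|\mathbf{w}\|_1$.

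Substituting back, the objective becomes the scalar quadratic
\begin{align*}
g(|c|) \;=\; \|\mathbf{w}\|_2^2 - 2|c|\,\|\mathbf{w}\|_1 + N|c|^2,
\end{align*}
whose minimum over $|c|\ge 0$ is attained at $|c| = \|\mathbf{w}\|_1/N$, recovering \eqref{eq:g_opt_1}. Plugging this value in immediately yields the minimum $\|\mathbf{w}\|_2^2 - \|\mathbf{w}\|_1^2/N$, establishing the last claim of the lemma. The steps are routine because the problem decouples cleanly across the two unknowns once we exploit $|f_n|=1$; the only subtle point worth calling out explicitly in the write-up is the phase-ambiguity argument that lets us fix $\phi=0$, since without it one might mistakenly think the solution is unique in $c$. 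The answer is in fact a one-parameter family of optima indexed by $\phi$, of which \eqref{eq:b_opt_1}--\eqref{eq:g_opt_1} is the representative with real nonnegative $c$.
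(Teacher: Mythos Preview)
Your proof is correct and follows essentially the same route as the paper: expand the squared error, optimize the phases $\theta_n$ first (yielding $\theta_n=\measuredangle w_n-\measuredangle c$), then minimize the resulting scalar quadratic in $|c|$, and finally fix the free phase $\measuredangle c=0$. The only cosmetic difference is that the paper obtains $c$ via the least-squares formula $c=(\mathbf{f}^{\txt{H}}\mathbf{f})^{-1}\mathbf{f}^{\txt{H}}\mathbf{w}$ rather than by direct minimization of the quadratic, but the two computations are equivalent.
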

	\begin{proof}\let\qed\relax 
		See Appendix~\ref{proof:thm:M1F1_approx}.
	\end{proof}
	{In general, Lemma~\ref{thm:M1F1_approx} only yields an approximate factorization $ \mathbf{w}\approx c\mathbf{f} $. Equality $\mathbf{w}=c\mathbf{f}$ holds if and only if the entries of $\mathbf{w}$ have equal magnitude, i.e., $ |w_1|=|w_2|=\ldots=|w_N| $.}
	
	\section{Algorithms for finding beamformer weights} \label{sec:numerical}
	In this section, we develop three algorithms for approximately solving optimization problems \eqref{p:h_alt} and \eqref{p:h_d}. In the fully digital beamforming case, we address \eqref{p:h_alt} using alternating minimization (Algorithm~\ref{alg:altmin_d}). In the hybrid and analog beamforming cases, we use a greedy approach to approximately solve both \eqref{p:h_d} (Algorithm~\ref{alg:greedylemma}) and \eqref{p:h_alt} (Algorithm~\ref{alg:greedy}). Table~\ref{tab:numerical} summarizes the proposed algorithms. 
	\begin{table*}[h]
		\resizebox{1\textwidth}{!}{
		\centering
		\begin{threeparttable}[t]
			\centering
			\caption{Summary of beamformer weight optimization algorithms proposed in Section~\ref{sec:numerical}.}\label{tab:numerical}
			\begin{tabular}{c|c|c|c|c}
				Algorithm &Beamforming architecture &Problem &Worst case complexity\tnote{a}, $ \mathcal{O}(\cdot) $& Explanation\\
				\hline
				\ref{alg:altmin_d}&Digital &\eqref{p:d_altmin}&${k_{\max}VQN\max(V,QN)}$&Alternating minimization\\
				\ref{alg:greedylemma}&Hybrid or Analog &\eqref{p:h_d}&$  {M^2N^2} $&Greedy + Lemma~\ref{thm:M2F2_zhang} and \ref{thm:M1F1_approx}\\
				\ref{alg:greedy}&Hybrid or Analog &\eqref{p:h_alt}&${k_{\max}VQ\max(NV,MQV,M^2Q^2)+QN^2M^2}$ &Greedy + {Algorithm~\ref{alg:altmin_d} and} \ref{alg:greedylemma} + Alt. min.\\
				 \hline
			\end{tabular}
			\begin{tablenotes}
				\item[a] Assuming $ N_\txt{x} \propto N; M_\txt{x} \propto M; $ {and $V\geq N $}.
			\end{tablenotes}
		\end{threeparttable}
		}
	\end{table*}

	\subsection{Fully digital beamformer}\label{sec:algorithms_d}
	{The fully digital beamformer serves as a natural {baseline for the hybrid beamformer.} This is because the digital {beamformer} imposes the fewest constraints on the solutions {to the optimization problems in Section~\ref{sec:problem}}. In the following, we develop an alternating minimization algorithm for solving \eqref{p:h_alt} in the fully digital case. We will also utilize this algorithm in the hybrid case developed in Section~\ref{sec:alg_h}.}
	\subsubsection{Alternating minimization}
	Digital beamformer design is substantially simplified by the {absence of phase shifters,} {which reduces} \eqref{p:h_alt} to the {following} biconvex problem
	\begin{opteq}
	\underset{\mathbf{W}_\txt{x}\in\mathbb{C}^{N_\txt{x}\times Q}}{\text{minimize}}\  \|\boldsymbol{\psi}-\mathbf{A}^\txt{T}\text{vec}(\mathbf{W}_\txt{r}\mathbf{W}_\txt{t}^\txt{T})\|_2^2.\label{p:d_altmin}
	\end{opteq}
	{Here} the columns of $ \mathbf{W}_\txt{x}=[\mathbf{w}_{\txt{x},1},\mathbf{w}_{\txt{x},2},\ldots,\mathbf{w}_{\txt{x},Q}] \in\mathbb{C}^{N_\txt{x}\times Q}$ are the unknown weight vectors, each corresponding to a specific component image. Problem \eqref{p:d_altmin} is non-convex due to the product of the unknown matrices $ \mathbf{W}_\txt{r} $ and $ \mathbf{W}_\txt{t} $. However, we may find a local minimum of \eqref{p:d_altmin} in a straightforward fashion by \emph{alternating minimization}. {The \emph{low rank matrix sensing problem} \eqref{p:d_altmin} was actually studied by Jain et. al \cite{jain2013low} in a more general (albeit real-valued) setting. Next, we {describe in} detail a slightly modified version of their ``AltMinSense'' algorithm, adapted to the beamforming application considered in this paper.} {The alternating minimization algorithm, {summarized in Algorithm~\ref{alg:altmin_d}},} starts with an initial guess for $ \mathbf{W}_\txt{t} $ and proceeds by computing the least squares solutions:
	\begin{align}
	\mathbf{W}_\txt{r} &= \txt{mat}_{N_\txt{r}\times Q}((\mathbf{A}^\txt{T}(\mathbf{W}_\txt{t}\otimes \mathbf{I}_{N_\txt{r}}))^\dagger\boldsymbol{\psi}_{}) \label{eq:Wr}\\	
	\mathbf{W}_\txt{t} &= \txt{mat}_{Q\times N_\txt{t}}^\txt{T}((\mathbf{A}^\txt{T}(\mathbf{I}_{N_\txt{t}}\otimes \mathbf{W}_\txt{r}))^\dagger\boldsymbol{\psi}_{}).\label{eq:Wt}
	\end{align}
	Equations \eqref{eq:Wt} and \eqref{eq:Wr} are iteratively solved until a desired error $ \varepsilon_{\max} $ or maximum number of iterations $ k_{\max} $ is achieved. Although alternating minimization is guaranteed to converge to a local minimum, which local minimum is found depends on the initialization. We choose to use the spectral initialization $ \mathbf{W}\!=\!\sum_{v=1}^V\psi_v \mathbf{a}_{\txt{r},v}\mathbf{a}_{\txt{t},v}^\txt{T}\!=\!\sum_{v=1}^V\psi_v\txt{mat}_{N_\txt{r}\times N_\txt{t}}({\mathbf{A}_{:,v}})$ to confine the initialization to the solution subspace \cite{jain2013low}. We then initialize $ \mathbf{W}_\txt{t} $ using the right singular vectors corresponding to the $ Q $ largest singular values of $ \mathbf{W} $. {Alternatively, multiple {different} initializations of $ \mathbf{W}_\txt{t}  $ could be used to increase the chances of finding a good local, or even global, minimum.} Note that when $ Q\!=\!\min(N_\txt{r},N_\txt{t}) $, we may simply obtain $ \mathbf{W}_\txt{t} $ and $ \mathbf{W}_\txt{r} $ from the SVD of the least squares solution $ \mathbf{W}\!=\!\txt{mat}_{N_\txt{r}\times N_\txt{t}}((\mathbf{A}^\txt{T})^\dagger\boldsymbol{\psi}) $.
	\begin{algorithm}[t]
		\caption{Digital beamformer: alternating min. {for} \eqref{p:d_altmin}}
		\label{alg:altmin_d}
		\begin{algorithmic}[1]		
			\Procedure{AltMin}{$\mathbf{A},\boldsymbol{\psi}_{},Q,k_{\max},\varepsilon_{\max}$}
			\State $\mathbf{W} \gets \sum_{v=1}^V\psi_v\txt{mat}_{N_\txt{r}\times N_\txt{t}}({\mathbf{A}_{:,v}})$ \Comment{initialization \cite{jain2013low}}
			\State $\{\mathbf{U},\boldsymbol{\Sigma},\mathbf{V}\}\gets$ \Call{SVD}{$ \mathbf{W} $,$ Q $} \Comment{{$ Q $ princ. {components}}}
			\State $ \{\mathbf{W}_\txt{t},k,\varepsilon\}\gets \{\mathbf{V}^*,0,\infty\} $
			\While{$k < k_{\max} \wedge \varepsilon > \varepsilon_{\max}$}
			\State Update $ \mathbf{W}_\txt{r} $ and $ \mathbf{W}_\txt{t} $ using \eqref{eq:Wr} and \eqref{eq:Wt} \label{line:d_Wx}
			\State $ \varepsilon \gets
			\|\boldsymbol{\psi}_{}-\mathbf{A}^\txt{T}\text{vec}(\mathbf{W}_\txt{r}\mathbf{W}_\txt{t}^\txt{T})\|_2^2$
			\State $ k\gets k+1 $
			\EndWhile
			\State \Return $ \mathbf{W}_\txt{r},\mathbf{W}_\txt{t}$
			\EndProcedure
		\end{algorithmic}
	\end{algorithm}
	
	\subsubsection{Worst-case complexity}\label{sec:d_complexity}
	The most expensive operation in Algorithm~\ref{alg:altmin_d} is {on line~\ref{line:d_Wx}}, {where} {the {worst-case} time complexity of computing} the pseudo-inverse {{is proportional to that of the} SVD. In general, the SVD of a full $ m$ by $ n $ matrix has complexity $\mathcal{O}(m^2n)$, where $ m\geq n $ \cite[p.~493]{golub2013matrix}.} The worst-case complexity of Algorithm~\ref{alg:altmin_d} {is therefore $\mathcal{O}(k_{\max}m^2n)$, where $ m=\max(V,QN_{\max}), n= \min(V,QN_{\max}) $, and $ N_{\max} = \max(N_\txt{r},N_\txt{t}) $.} Assuming that $ N_\txt{x} \propto N$, the {complexity of Algorithm~\ref{alg:altmin_d} simplifies to 
	\begin{align*}
		\mathcal{O}(k_{\max}VQN\max(V,QN)).
	\end{align*}
	As {indicated} in Section~\ref{sec:sum co-array}, typically $ V \propto N $ for a uniform array, and $ V \propto N^2 $ for a sparse array. This {simplifies} the complexity of Algorithm~\ref{alg:altmin_d} to $\mathcal{O}(k_{\max}Q^2N^3)$, or $\mathcal{O}(k_{\max}QN^5)$, respectively. In either case, the complexity is at most on the order of $N^5$ or $N^6 $, since $ Q\leq N $.} This may {become} prohibitively large for an array with hundreds of elements. {Note that the number of iterations $ k \leq k_{\max}$ required by Algorithm~\ref{alg:altmin_d} also depends on the properties of matrix $\mathbf{A}$ and the desired error tolerance $ \varepsilon_{\max} $ \cite{jain2013low}. A more detailed complexity analysis is however out of the scope of this paper.}
	
	\subsection{Hybrid beamformer} \label{sec:alg_h}
	We will show in Section~\ref{sec:analytical} that it is possible to construct a hybrid beamformer that solves \eqref{p:h} using continuous phase shifts and exactly two front ends, if a fully digital solution to \eqref{p:h} is given. In case of discrete phase shifts, it may be sufficient to quantize this hybrid solution, provided that the number of {phase shift} bits is sufficiently large. However, poor results may follow if {too} few bits are used \cite{sohrabi2016hybrid}. Furthermore, this approach does not benefit from increasing $ M_\txt{x} > 2$. 
	
	\subsubsection{Greedy subroutine}
	As a first step towards addressing the issues outlined above, we consider problem \eqref{p:h_d}, and adopt a greedy method {in Algorithm~\ref{alg:greedylemma}} to approximately solve it. Starting with a fully digital solution $\mathbf{w} $, we initialize the residual as $\mathbf{w}'\!=\!\mathbf{w} $, and apply Lemma~\ref{thm:M2F2} to find $ \mathbf{F}\!\in\!\mathscr{F}(\infty)\!\subset\!\mathbb{C}^{N\times 2} $ satisfying $ \mathbf{w}'\!=\!\mathbf{F}\mathbf{1}_2 \|\mathbf{w}'\|_\infty /2$. Using \eqref{eq:proj_Phi}, we quantize the phase of $ \mathbf{F} $, yielding $ \mathbf{F}\!=\!\exp(j{\mathcal{P}_{{B}}}(\measuredangle \mathbf{F}))\!\in\!\mathscr{F}({B})\!\subset\!\mathbb{C}^{N\times 2}$. We then compute the least squares solution of the digital weights $\mathbf{c} $ before updating the residual on line~\ref{line:residual_gl}. The process is repeated a total of $ \lfloor M/2 \rfloor$ times. In case $ M $ is odd, we quantize the analog least squares solution given by Lemma~\ref{thm:M1F1_approx} in the final iteration. The solution found by Algorithm~\ref{alg:greedylemma} becomes exact, i.e., $ \mathbf{w}'\!\to\!0 $, when either (i) ${B}\!\geq\!1 , M\!\to\!\infty $, or (ii) ${B}\!\to\!\infty$, $ M\!\geq\!2$. We note that Algorithm~\ref{alg:greedylemma} is similar to \cite[Algorithm~1]{koochakzadeh2018beam}, although we consider a different problem with quantized phase shifts, as well as utilize Lemma~\ref{thm:M2F2_zhang}. {Other algorithms addressing \eqref{p:h_d}, or modifications thereof, also exist \cite{chen2017hybrid}. However, solving \eqref{p:h_d} alone is insufficient for solving the problem of interest \eqref{p:h_alt}, as we will show next. A detailed comparison of alternatives to Algorithm~\ref{alg:greedylemma} is therefore beyond the scope of this paper and left for future work.}
	\begin{algorithm}[h]
		\caption{Hybrid beamformer: greedy {method} {for} \eqref{p:h_d}}
		\label{alg:greedylemma}
		\begin{algorithmic}[1]		
			\Procedure{{GreedySub}}{$\mathbf{w},M,{B}$}
			\State $ \mathbf{w}'\gets \mathbf{w} $
			\For{$m\in\{1,2,\ldots,\lfloor M/2 \rfloor\}$} \Comment{Lemma~\ref{thm:M2F2}}
			\State {$\boldsymbol{\Phi}\gets \measuredangle\mathbf{w}'\mathbf{1}_2^\txt{T}+\cos^{-1}(|\mathbf{w}'|/\|\mathbf{w}'\|_\infty)(\mathbf{1}_2-2\mathbf{e}_2)^\txt{T} $}
			\State $ \mathbf{F}_{:,(2m-1):2m}\gets \exp(j  {\mathcal{P}_{{B}}}(\boldsymbol{\Phi})) $   \Comment{{quantization}}
			\State $ \mathbf{c}_{1:2m}\gets  \mathbf{F}_{:,1:2m}^\dagger\mathbf{w}$ \Comment{{LS solution}} \label{line:pinv_h}
			\State $\mathbf{w}'\gets \mathbf{w}-\mathbf{F}_{:,1:2m}\mathbf{c}_{1:2m}$\Comment{update residual} \label{line:residual_gl}
			\EndFor
			\If{$M \bmod 2 = 1$} \Comment{Lemma~\ref{thm:M1F1_approx}}
			\State $\mathbf{F}_{:,M}\gets \exp(j{\mathcal{P}_{{B}}}(\measuredangle \mathbf{w}')) $ \label{line:update_odd}
			\State $\mathbf{c}\gets  \mathbf{F}^\dagger\mathbf{w}$
			\EndIf
			\State \Return $\mathbf{F},\mathbf{c}$
			\EndProcedure	
		\end{algorithmic}
	\end{algorithm}
	
	\subsubsection{Greedy main routine} \label{sec:greedy}
	{If} the number of component images of the hybrid beamformer {is} $ Q {\leq\txt{rank}(\mathbf{W})} $, we can directly apply Algorithm~\ref{alg:greedylemma} to each fully digital weight {vector} $ \{\mathbf{w}_{\txt{x},q}\}_{q=1}^{{\txt{rank}(\mathbf{W})}} $ and find an approximate solution to \eqref{p:h_alt}. However, this solution does not improve if $ Q $ is increased beyond $ {\txt{rank}(\mathbf{W})} $. Consequently, we propose Algorithm~\ref{alg:greedy}, which uses Algorithm~\ref{alg:altmin_d} and \ref{alg:greedylemma} to iteratively compute {and quantize} {the rank-1 matrix $\mathbf{W}$ that} {minimizes the $ \ell_2 $-norm of the residual PSF vector $ \boldsymbol{\psi}'\in\mathbb{C}^{V} $}. The residual is initialized as {the desired PSF $ \boldsymbol{\psi}'=\boldsymbol{\psi}$, and updated at the end of each iteration by subtracting the $ q $-component realized PSF from $ \boldsymbol{\psi} $. } The hybrid weights {of the $ q $th iteration} $ \mathbf{F}_{\txt{x},q}\!\in\!\mathbb{C}^{N_\txt{x} \times M_\txt{x}}$ and $\mathbf{c}_{\txt{x},q}\!\in\!\mathbb{C}^{M_\txt{x} }$ {are {found} by} applying Algorithm~\ref{alg:greedylemma} to the {fully digital single component solution} obtained by {calling Algorithm~\ref{alg:altmin_d} with $ Q=1 $. Finally,} the digital weights $ \{c_{\txt{x},l}\}_{l=1}^q $ are recomputed by solving \eqref{p:h_alt} with $ \mathbf{F}_\txt{x} $ fixed, {i.e., the following problem:}
	\begin{opteq}	
		\underset{\mathbf{C}_{\txt{x}}\in\mathbb{C}^{M_\txt{x}\times {q}}}{\text{minimize}}\ \|\boldsymbol{\psi}_{}\!-\!\mathbf{A}^\txt{T}\text{vec}(\mathbf{F}_\txt{r}({\mathbf{I}}\!\odot\!\mathbf{C}_\txt{r})({\mathbf{I}}\!\odot\!\mathbf{C}_\txt{t})^\txt{T}\mathbf{F}_\txt{t}^\txt{T})\|^2_2. \label{p:d_altmin_G}	
	\end{opteq}
	Problem \eqref{p:d_altmin_G} is biconvex, since $ \txt{vec}(\mathbf{W}) $ can be rewritten as
	\begin{align*}
	\txt{vec}(\mathbf{W}) 
	&= ((\mathbf{F}_{\txt{t}}({\mathbf{I}}\odot \mathbf{C}_{\txt{t}}) \otimes \mathbf{1}_{M_\txt{r}}^\txt{T}) \odot \mathbf{F}_{\txt{r}}) \txt{vec}(\mathbf{C}_{\txt{r}})\\
	&=  (\mathbf{F}_{\txt{t}} \odot (\mathbf{F}_{\txt{r}}({\mathbf{I}}\odot \mathbf{C}_{\txt{r}}) \otimes \mathbf{1}_{M_\txt{t}}^\txt{T}) ) \txt{vec}(\mathbf{C}_{\txt{t}}).
	\end{align*}
    This follows from {\eqref{eq:W_Bg} and the easily verifiable} identities (i) $\txt{vec}(\mathbf{Xx}\mathbf{y}^\txt{T}\mathbf{Y}^\txt{T})\!=\!(\mathbf{Yy1}^\txt{T}\!\odot\!\mathbf{X})\mathbf{x}\!=\!(\mathbf{Y}\!\odot\!\mathbf{Xx1}^\txt{T})\mathbf{y}$, and (ii) $\sum_{{l}=1}^{{q}}\mathbf{Z}_{{l}}\mathbf{z}_{{l}}\!=\![\mathbf{Z}_1,\ldots,\mathbf{Z}_{{q}}][\mathbf{z}_1^\txt{T} ,\ldots,\mathbf{z}_{{l}}^\txt{T}]^\txt{T}$ after {simplifications}. A local minimum of \eqref{p:d_altmin_G} {is} found using alternating minimization, which iterates between the following least squares solutions:
	\begin{align}
	\mathbf{C}_\txt{r} &= \txt{mat}_{M_\txt{r}\times {q}}((\mathbf{A}^\txt{T}((\mathbf{F}_{\txt{t}}({\mathbf{I}} \odot \mathbf{C}_{\txt{t}}) \otimes \mathbf{1}_{M_\txt{r}}^\txt{T}) \odot \mathbf{F}_{\txt{r}}))^\dagger\boldsymbol{\psi}) \label{eq:Cr}\\
	\mathbf{C}_\txt{t} &= \txt{mat}_{M_\txt{t}\times {q}}((\mathbf{A}^\txt{T}(\mathbf{F}_{\txt{t}} \odot (\mathbf{F}_{\txt{r}}({\mathbf{I}} \odot \mathbf{C}_{\txt{r}}) \otimes \mathbf{1}_{M_\txt{t}}^\txt{T}) )^\dagger\boldsymbol{\psi}).\label{eq:Ct}
	\end{align}
	{Algorithm~\ref{alg:greedy} does not necessarily converge to the fully digital solution found by Algorithm~\ref{alg:altmin_d}, even as $ B\to\infty $. This is because Algorithm~\ref{alg:altmin_d} is called sequentially using a single component image (instead of $ Q $) on line~\ref{line:greedy_altmin} of Algorithm~\ref{alg:greedy}. Consequently, when $ B $ is large, a better solution may be obtained by applying Algorithm~\ref{alg:greedylemma} to the fully digital solution found by Algorithm~\ref{alg:altmin_d} (provided $ M\geq2 $). However, {we will show} {in Section~\ref{sec:num_val}} {that} Algorithm~\ref{alg:greedy} produces better results in the interesting regime of small {to} moderate $ B $ {(say, $ B\leq 6 $)}.}
	\begin{algorithm}[t]
		\caption{Hybrid beamformer: greedy method {for} \eqref{p:h_alt}}
		\label{alg:greedy}
		\begin{algorithmic}[1]		
			\Procedure{Greedy}{$\mathbf{A},\boldsymbol{\psi},M_\txt{r},M_\txt{t},{B},Q,k_{\max},\varepsilon_{\max}$}
			\State {$ \boldsymbol{\psi}'\gets\boldsymbol{\psi} $}
			\For{$q\in\{1,2,\ldots,Q\}$}	
			\State {$\{\mathbf{w}_\txt{r},\mathbf{w}_\txt{t}\} \gets $ \Call{AltMin}{$\mathbf{A},\boldsymbol{\psi}',1,k_{\max},\varepsilon_{\max}$}}\label{line:greedy_altmin}
			\For{{$\txt{x}\in\{\txt{t},\txt{r}\}$}}
			\State $ \{\mathbf{F}_{\txt{x},q},\mathbf{c}_{\txt{x},q}\} \gets$ \Call{{GreedySub}}{${\mathbf{w}_\txt{x}},M_\txt{x},{B} $}\label{line:gl_x}
			\State {$ \mathbf{F}_\txt{x} \gets [\mathbf{F}_{\txt{x},1},\ldots,\mathbf{F}_{\txt{x},q}]$}
			\EndFor
			\State $ \{\mathbf{C}_\txt{t},k,\varepsilon\}\gets \{{[\mathbf{c}_{\txt{t},1},\ldots,\mathbf{c}_{\txt{t},q}]},0,\infty\} $
			\While{$k < k_{\max} \wedge \varepsilon > \varepsilon_{\max}$}\Comment{{alternating min.}}
			\State Update $ \mathbf{C}_\txt{r} $ and $\mathbf{C}_\txt{t}$ using \eqref{eq:Cr} and \eqref{eq:Ct} \label{line:Cx_altmin}
			\State {$ \boldsymbol{\psi}' \gets \boldsymbol{\psi}-\mathbf{A}^\txt{T}\text{vec}(\mathbf{F}_\txt{r}({\mathbf{I}}\odot \mathbf{C}_\txt{r})({\mathbf{I}}\odot \mathbf{C}_\txt{t})^\txt{T}\mathbf{F}_\txt{t}^\txt{T})$}\label{line:residual_psi}
			\State $ {\{\varepsilon,k\}\gets \{\|\boldsymbol{\psi}'\|_2^2,k+1\}} $
			\EndWhile
			\State \Return $ \mathbf{F}_{\txt{r}},\mathbf{F}_{\txt{t}},\mathbf{C}_{\txt{r}},\mathbf{C}_{\txt{t}} $
			\EndFor
			\EndProcedure
		\end{algorithmic}
	\end{algorithm}
	
	\subsubsection{Worst-case complexity}\label{sec:h_complexity}
	{The {worst-case} complexity of Algorithm~\ref{alg:greedylemma} is dominated by the pseudo-inverse on line~\ref{line:pinv_h}. Since the dimension of matrix $\mathbf{F}_{:,1:2m}$ changes at each iteration, the total complexity of Algorithm~\ref{alg:greedylemma} is proportional to}
	\begin{align*}
	{\sum_{m=1}^{\lfloor M/2\rfloor }N^2m = \mathcal{O}(N^2M^2).}
	\end{align*}
	{The most expensive operations in Algorithm~\ref{alg:greedy} are (i) calling Algorithm~\ref{alg:altmin_d} on line~\ref{line:greedy_altmin}; (ii) calling Algorithm~\ref{alg:greedylemma} on line~\ref{line:gl_x}; (iii) evaluating the pseudo-inverses on line~\ref{line:Cx_altmin}; and (iv) computing the matrix-vector product on line~\ref{line:residual_psi}. Assuming again that $ N_\txt{x}\propto N $ and $ M_\txt{x}\propto M $, the complexity of (i) reduces to $ \mathcal{O}(k_{\max}VNQ\max(V,N))$, or $\mathcal{O}(k_{\max}V^2NQ)$ if $ V\geq N $. Furthermore, (ii) has complexity $\mathcal{O}(QN^2M^2)$. {Denoting $ m= \max(V,qM)$, $ n= \min(V,qM)$, and $\varrho=\min(V/M,Q)$,} the complexity of (iii) is proportional to
	 \begin{align*}
	 {k_{\max}\sum_{q=1}^Q m^2n}&=k_{\max}VM\Big(V\sum_{q=1}^{\lfloor \varrho \rfloor} q+M\sum_{q=\lfloor \varrho \rfloor+1}^Q q^2\Big)\\
	 &= \mathcal{O}(k_{\max}VM\max(V\varrho^2,M(Q^3-\varrho^3))).
	\end{align*}
	 {This is upper bounded by $ \mathcal{O}(k_{\max}VMQ^2\max(V,MQ)) $, since $ \varrho\!\in\![0,Q]$}. Finally, the cost of (iv) is $\mathcal{O}(k_{\max}VN^2Q)$. {Consequently,} the {worst-case} complexity of Algorithm~\ref{alg:greedy} {is}
	 \begin{align*}
	 \mathcal{O}(k_{\max}VQ\max(VN,VMQ,M^2Q^2)+N^2M^2Q).
	 \end{align*}
 	{As} $ V\propto N $ for a uniform array {and} $ V\propto N^2 $ for a sparse array, the complexity of Algorithm~\ref{alg:greedy} is at most on the order of $ N^3 $ (uniform array) or $ N^5 $ (sparse array), and $ Q^3 $ (both).} 
 	
	\subsection{Fully analog beamformer}
	Algorithms~\ref{alg:greedylemma} and \ref{alg:greedy} are directly applicable to {fully} analog beamformer design {by setting $ M_\txt{x}\!=\!1 $}. {We note that problems~\eqref{p:h} and \eqref{p:h_alt} also simplify significantly in the analog case. This was exploited in the companion paper \cite{rajamaki2019analog} to develop a more efficient algorithm when $ B\to\infty $. Investigations into improved fully analog beamforming algorithms for finite $ B $ are however beyond the scope of this paper.}
	
	\subsection{Remarks on the computational complexity} \label{sec:remarks_on_comp_complexity}
	{We conclude this section with two remarks regarding the computational complexity of solving the optimization problems formulated in Section~\ref{sec:problem}.} 
	
	{Firstly,} {we may have to solve \eqref{p:h} for several steering directions $\mathbf{u}$ {in practice}. {For example, if the number of phase shift bits $ B $ is small, the number of desired steering directions may be larger than that {accommodated by the $ 2^B $ quantized phase shifts}. Even if $ B $ is large, the} PSF {may} not {be} translation invariant, {as}, for {instance}, when sensors {have directive gain patterns} or scatterers are located in the near field of the array.}
	
	{Secondly, {sparsity can be leveraged to speed up computations. For example, \eqref{eq:Wr}--\eqref{eq:Ct} contain sparse matrices resulting from the Kronecker and Khatri-Rao products with the identity matrix. This can be exploited when computing the pseudo-inverse using, e.g., \emph{power} or \emph{orthogonal iterations} \cite[pp.~366-368]{golub2013matrix}, especially if there are only a few dominant singular values. Furthermore,} solving \eqref{p:h} for a desired co-array weight vector in \eqref{eq:w_sigma}, instead of the PSF in \eqref{eq:psf_vec} {allows us to} replace $\boldsymbol{\psi}$ by $ \mathbf{w}_\Sigma $, $ \mathbf{A} $ by $ \boldsymbol{\Upsilon}^\txt{T} $, and $ V $ by $ N_\Sigma $ in Algorithms~\ref{alg:altmin_d} and \ref{alg:greedy}. The computational advantage follows from the fact that $ \boldsymbol{\Upsilon} $ is sparse with only $ N_\txt{t}N_\txt{r} $ non-zero entries, which is a factor of $ V $ less than the $ VN_\txt{t}N_\txt{r} $ entries of full matrix $ \mathbf{A} $. The solutions obtained using $\boldsymbol{\Upsilon}$ {and} $\mathbf{A}$ are equivalent when the {sum co-array} steering matrix $\mathbf{A}_\Sigma$ in \eqref{eq:A_S} is a (scaled) unitary matrix. This is the case, e.g., when the columns of $ \mathbf{A} $ are sampled uniformly in $ V=N_\Sigma $ unique directions, and the array elements are ideal and omnidirectional.}
	
	\section{Bounds on the number of component images $ Q $}\label{sec:analytical}
	In this section, we derive closed-form solutions for $ \mathbf{F}_{\txt{x},q},\mathbf{c}_{\txt{x},q} $ assuming {zero approximation error} ($ \varepsilon_{\max} = 0 $). {These solutions then yield} upper bounds on $ Q $ in problem~\eqref{p:h} for $ M_\txt{x}\in\{1,2,N_\txt{x}\}$ and $ {B} \in \{1,\infty\}$. Each {beamformer} makes a different trade-off between the number of front ends $ M_\txt{x}$, phase shift bits $ {B} $, and component images $ Q $, as summarized in Table~\ref{tab:array_configs}. We find that for any number of phase shift bits $ {B} $, the number of component images required by a hybrid beamformer satisfies $ {\txt{rank}(\mathbf{W})}\leq Q\leq N_\txt{r}N_\txt{t} $, where $ {\txt{rank}(\mathbf{W})} \leq \min(N_\txt{r},N_\txt{t})$ is the number of component images required by the fully digital beamformer. Similarly, for the fully analog beamformer we have $ {\txt{rank}(\mathbf{W})}\leq Q\leq 4N_\txt{r}N_\txt{t}$. 
	\begin{table*}
	\resizebox{1\textwidth}{!}{
		\centering
		\begin{threeparttable}[t]
		\centering
		\caption{Properties of closed-form beamformers {proposed} in Section~\ref{sec:analytical}.}\label{tab:array_configs}
		\begin{tabular}{c|c|c|c|c|c}
			Theorem & \makecell{Beamforming architecture} &\makecell{\# of front ends, $ M_\txt{x}$} &\makecell{\# of phase shifters}&\makecell{\# of {phase shift} bits, $ {B} $}&\makecell{\# of component images\tnote{b}, $ Q $}\\
			\hline
			n/a&Digital &{$ N_\txt{x}$}&{$ 0 $}&n/a&{${\txt{rank}(\mathbf{W})}$}\\
			\hdashline
			\ref{thm:M2F2}&Hybrid&{$ 2$}&{$ 2N_\txt{x}$}&{$ \infty $}&{${\txt{rank}(\mathbf{W})} $}\\
			\ref{thm:M2F2J1}&Hybrid&{$ 2 $}&{$ 2N_\txt{x}$}&{$ 1 $}&{$N_\txt{r}N_\txt{t} $}\\
			\ref{thm:M1F1Jinf}&Analog&{$ 1$}&{$ N_\txt{x} $}&{$ \infty $}&{$4\, {\txt{rank}(\mathbf{W})}$}\\
			\ref{thm:M1F1J1}&Analog&{$ 1 $}&{$ N_\txt{x} $}&{$ 1 $}&{$4N_\txt{r}N_\txt{t}$}\\
			\hdashline
			\ref{thm:M2F2} + Remark~\ref{thm:M1F2}&Analog\tnote{c}&{$ 1 $}&{$ 2N_\txt{x} $}&{$ \infty $}&{${\txt{rank}(\mathbf{W})}$}\\
			\ref{thm:M2F2J1} + Remark~\ref{thm:M1F2}&Analog\tnote{c}&{$ 1 $}&{$ 2N_\txt{x} $}&{$ 1 $}&{$ N_\txt{r}N_\txt{t}$}\\
			\hline
		\end{tabular}
			\begin{tablenotes}
				\item[b] {Note that} ${\txt{rank}(\mathbf{W})}\leq  \max(N_\txt{r},N_\txt{t})$.
				\item[c] Requires modification to the architecture in Fig.~\ref{fig:architecture} (see Remark~\ref{thm:M1F2} in Section~\ref{sec:prior_work}).
			\end{tablenotes}
		\end{threeparttable}
		}
	\end{table*}
	
	\subsection{Fully digital beamformer}\label{sec:bounds_d}
	In the case of fully digital beamforming, the SVD guarantees that any co-array weight matrix $ \mathbf{W}\in\mathbb{C}^{N_\txt{r}\times N_\txt{t}} $ in \eqref{eq:W_w} can be factorized using {$ Q $ component images, where
	\begin{align}
	Q=\text{rank}(\mathbf{W})\leq \min(N_\txt{t},N_\txt{r}). \label{eq:Q_d_0}
	\end{align}
	We may also obtain a lower bound on $ Q $ by considering the number of degrees of freedom available for realizing a desired PSF. Specifically, assuming \eqref{eq:A_S} holds, a simple comparison of the number of equations and unknowns in \eqref{eq:w_sigma} yields the following necessary condition on $ Q $.
	\begin{proposition}[Lower bound on $ Q $]\label{thm:Q_d}
		{Let $\mathbf{W}\in\mathbb{C}^{N_\txt{r}\times N_\txt{t}}$ be a rank-$ Q $ matrix. Then \eqref{eq:w_sigma} holds for any $\mathbf{w}_\Sigma\in \mathbb{C}^{N_\Sigma}$ only if}
		\begin{align}
		{Q \geq \frac{N_\txt{t}+N_\txt{r}-\sqrt{(N_\txt{t}+N_\txt{r})^2-4N_\Sigma}}{2}.} \label{eq:Q_d}
		\end{align}
	\end{proposition}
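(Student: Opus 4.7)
The plan is to establish the bound via a dimension-counting argument, comparing the number of complex degrees of freedom available in a rank-$Q$ matrix $\mathbf{W}$ against the target dimension $N_\Sigma$ imposed by \eqref{eq:w_sigma}.

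First, I would parameterize a rank-$Q$ matrix $\mathbf{W}\in\mathbb{C}^{N_\txt{r}\times N_\txt{t}}$ via a full-rank factorization $\mathbf{W}=\mathbf{W}_\txt{r}\mathbf{W}_\txt{t}^\txt{T}$ with $\mathbf{W}_\txt{x}\in\mathbb{C}^{N_\txt{x}\times Q}$, as is done in \eqref{eq:W_w}. This factorization has $Q(N_\txt{r}+N_\txt{t})$ free complex parameters, but it is not unique: for any invertible $\mathbf{G}\in\mathbb{C}^{Q\times Q}$, the factorization $\mathbf{W}=(\mathbf{W}_\txt{r}\mathbf{G})(\mathbf{G}^{-1}\mathbf{W}_\txt{t}^\txt{T})$ produces the same $\mathbf{W}$. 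Quotienting out this $Q^2$-dimensional gauge freedom shows that the variety of rank-$Q$ complex matrices of size $N_\txt{r}\times N_\txt{t}$ has complex dimension
\begin{align*}
Q(N_\txt{r}+N_\txt{t})-Q^2 = Q(N_\txt{r}+N_\txt{t}-Q).
\end{align*}

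Next, since $\boldsymbol{\Upsilon}\,\txt{vec}(\cdot)$ is a linear map into $\mathbb{C}^{N_\Sigma}$, the image of the rank-$Q$ variety under this map can have complex dimension at most $Q(N_\txt{r}+N_\txt{t}-Q)$. The requirement that \eqref{eq:w_sigma} hold for \emph{any} $\mathbf{w}_\Sigma\in\mathbb{C}^{N_\Sigma}$ forces this image to equal all of $\mathbb{C}^{N_\Sigma}$, which demands
\begin{align*}
Q(N_\txt{r}+N_\txt{t}-Q)\geq N_\Sigma,
\end{align*}
or equivalently, $Q^2 - (N_\txt{r}+N_\txt{t})Q + N_\Sigma \leq 0$. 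Applying the quadratic formula to this inequality in $Q$ and selecting the smaller root (since $Q$ is a positive integer that should be minimized) yields \eqref{eq:Q_d}.

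The main subtlety is justifying that a smooth (polynomial) map cannot increase the dimension of its domain, so the image dimension is bounded by the parameter count $Q(N_\txt{r}+N_\txt{t}-Q)$. This is a standard fact about algebraic varieties or smooth manifolds, but it should be stated cleanly to make the counting argument rigorous. A secondary, minor issue is that the bound $(N_\txt{r}+N_\txt{t})^2\geq 4N_\Sigma$ is needed for the square root to be real; this however is automatic whenever $N_\Sigma$ is realizable, since we always have $N_\Sigma\leq N_\txt{t}N_\txt{r}\leq (N_\txt{r}+N_\txt{t})^2/4$ by the AM-GM inequality.
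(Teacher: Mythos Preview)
Your proposal is correct and follows essentially the same dimension-counting argument as the paper: both compute the number of free complex parameters in a rank-$Q$ matrix as $Q(N_\txt{r}+N_\txt{t}-Q)$ and compare it to $N_\Sigma$, then solve the resulting quadratic inequality. You are more explicit than the paper about where the parameter count comes from (the gauge freedom under $\mathbf{W}_\txt{r}\mapsto\mathbf{W}_\txt{r}\mathbf{G}$), about the fact that a polynomial map cannot raise dimension, and about why the discriminant is nonnegative; the paper simply asserts the free-variable count and proceeds.
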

	\begin{proof}
		In \eqref{eq:w_sigma}, the number of equations is $ N_\Sigma\!\leq\!N_\txt{t}N_\txt{r} $, and the number of free variables is $ Q(N_\txt{t}+N_\txt{r}-Q) $, since $\mathbf{W}$ is a rank-$ Q $ matrix. A necessary condition for \eqref{eq:w_sigma} to hold for any $\mathbf{w}_\Sigma\!\in\!\mathbb{C}^{N_\Sigma}$ is therefore that $Q(N_\txt{t}+N_\txt{r}-Q)\!\geq\!N_\Sigma$. This quadratic inequality directly yields the bound in \eqref{eq:Q_d}.
	\end{proof}
	 Eq.~\eqref{eq:Q_d} is a necessary lower bound {on $ Q $} only when \eqref{eq:w_sigma} is required to hold for any co-array beamforming {weight} vector $\mathbf{w}_\Sigma\in\mathbb{C}^{N_\Sigma}$. For a fixed $\mathbf{w}_\Sigma$, a rank-$ Q $ matrix $\mathbf{W}$ may exist that satisfies \eqref{eq:w_sigma} but not \eqref{eq:Q_d}.}
	
	{It is instructive to evaluate \eqref{eq:Q_d} for some {typical} array configurations. Firstly, if the array is non-redundant, every Tx-Rx pair uniquely maps to one sum co-array element, which implies that $ N_\Sigma = N_\txt{t}N_\txt{r} $. Consequently, \eqref{eq:Q_d} reduces $ Q\geq \min(N_\txt{t},N_\txt{r}) $, which together with \eqref{eq:Q_d_0} yields
		\begin{align*}
			Q=\min(N_\txt{t},N_\txt{r}).
		\end{align*}
	Secondly, if the transceivers are co-located, we have {$ N_\txt{x}=N $}, and \eqref{eq:Q_d} simplifies to
	\begin{align}
	Q\geq N-\sqrt{N^2-N_\Sigma}. \label{eq:Q_d_colocated}
	\end{align}
	For example, the {\emph{uniform linear array}} has $ N_\Sigma = 2N-1 $, which yields $ Q\geq 1 $. Consequently, a single component image may suffice to achieve any PSF supported on the sum co-array of {this array}. Similar results can be shown to hold for higher dimensional uniform arrays, such as the \emph{uniform rectangular array}. In contrast, \eqref{eq:Q_d_colocated} scales linearly with $ N $ for sparse arrays, {since} $ N_\Sigma \propto \zeta N^2 $, with $0<\zeta < 1$. {This} leads to the inequality $ Q\gtrapprox N(1-\sqrt{1-\zeta})$ that holds approximately for large $ N $. In practice, the linear dependence between $ N $ and $ Q $ is weak. For instance, the square \emph{boundary array} \cite{hoctor1990theunifying} with $ \zeta = 1/4$ yields the bound $ Q\gtrapprox N(2-\sqrt{3})/2> 0.13 N$.} {The {greatest} lower bound in \eqref{eq:Q_d_colocated} is given by the array configuration maximizing $N_\Sigma$ given $ N $. This is the \emph{minimum-redundancy array} \cite{moffet1968minimumredundancy,hoctor1996arrayredundancy,kohonen2014meet,kohonen2018planaradditive}, if the sum co-array is required to be uniform.}

	\subsection{Hybrid beamformer}
	It is not evident for which values of {$ Q,M $ and $ B $} factorization \eqref{eq:W_Bg} is feasible, given a general co-array weight matrix $ \mathbf{W}\in\mathbb{C}^{N_\txt{r}\times N_\txt{t}} $. Next, we show that $ M_\txt{x}\!=\!2 $ Tx/Rx front ends are sufficient for feasibility, irrespective of the number of phase shifter bits $ {B} $. 
	\subsubsection{Continuous phase shifters} \label{sec:bounds_h_cont}
	Lemma~\ref{thm:M2F2_zhang} implies that \eqref{eq:W_Bg} provides a feasible factorization when $ Q=\text{rank}(\mathbf{W}) $, provided $ M_\txt{x} = 2 $ and $ {B}\to\infty $. In this case, the hybrid beamforming weights are given by the following theorem:
	\begin{thm}[{Hybrid beamformer, continuous phase shifters, two Tx/Rx front ends}] \label{thm:M2F2}
		Let $ M_\txt{x}\!=\!2$ and ${B}\!\to\!\infty$. Any $ \mathbf{W}\!=\!\sum_{q=1}^{Q}\mathbf{w}_{\txt{r},q}\mathbf{w}_{\txt{t},q}^\txt{T}\in\mathbb{C}^{N_\txt{r}\times N_\txt{t}} $ may be factorized as $ \mathbf{W}\!=\!\sum_{q=1}^{Q} \mathbf{F}_{\txt{r},q}\mathbf{c}_{\txt{r},q}\mathbf{c}_{\txt{t},q}^\txt{T}\mathbf{F}_{\txt{t},q}^\txt{T} $, with $ \mathbf{c}_{\txt{x},q}\!\in\!\mathbb{C}^2 $; and $ \mathbf{F}_{\txt{x},q}\!\in\!\mathscr{F}_\txt{x}(\infty)$ following \eqref{eq:B_quantized}. For example, a valid factorization is
		\begin{align}
		{\mathbf{F}_{\txt{x},q}} \!&{=\!\exp\Bigg(j\Big(\measuredangle\mathbf{w}_{\txt{x},q}\mathbf{1}_2^\txt{T}\!+\!\cos^{-1}\Big(\frac{|\mathbf{w}_{\txt{x},q}|}{\|\mathbf{w}_{\txt{x},q}\|_\infty}\Big)(\mathbf{1}_2\!-\!2\mathbf{e}_2)^\txt{T}\Big)\Bigg)} \label{eq:Bx_opt_inf_q}\\
		\mathbf{c}_{\txt{x},q} &=\frac{ \|\mathbf{w}_{\txt{x},q}\|_\infty}{2}\mathbf{1}_2. \label{eq:gx_opt_inf_q}
		\end{align}
	{Here $\mathbf{1}_2$ is a vector of ones, $ \mathbf{e}_2 $ is the standard unit vector, {and} $ \measuredangle $, $ \cos^{-1}, $ and $ |\cdot| $ are applied elementwise.}
	\end{thm}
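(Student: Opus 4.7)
The plan is to reduce the statement to a term-by-term, side-by-side application of Lemma~\ref{thm:M2F2_zhang}. The outer summation over $q$ is linear, so it suffices to prove that each rank-1 summand $\mathbf{w}_{\txt{r},q}\mathbf{w}_{\txt{t},q}^\txt{T}$ admits the claimed factorization of the form $\mathbf{F}_{\txt{r},q}\mathbf{c}_{\txt{r},q}\mathbf{c}_{\txt{t},q}^\txt{T}\mathbf{F}_{\txt{t},q}^\txt{T}$; summing the resulting identities over $q=1,\ldots,Q$ then yields the full theorem. Because the transmit and receive factorizations appear in separable outer-product form, the argument decouples into two instances of the single-vector problem already solved in Section~\ref{sec:prior_work}.

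First I would fix an arbitrary index $q$ and invoke Lemma~\ref{thm:M2F2_zhang} twice, once on the Tx vector $\mathbf{w}_{\txt{t},q}\in\mathbb{C}^{N_\txt{t}}$ and once on the Rx vector $\mathbf{w}_{\txt{r},q}\in\mathbb{C}^{N_\txt{r}}$. Each invocation returns an analog matrix $\mathbf{F}_{\txt{x},q}\in\mathscr{F}_\txt{x}(\infty)$ with two columns and a digital vector $\mathbf{c}_{\txt{x},q}\in\mathbb{C}^2$ satisfying $\mathbf{w}_{\txt{x},q}=\mathbf{F}_{\txt{x},q}\mathbf{c}_{\txt{x},q}$, with explicit entries
\begin{align*}
\mathbf{F}_{\txt{x},q} &= \exp\Bigl(j\bigl(\measuredangle\mathbf{w}_{\txt{x},q}\mathbf{1}_2^\txt{T}+\cos^{-1}(|\mathbf{w}_{\txt{x},q}|/\|\mathbf{w}_{\txt{x},q}\|_\infty)(\mathbf{1}_2-2\mathbf{e}_2)^\txt{T}\bigr)\Bigr),\\
\mathbf{c}_{\txt{x},q} &= (\|\mathbf{w}_{\txt{x},q}\|_\infty/2)\mathbf{1}_2.
\end{align*}
These are exactly the expressions claimed in \eqref{eq:Bx_opt_inf_q}--\eqref{eq:gx_opt_inf_q}, so no additional derivation of the weights is needed.

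Next, I would substitute these factorizations into the rank-1 summand and use the elementary identity $(\mathbf{F}_{\txt{r},q}\mathbf{c}_{\txt{r},q})(\mathbf{F}_{\txt{t},q}\mathbf{c}_{\txt{t},q})^\txt{T}=\mathbf{F}_{\txt{r},q}\mathbf{c}_{\txt{r},q}\mathbf{c}_{\txt{t},q}^\txt{T}\mathbf{F}_{\txt{t},q}^\txt{T}$, obtaining
\begin{align*}
\mathbf{w}_{\txt{r},q}\mathbf{w}_{\txt{t},q}^\txt{T} = \mathbf{F}_{\txt{r},q}\mathbf{c}_{\txt{r},q}\mathbf{c}_{\txt{t},q}^\txt{T}\mathbf{F}_{\txt{t},q}^\txt{T}.
\end{align*}
Summing over $q$ produces $\mathbf{W}=\sum_{q=1}^Q \mathbf{F}_{\txt{r},q}\mathbf{c}_{\txt{r},q}\mathbf{c}_{\txt{t},q}^\txt{T}\mathbf{F}_{\txt{t},q}^\txt{T}$, which is the desired factorization. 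The feasibility of $\mathbf{F}_{\txt{x},q}\in\mathscr{F}_\txt{x}(\infty)$ is automatic because Lemma~\ref{thm:M2F2_zhang} already guarantees that its output lies in the continuous-phase set defined in \eqref{eq:B_quantized}.

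There is essentially no hard step: the only thing to be careful about is to emphasize that the per-summand decoupling is legitimate precisely because each term is a rank-1 outer product, so the Tx and Rx factors can be synthesized independently using two front ends each and then multiplied. I would close with the remark (already noted in the discussion of Lemma~\ref{thm:M2F2_zhang}) that the factorization is not unique, since any $\mathbf{c}_{\txt{x},q}$ with both entries equal in magnitude to some value at least $\|\mathbf{w}_{\txt{x},q}\|_\infty/2$ yields an equally valid choice of $\mathbf{F}_{\txt{x},q}$; the stated closed form is simply one convenient representative, which explains the ``for example'' qualifier in the statement of the theorem.
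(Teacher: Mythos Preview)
Your proposal is correct and matches the paper's own proof, which simply states that the result follows directly from Lemma~\ref{thm:M2F2_zhang} since each $\mathbf{w}_{\txt{x},q}$ can be factorized as $\mathbf{F}_{\txt{x},q}\mathbf{c}_{\txt{x},q}$. You have merely spelled out the substitution and summation that the paper leaves implicit.
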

	\begin{proof}
		{This follows directly from Lemma~\ref{thm:M2F2_zhang}, since each $ \mathbf{w}_{\txt{x},q} $ can be factorized as $ \mathbf{w}_{\txt{x},q}=\mathbf{F}_{\txt{x},q}\mathbf{c}_{\txt{x},q} $.}
	\end{proof}
	{Note that the factorization in Theorem~\ref{thm:M2F2} is not unique, {and more general expressions for $\mathbf{F}_{\txt{x},q}$ and $ \mathbf{c}_{\txt{x},q}$ are easily obtained (see Lemma~\ref{thm:M2F2_zhang} in Section~\ref{sec:prior_work} and the proof in Appendix~\ref{proof:thm:M2F2_zhang}).}}
	
	\subsubsection{One-bit phase shifters}
	The phases of the phase shifters may be coarsely quantized in practice \cite{molisch2017hybrid}. In this case, Theorem~\ref{thm:M2F2} no longer holds even approximately. However, any co-array weight matrix $ \mathbf{W}\in \mathbb{C}^{N_\txt{r}\times N_\txt{t}} $ can still be achieved using only two Tx/Rx front ends and one-bit phase quantization. This is accomplished at the expense of increasing the number of component images to $ Q=N_\txt{r}N_\txt{t} \gg \min(N_\txt{r},N_\txt{t})\geq \text{rank}(\mathbf{W})$. The hybrid weight matrices in \eqref{eq:W_Bg} are again obtained in closed form, as shown by the following theorem.
	\begin{thm}[{Hybrid beamformer, $ 1 $-bit phase shifters, two Tx/Rx front ends}] \label{thm:M2F2J1}
		Let $ M_\txt{x}\!=\!2$ and ${B}\!=\!1$. Any $ \mathbf{W}\!\in\!\mathbb{C}^{N_\txt{r}\times N_\txt{t}} $ may be factorized as $ \mathbf{W}\!=\!\sum_{q=1}^{N_\txt{r}N_\txt{t}} \mathbf{F}_{\txt{r},q}\mathbf{c}_{\txt{r},q}\mathbf{c}_{\txt{t},q}^\txt{T}\mathbf{F}_{\txt{t},q}^\txt{T} $, with $ \mathbf{c}_{\txt{r},q},\mathbf{c}_{\txt{t},q}\!\in\!\mathbb{C}^2 $, and $ \mathbf{F}_{\txt{x},q}\!\in\!\mathscr{F}_\txt{x}(1)$ following \eqref{eq:B_quantized}. {For example, a valid factorization is}
		\begin{align}
		\mathbf{F}_{\txt{x},q} &=[\mathbf{1}_{N_\txt{x}}, 2\mathbf{e}_{n_\txt{x}}-\mathbf{1}_{N_\txt{x}}] \label{eq:Bx_opt_q}\\
		\mathbf{c}_{\txt{x},q} &= \frac{\sqrt{{W}_{n_\txt{r}n_\txt{t}}}}{2}\mathbf{1}_{2}, \label{eq:gx_opt_q}
		\end{align}
		where $ n_\txt{r}\!=\!1\!+\!(q\!-\!1) \bmod N_\txt{r}$ and $n_\txt{t}\!=\!\lceil q/N_\txt{r}\rceil $.
	\end{thm}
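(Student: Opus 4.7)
The plan is essentially verification by direct substitution, since the theorem specifies an explicit candidate factorization. The strategy is to show that the $q$th summand on the right-hand side equals $W_{n_\text{r}n_\text{t}}\mathbf{e}_{n_\text{r}}\mathbf{e}_{n_\text{t}}^\text{T}$, and then observe that as $q$ ranges over $\{1,\ldots,N_\text{r}N_\text{t}\}$, the index pair $(n_\text{r},n_\text{t})$ traverses every entry of $\mathbf{W}$ exactly once.

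First, I would verify feasibility: every entry of $\mathbf{F}_{\text{x},q}$ in \eqref{eq:Bx_opt_q} is $\pm 1 \in \{e^{j0}, e^{j\pi}\}$, so indeed $\mathbf{F}_{\text{x},q}\in\mathscr{F}_\text{x}(1)$ per \eqref{eq:B_quantized}–\eqref{eq:phases_set}. Second, I would evaluate the analog-plus-digital product for a single component image by noting that
\begin{align*}
\mathbf{F}_{\text{x},q}\mathbf{c}_{\text{x},q}
= \frac{\sqrt{W_{n_\text{r}n_\text{t}}}}{2}\,[\mathbf{1}_{N_\text{x}},\, 2\mathbf{e}_{n_\text{x}}-\mathbf{1}_{N_\text{x}}]\mathbf{1}_2
= \frac{\sqrt{W_{n_\text{r}n_\text{t}}}}{2}(2\mathbf{e}_{n_\text{x}})
=\sqrt{W_{n_\text{r}n_\text{t}}}\,\mathbf{e}_{n_\text{x}},
\end{align*}
where the square root denotes any fixed branch (the same branch is used for both $\text{x}\in\{\text{t},\text{r}\}$ so the product below recovers $W_{n_\text{r}n_\text{t}}$).

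Third, taking the outer product of the Rx and Tx contributions yields
\begin{align*}
\mathbf{F}_{\text{r},q}\mathbf{c}_{\text{r},q}\mathbf{c}_{\text{t},q}^\text{T}\mathbf{F}_{\text{t},q}^\text{T}
=\sqrt{W_{n_\text{r}n_\text{t}}}\,\mathbf{e}_{n_\text{r}}\bigl(\sqrt{W_{n_\text{r}n_\text{t}}}\,\mathbf{e}_{n_\text{t}}\bigr)^\text{T}
= W_{n_\text{r}n_\text{t}}\,\mathbf{e}_{n_\text{r}}\mathbf{e}_{n_\text{t}}^\text{T},
\end{align*}
which is the $N_\text{r}\times N_\text{t}$ matrix containing the single entry $W_{n_\text{r}n_\text{t}}$ at position $(n_\text{r},n_\text{t})$ and zeros elsewhere. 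Finally, the index map $q\mapsto(n_\text{r},n_\text{t})$ with $n_\text{r}=1+(q-1)\bmod N_\text{r}$ and $n_\text{t}=\lceil q/N_\text{r}\rceil$ is a bijection between $\{1,\ldots,N_\text{r}N_\text{t}\}$ and $\{1,\ldots,N_\text{r}\}\times\{1,\ldots,N_\text{t}\}$, so summing these rank-$1$ contributions reproduces every entry of $\mathbf{W}$ exactly once, i.e., $\sum_{q=1}^{N_\text{r}N_\text{t}}W_{n_\text{r}n_\text{t}}\mathbf{e}_{n_\text{r}}\mathbf{e}_{n_\text{t}}^\text{T}=\mathbf{W}$.

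There is no real obstacle; the only subtle point is the choice of complex square root, which I would resolve by fixing one branch per $(n_\text{r},n_\text{t})$ and observing that only the product $\sqrt{W_{n_\text{r}n_\text{t}}}\cdot\sqrt{W_{n_\text{r}n_\text{t}}}=W_{n_\text{r}n_\text{t}}$ matters. The proof is therefore a short verification, and the theorem quantifies the price paid for one-bit quantization: the number of component images grows from $\text{rank}(\mathbf{W})$ in the continuous-phase case of Theorem~\ref{thm:M2F2} to $N_\text{r}N_\text{t}$ here.
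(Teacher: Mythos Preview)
Your proposal is correct and follows essentially the same approach as the paper's proof: both verify directly that $\mathbf{F}_{\text{x},q}\mathbf{c}_{\text{x},q}$ collapses to a scaled standard unit vector $\mathbf{e}_{n_\text{x}}$, so that each term of the sum contributes exactly one entry $W_{n_\text{r}n_\text{t}}\mathbf{e}_{n_\text{r}}\mathbf{e}_{n_\text{t}}^\text{T}$, and the bijection $q\mapsto(n_\text{r},n_\text{t})$ then recovers $\mathbf{W}$. Your explicit feasibility check that the entries of $\mathbf{F}_{\text{x},q}$ lie in $\{\pm 1\}\subset\varPhi(1)$ and your remark on the branch of the complex square root are minor clarifications the paper leaves implicit, but the argument is otherwise the same.
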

	\begin{proof}
		{Eq.~\eqref{eq:Bx_opt_q} ensures that} each of the $ Q\!=\!N_\txt{r}N_\txt{t} $ terms in \eqref{eq:W_Bg} contribute to exactly one {entry} of matrix $ \mathbf{W}\!\in\!\mathbb{C}^{N_\txt{r}\times N_\txt{t}}  $. {In particular, substituting} \eqref{eq:Bx_opt_q} and {$ {c}_{\txt{x},q}\mathbf{1}_{2} $} into \eqref{eq:W_Bg} {yields} $ \mathbf{F}_{\txt{x},q}\mathbf{c}_{\txt{x},q}\!=\!2{c}_{\txt{x},q}\mathbf{e}_{n_\txt{x}}${, where} $ \mathbf{e}_{n_\txt{x}}\!\in\!\{0,1\}^{N_\txt{x}}$ is the standard unit vector of length $ N_\txt{x} $ {with a unit entry at index $ n_\txt{x}\in\{1,2,\ldots,N_\txt{x}\} $}. Consequently, the $q $th term in \eqref{eq:W_Bg} becomes
		\begin{align*}
			\mathbf{F}_{\txt{r},q}\mathbf{c}_{\txt{r},q}\mathbf{c}_{\txt{t},q}^\txt{T}\mathbf{F}_{\txt{t},q}^\txt{T}\!=\!4{c}_{\txt{r},q}{c}_{\txt{t},q}\mathbf{e}_{n_\txt{r}}\mathbf{e}_{n_\txt{t}}^\txt{T},
		\end{align*}
		where $ q\!=\!n_\txt{r}+(n_\txt{t}-1)N_\txt{r}$. {If} $ 4{c}_{\txt{r},q}{c}_{\txt{t},q}\!=\!W_{n_\txt{r}n_\txt{t}} $ {then, as desired,}
		\begin{align*}
			\sum_{q=1}^{N_\txt{r}N_\txt{t}} \mathbf{F}_{\txt{r},q}\mathbf{c}_{\txt{r},q}\mathbf{c}_{\txt{t},q}^\txt{T}\mathbf{F}_{\txt{t},q}^\txt{T}\!=\!\sum_{n_\txt{r}=1}^{N_\txt{r}}\sum_{n_\txt{t}=1}^{N_\txt{t}} \mathbf{e}_{n_\txt{r}}\mathbf{e}_{n_\txt{t}}^\txt{T} {W}_{n_\txt{r}n_\txt{t}}\!=\!\mathbf{W}.
		\end{align*}
		{Choosing $ {c}_{\txt{r},q}\!=\!{c}_{\txt{t},q}\!=\!\sqrt{{W}_{n_\txt{r},n_\txt{t}}}/2 $ then yields \eqref{eq:gx_opt_q}.}
	\end{proof}
	Theorem~\ref{thm:M2F2J1} implies that the hybrid beamformer with at least two Tx/Rx front ends can achieve the PSF of the fully digital beamformer, regardless of the number of bits used to quantize the phase shifters. This is facilitated by image addition, which trades off an increase in the number of component images $ Q $ for lower quantization precision $ {B} $, and fewer Tx/Rx front ends $ M_\txt{x}$. As a corollary of Theorem~\ref{thm:M2F2J1}, we see that the number of component images of the hybrid beamformer is always upper bounded by $ Q\leq N_\txt{r}N_\txt{t} $, since a trivial solution with $ M_\txt{x}\geq 2 $; $ Q= N_\txt{r}N_\txt{t}$; $ {B}\geq 1 $ is achieved by appending columns with arbitrary phases to \eqref{eq:Bx_opt_q}, and zeros to \eqref{eq:gx_opt_q}.
	
	\subsection{Fully analog beamformer}
	A fully analog beamformer may be constructed directly from a hybrid architecture by either increasing the number of component images $ Q $, or by modifying the beamforming architecture as in Remark~\ref{thm:M1F2} of Section~\ref{sec:prior_work}. In the latter case, the number of phase shifters is still $ M_\txt{x}N_\txt{x} $, although only a single Tx/Rx front end is used. Actually, the number of phase shifters can be reduced to half by doubling $ Q $. More generally, the following lemma shows that {the} total number of phase shifters can be reduced from $ M_\text{t}N_\text{t}+M_\text{r}N_\text{r} $ to $ N_\text{t}+N_\text{r} $ by increasing the number of component images from $ Q $ to $ M_\text{t}M_\text{r}Q $.
	\begin{lemma}[Analog beamforming weights from hybrid beamforming weights]\label{thm:M1F1}
		Any $ \mathbf{W}=\\ \sum_{\tilde{q}=1}^{Q}\mathbf{F}_{\txt{r},\tilde{q}}\mathbf{c}_{\txt{r},\tilde{q}}\mathbf{c}_{\txt{t},\tilde{q}}^\txt{T}\mathbf{F}_{\txt{t},\tilde{q}}^\txt{T} \in\mathbb{C}^{N_\txt{r}\times N_\txt{t}}$, where $\mathbf{F}_{\txt{x},\tilde{q}}\in\mathbb{C}^{N_\txt{x}\times M_\txt{x}} $ and $\mathbf{c}_{\txt{x},\tilde{q}}\in\mathbb{C}^{M_\txt{x}}  $, can be factorized as $ \mathbf{W} = \sum_{{q} = 1}^{M_\txt{r}M_\txt{t}Q}{c}_{\txt{r},{q}}{c}_{\txt{t},{q}}\mathbf{f}_{\txt{r},{q}}\mathbf{f}_{\txt{t},{q}}^\txt{T} $. {For example, a valid choice is} 
		\begin{align}
		\mathbf{f}_{\txt{x},{q}} &= [\mathbf{F}_{\txt{x},{\lceil{q}/(M_\txt{r}M_\txt{t}) \rceil}}]_{:,m_\txt{x}} \label{eq:bx_analog}\\
		{c}_{\txt{x},{q}} &=[\mathbf{c}_{\txt{x},{\lceil{q}/(M_\txt{r}M_\txt{t}) \rceil}}]_{m_\txt{x}},\label{eq:gx_analog}
		\end{align}
		where $ m_\txt{r} = \lceil (1+({q}-1) \bmod M_\txt{r}M_\txt{t})/M_\txt{t} \rceil $ and $ m_\txt{t}= 1+({q}-1) \bmod M_\txt{t}$.
	\end{lemma}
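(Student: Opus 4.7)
The plan is to expand each hybrid rank-$1$ term in $\mathbf{W}$ into a double sum of analog rank-$1$ terms by unfolding the matrix-vector products $\mathbf{F}_{\txt{x},\tilde{q}}\mathbf{c}_{\txt{x},\tilde{q}}$ column-by-column, and then flatten the resulting triple sum into a single sum by an appropriate bijection on the indices. Concretely, for every $\tilde{q}$ and each side $\txt{x}\in\{\txt{t},\txt{r}\}$, I would write
\begin{align*}
\mathbf{F}_{\txt{x},\tilde{q}}\mathbf{c}_{\txt{x},\tilde{q}} = \sum_{m_\txt{x}=1}^{M_\txt{x}} [\mathbf{F}_{\txt{x},\tilde{q}}]_{:,m_\txt{x}}\,[\mathbf{c}_{\txt{x},\tilde{q}}]_{m_\txt{x}},
\end{align*}
so that each summand of the hybrid decomposition expands as
\begin{align*}
\mathbf{F}_{\txt{r},\tilde{q}}\mathbf{c}_{\txt{r},\tilde{q}}\mathbf{c}_{\txt{t},\tilde{q}}^\txt{T}\mathbf{F}_{\txt{t},\tilde{q}}^\txt{T}
= \sum_{m_\txt{r}=1}^{M_\txt{r}}\sum_{m_\txt{t}=1}^{M_\txt{t}} [\mathbf{c}_{\txt{r},\tilde{q}}]_{m_\txt{r}}[\mathbf{c}_{\txt{t},\tilde{q}}]_{m_\txt{t}}\,[\mathbf{F}_{\txt{r},\tilde{q}}]_{:,m_\txt{r}}[\mathbf{F}_{\txt{t},\tilde{q}}]_{:,m_\txt{t}}^\txt{T}.
\end{align*}
Summing over $\tilde{q}\in\{1,\ldots,Q\}$ then exhibits $\mathbf{W}$ as a sum of exactly $M_\txt{r}M_\txt{t}Q$ rank-$1$ analog terms of the desired form $c_{\txt{r},q}c_{\txt{t},q}\mathbf{f}_{\txt{r},q}\mathbf{f}_{\txt{t},q}^\txt{T}$.

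The second step is to exhibit a bijection $q\mapsto(\tilde{q},m_\txt{r},m_\txt{t})$ between $\{1,\ldots,M_\txt{r}M_\txt{t}Q\}$ and $\{1,\ldots,Q\}\times\{1,\ldots,M_\txt{r}\}\times\{1,\ldots,M_\txt{t}\}$, so that the triple sum can be rewritten as a single indexed sum matching \eqref{eq:bx_analog}--\eqref{eq:gx_analog}. The proposed choice $\tilde{q}=\lceil q/(M_\txt{r}M_\txt{t})\rceil$, $m_\txt{r}=\lceil(1+(q-1)\bmod M_\txt{r}M_\txt{t})/M_\txt{t}\rceil$, $m_\txt{t}=1+(q-1)\bmod M_\txt{t}$ corresponds to a standard mixed-radix representation of $q-1$ with digits in bases $M_\txt{t}$, $M_\txt{r}$, and $Q$; I would verify bijectivity by writing $q-1=(\tilde{q}-1)M_\txt{r}M_\txt{t}+(m_\txt{r}-1)M_\txt{t}+(m_\txt{t}-1)$ and checking that each factor lies in its intended range, which is an elementary divmod argument.

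Substituting the bijection into the expanded triple sum and setting $\mathbf{f}_{\txt{x},q}$ and $c_{\txt{x},q}$ as in \eqref{eq:bx_analog}--\eqref{eq:gx_analog} immediately gives $\mathbf{W}=\sum_{q=1}^{M_\txt{r}M_\txt{t}Q}c_{\txt{r},q}c_{\txt{t},q}\mathbf{f}_{\txt{r},q}\mathbf{f}_{\txt{t},q}^\txt{T}$, completing the proof. The argument is essentially bookkeeping: there is no approximation or non-convex optimization involved, only the distributivity of the matrix-vector product and a reindexing of a finite sum. The main (mild) obstacle is simply keeping the index conventions for $m_\txt{r}$ and $m_\txt{t}$ consistent with the statement; this can be handled by explicitly computing $m_\txt{r}$ and $m_\txt{t}$ for the first $M_\txt{r}M_\txt{t}$ values of $q$ and observing that they enumerate $\{1,\ldots,M_\txt{r}\}\times\{1,\ldots,M_\txt{t}\}$ before $\tilde{q}$ advances.
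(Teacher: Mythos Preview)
Your proposal is correct and mirrors the paper's own proof almost exactly: both expand $\mathbf{F}_{\txt{x},\tilde{q}}\mathbf{c}_{\txt{x},\tilde{q}}$ as a sum over columns, obtain the same triple sum over $(\tilde{q},m_\txt{r},m_\txt{t})$, and then flatten it into a single sum via the stated index map. Your additional remark that the index map is the mixed-radix representation $q-1=(\tilde{q}-1)M_\txt{r}M_\txt{t}+(m_\txt{r}-1)M_\txt{t}+(m_\txt{t}-1)$ is a nice explicit justification of bijectivity that the paper leaves implicit.
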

	\begin{proof}\let\qed\relax 
		See Appendix~\ref{proof:thm:M1F1}.
	\end{proof}
	
	\subsubsection{Continuous phase shifters}
	Recall from Theorem~\ref{thm:M2F2} that a hybrid beamformer with continuous phase shifters can achieve any fully digital beamforming vectors using only two Tx/Rx front ends. By Lemma~\ref{thm:M1F1}, the number of front ends may further be halved by quadrupling the number of component images $ Q $, as shown by the following theorem (cf. \cite[Theorem~1]{rajamaki2019analog}).
	\begin{thm}[{Analog beamformer, continuous phase shifters}]\label{thm:M1F1Jinf}
		Let $ M_\txt{x}=1$ and $ {B}\to \infty$. Any $ \mathbf{W}=\sum_{\tilde{q}=1}^{Q}\mathbf{w}_{\txt{r},\tilde{q}}\mathbf{w}_{\txt{t},\tilde{q}}^\txt{T} \in\mathbb{C}^{N_\txt{r}\times N_\txt{t}}$ may be factorized as $ \mathbf{W} = \sum_{{q} = 1}^{4Q}{c}_{\txt{r},{q}}{c}_{\txt{t},{q}}\mathbf{f}_{\txt{r},{q}}\mathbf{f}_{\txt{t},{q}}^\txt{T} $, with $ {c}_{\txt{x},q}\in\mathbb{C} $; and $ \mathbf{f}_{\txt{x},q}\in \mathscr{F}_\txt{x}(\infty)$ following \eqref{eq:B_quantized}. For example, a valid factorization is 
		\begin{align}
		\mathbf{f}_{\txt{x},{q}}&\!=\!{\exp\Bigg(j\Big(\measuredangle \mathbf{w}_{\txt{x},\tilde{q}}\!+\!(-1)^{i_\txt{x}+1}\cos^{-1}\Big(\frac{|\mathbf{w}_{\txt{x},\tilde{q}}|}{\|\mathbf{w}_{\txt{x},\tilde{q}}\|_\infty}\Big)\Big)\Bigg)}\label{eq:bx_analog_inf}\\
		{c}_{\txt{x},{q}}&=\|\mathbf{w}_{\txt{x},\tilde{q}}\|_\infty/2, \label{eq:gx_analog_inf}
		\end{align}
		where $ \tilde{q}= \lceil{q}/4 \rceil$; $i_\txt{r}\!=\!\lceil (1+({q}-1) \bmod 4)/2\rceil $; and $ i_\txt{t}\!=\!1+({q}-1) \bmod 2$. 
	\end{thm}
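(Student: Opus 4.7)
The natural plan is to obtain Theorem~\ref{thm:M1F1Jinf} as a direct consequence of two results already established in the paper: Theorem~\ref{thm:M2F2}, which says that any digital weight vector can be realized by a two–front-end hybrid beamformer with continuous phase shifters in closed form; and Lemma~\ref{thm:M1F1}, which converts a hybrid factorization with $M_\txt{x}$ front ends into a fully analog one at the cost of multiplying the number of component images by $M_\txt{r}M_\txt{t}$. With $M_\txt{r}=M_\txt{t}=2$ this multiplier is $4$, which is exactly the blow-up that appears in the statement.

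Concretely, I would proceed as follows. First, apply Theorem~\ref{thm:M2F2} to each term in the given digital decomposition $\mathbf{W}=\sum_{\tilde q=1}^{Q}\mathbf{w}_{\txt{r},\tilde q}\mathbf{w}_{\txt{t},\tilde q}^\txt{T}$ to obtain, for every $\tilde q$, a hybrid factorization $\mathbf{w}_{\txt{x},\tilde q}=\mathbf{F}_{\txt{x},\tilde q}\mathbf{c}_{\txt{x},\tilde q}$ with $\mathbf{F}_{\txt{x},\tilde q}\in\mathscr{F}_\txt{x}(\infty)$ given by \eqref{eq:Bx_opt_inf_q} and $\mathbf{c}_{\txt{x},\tilde q}=(\|\mathbf{w}_{\txt{x},\tilde q}\|_\infty/2)\mathbf{1}_2$ given by \eqref{eq:gx_opt_inf_q}. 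This yields the intermediate representation $\mathbf{W}=\sum_{\tilde q=1}^{Q}\mathbf{F}_{\txt{r},\tilde q}\mathbf{c}_{\txt{r},\tilde q}\mathbf{c}_{\txt{t},\tilde q}^\txt{T}\mathbf{F}_{\txt{t},\tilde q}^\txt{T}$ with $M_\txt{x}=2$. Second, invoke Lemma~\ref{thm:M1F1} with $M_\txt{r}=M_\txt{t}=2$ to rewrite this as $\mathbf{W}=\sum_{q=1}^{4Q}c_{\txt{r},q}c_{\txt{t},q}\mathbf{f}_{\txt{r},q}\mathbf{f}_{\txt{t},q}^\txt{T}$ with $\mathbf{f}_{\txt{x},q}\in\mathscr{F}_\txt{x}(\infty)$, per \eqref{eq:bx_analog}–\eqref{eq:gx_analog}.

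Finally, I would substitute the explicit forms \eqref{eq:Bx_opt_inf_q}–\eqref{eq:gx_opt_inf_q} into \eqref{eq:bx_analog}–\eqref{eq:gx_analog} and reconcile the index maps. Writing $\mathbf{1}_2-2\mathbf{e}_2=[1,-1]^\txt{T}$, the two columns of $\mathbf{F}_{\txt{x},\tilde q}$ can be written uniformly as $[\mathbf{F}_{\txt{x},\tilde q}]_{:,m_\txt{x}}=\exp\bigl(j(\measuredangle\mathbf{w}_{\txt{x},\tilde q}+(-1)^{m_\txt{x}+1}\cos^{-1}(|\mathbf{w}_{\txt{x},\tilde q}|/\|\mathbf{w}_{\txt{x},\tilde q}\|_\infty))\bigr)$. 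Since the Lemma~\ref{thm:M1F1} indices specialize to $\tilde q=\lceil q/4\rceil$, $m_\txt{r}=\lceil(1+(q-1)\bmod 4)/2\rceil$ and $m_\txt{t}=1+(q-1)\bmod 2$, these coincide precisely with the $i_\txt{r},i_\txt{t}$ in \eqref{eq:bx_analog_inf}. The digital weights \eqref{eq:gx_analog} inherit the common value $\|\mathbf{w}_{\txt{x},\tilde q}\|_\infty/2$ from \eqref{eq:gx_opt_inf_q}, matching \eqref{eq:gx_analog_inf}.

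The only real substance beyond citing the two prior results is this last reconciliation of indices and signs; the rest is mechanical. I expect the main obstacle to be purely bookkeeping: confirming that the two different index conventions (the $(m_\txt{r},m_\txt{t})$ enumeration arising from Lemma~\ref{thm:M1F1} versus the $(i_\txt{r},i_\txt{t})$ enumeration in the theorem statement) describe the same four columns of $\mathbf{F}_{\txt{r},\tilde q}\otimes\mathbf{F}_{\txt{t},\tilde q}$ in the same order, so that the claimed closed form is literally a valid factorization rather than a permutation of one.
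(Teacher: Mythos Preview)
Your proposal is correct and follows essentially the same approach as the paper: apply Theorem~\ref{thm:M2F2} to obtain the $M_\txt{x}=2$ hybrid factorization, then invoke Lemma~\ref{thm:M1F1} with $M_\txt{r}=M_\txt{t}=2$ to expand into $4Q$ analog terms, and finally substitute \eqref{eq:Bx_opt_inf_q}--\eqref{eq:gx_opt_inf_q} and match indices. Your index reconciliation is in fact more explicit than the paper's, which simply states that ``properly accounting for the summation indices yields \eqref{eq:bx_analog_inf} and \eqref{eq:gx_analog_inf}.''
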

	\begin{proof}
		By Theorem~\ref{thm:M2F2}, we have
		\begin{align*}
		\mathbf{W}=\sum_{\tilde{q}=1}^{Q}\mathbf{w}_{\txt{r},\tilde{q}}\mathbf{w}_{\txt{t},\tilde{q}}^\txt{T}
		=\sum_{\tilde{q}=1}^{Q}\mathbf{F}_{\txt{r},\tilde{q}}\mathbf{c}_{\txt{r},\tilde{q}}\mathbf{c}_{\txt{t},\tilde{q}}^\txt{T} \mathbf{F}_{\txt{t},\tilde{q}}^\txt{T}, 
		\end{align*}
		where $ \mathbf{F}_{\txt{x},\tilde{q}}\!\in\!\mathscr{F}_\txt{x}(\infty)\!\subset\!\mathbb{C}^{N_\txt{x}\times 2}$ and $ \mathbf{c}_{\txt{x},\tilde{q}}\!\in\!\mathbb{C}^{2}$. Factorization into analog beamforming weights using Lemma~\ref{thm:M1F1} then yields 
		\begin{align*}
		\mathbf{W}&=\!\sum_{\tilde{q}=1}^{Q}\sum_{i=1}^{2}\sum_{l=1}^{2}[\mathbf{c}_{\txt{r},\tilde{q}}]_i[\mathbf{c}_{\txt{t},\tilde{q}}]_l[\mathbf{F}_{\txt{r},\tilde{q}}]_{:,i}[\mathbf{F}_{\txt{t},\tilde{q}}]_{:,l}^\txt{T}\!=\!\sum_{{q} = 1}^{4Q}{c}_{\txt{r},{q}} {c}_{\txt{t},{q}}\mathbf{f}_{\txt{r},{q}}\mathbf{f}_{\txt{t},{q}}^\txt{T}.
		\end{align*}
		Substituting \eqref{eq:Bx_opt_inf_q} and \eqref{eq:gx_opt_inf_q} into this expression, and properly accounting for the summation indices yields \eqref{eq:bx_analog_inf} and \eqref{eq:gx_analog_inf}.
	\end{proof}
	
	\subsubsection{One-bit phase shifters}
	According to Remark~\ref{thm:M1F2} in Section~\ref{sec:prior_work}, we may reduce the number of Tx/Rx front ends in Theorem~\ref{thm:M2F2J1} to one, since the digital weight vector in \eqref{eq:gx_opt_q} is a scaled unit vector. Similarly to Theorem~\ref{thm:M1F1Jinf}, the number of phase shifters may further be reduced to half.
	\begin{thm}[{Analog beamformer, $ 1 $-bit phase shifters}]\label{thm:M1F1J1}
		Let $ M_\txt{x}=1$ and $ {B}=1$. Any $ \mathbf{W} \in\mathbb{C}^{N_\txt{r}\times N_\txt{t}}$ may be factorized as $ \mathbf{W}\!=\!\sum_{q = 1}^{4N_\txt{r}N_\txt{t}}{c}_{\txt{r},q}{c}_{\txt{t},q}\mathbf{f}_{\txt{r},q}\mathbf{f}_{\txt{t},q}^\txt{T} $, with $ {c}_{\txt{x},q}\!\in\!\mathbb{C}$, and $ \mathbf{f}_{\txt{x},q}\!\in\!\mathscr{F}_\txt{x}(1)$ following \eqref{eq:B_quantized}. {For example, a valid factorization is}
		\begin{align}
		{\mathbf{f}_{\txt{x},q}} &{= (\mathbf{e}_{n_\txt{x}}-\mathbf{1}_{N_\txt{x}})(-1)^{i_\txt{x}}+\mathbf{e}_{n_\txt{x}}} \label{eq:bx_analog_q}\\
		{{c}_{\txt{x},q}} &{= \sqrt{W_{n_\txt{r}n_\txt{t}}}/2,} \label{eq:gx_analog_q}
		\end{align}
		where $ i_\txt{r} = \lceil (1+(q-1) \bmod 4)/2 \rceil$; $ i_\txt{t} =  1+(q-1) \bmod 2$; $ n_\txt{r}=1+(\lceil q/4 \rceil-1) \bmod N_\txt{r} $; and $n_\txt{t} = \lceil q/(4N_\txt{r})\rceil $.
	\end{thm}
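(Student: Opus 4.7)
The plan is to combine Theorem~\ref{thm:M2F2J1} with Lemma~\ref{thm:M1F1}, mirroring the strategy used to prove Theorem~\ref{thm:M1F1Jinf} from Theorem~\ref{thm:M2F2} and Lemma~\ref{thm:M1F1}. Theorem~\ref{thm:M2F2J1} already guarantees a hybrid factorization $\mathbf{W}=\sum_{\tilde{q}=1}^{N_\txt{r}N_\txt{t}}\mathbf{F}_{\txt{r},\tilde{q}}\mathbf{c}_{\txt{r},\tilde{q}}\mathbf{c}_{\txt{t},\tilde{q}}^\txt{T}\mathbf{F}_{\txt{t},\tilde{q}}^\txt{T}$ using $M_\txt{x}=2$ front ends and one-bit phase shifters, with closed-form entries $\mathbf{F}_{\txt{x},\tilde{q}}=[\mathbf{1}_{N_\txt{x}},\,2\mathbf{e}_{n_\txt{x}}-\mathbf{1}_{N_\txt{x}}]$ and $\mathbf{c}_{\txt{x},\tilde{q}}=\tfrac{\sqrt{W_{n_\txt{r}n_\txt{t}}}}{2}\mathbf{1}_2$. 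Applying Lemma~\ref{thm:M1F1} then expands each of the $N_\txt{r}N_\txt{t}$ hybrid terms into $M_\txt{r}M_\txt{t}=4$ rank-$1$ analog outer products, yielding $Q=4N_\txt{r}N_\txt{t}$ analog component images. Crucially, the one-bit phase constraint is preserved because each column of $\mathbf{F}_{\txt{x},\tilde{q}}$ already lies entrywise in $\mathscr{F}_\txt{x}(1)$.

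Next, I would substitute the explicit forms from Theorem~\ref{thm:M2F2J1} into \eqref{eq:bx_analog} and \eqref{eq:gx_analog} of Lemma~\ref{thm:M1F1}. Selecting the $m_\txt{x}$-th column of $\mathbf{F}_{\txt{x},\tilde{q}}$ returns $\mathbf{1}_{N_\txt{x}}$ when $m_\txt{x}=1$ and $2\mathbf{e}_{n_\txt{x}}-\mathbf{1}_{N_\txt{x}}$ when $m_\txt{x}=2$. I would then observe that the compact expression $(\mathbf{e}_{n_\txt{x}}-\mathbf{1}_{N_\txt{x}})(-1)^{i_\txt{x}}+\mathbf{e}_{n_\txt{x}}$ reproduces exactly these two vectors when $i_\txt{x}\in\{1,2\}$ plays the role of $m_\txt{x}$, which recovers \eqref{eq:bx_analog_q}. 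Since both entries of $\mathbf{c}_{\txt{x},\tilde{q}}$ equal $\sqrt{W_{n_\txt{r}n_\txt{t}}}/2$, picking either entry trivially yields \eqref{eq:gx_analog_q}.

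The remaining work is the index bookkeeping, which I expect to be the main (tedious but conceptually routine) obstacle. Specifically, I would verify that as $q$ ranges over $\{1,\ldots,4N_\txt{r}N_\txt{t}\}$, the pair $(i_\txt{r},i_\txt{t})$ exhausts $\{1,2\}^2$ for every fixed $\tilde{q}=\lceil q/4\rceil$, and that the theorem's direct formulas for $n_\txt{r}$ and $n_\txt{t}$ agree with the composition of $\tilde{q}=\lceil q/4\rceil$ with the indexing from Theorem~\ref{thm:M2F2J1}, using the identity $\lceil\lceil q/4\rceil/N_\txt{r}\rceil=\lceil q/(4N_\txt{r})\rceil$. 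Finally, I would confirm that every entry of the proposed $\mathbf{f}_{\txt{x},q}$ lies in $\{-1,+1\}=\{e^{j0},e^{j\pi}\}\subset\mathscr{F}_\txt{x}(1)$, completing the verification.
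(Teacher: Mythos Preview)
Your proposal is correct and follows essentially the same approach as the paper: the paper's proof also invokes Theorem~\ref{thm:M2F2J1} to obtain the $N_\txt{r}N_\txt{t}$-term hybrid factorization with $M_\txt{x}=2$ and $B=1$, then applies Lemma~\ref{thm:M1F1} to expand each term into four rank-$1$ analog outer products, with \eqref{eq:bx_analog_q} and \eqref{eq:gx_analog_q} following by substituting \eqref{eq:Bx_opt_q}--\eqref{eq:gx_opt_q} and tracking the indexing. Your write-up is in fact more explicit about the column-selection identity and the index bookkeeping than the paper's own proof.
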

	\begin{proof}
		By Theorem~\ref{thm:M2F2J1} and Lemma~\ref{thm:M1F1}, we have
		\begin{align*}
			\mathbf{W}=\sum_{\tilde{q}=1}^{N_\txt{r}N_\txt{t}} \mathbf{F}_{\txt{r},\tilde{q}}\mathbf{c}_{\txt{r},\tilde{q}}\mathbf{c}_{\txt{t},\tilde{q}}^\txt{T}\mathbf{F}_{\txt{t},\tilde{q}}^\txt{T}=\sum_{q=1}^{4N_\txt{r}N_\txt{t}}{c}_{\txt{r},q}{c}_{\txt{t},q}\mathbf{f}_{\txt{r},q}\mathbf{f}_{\txt{t},q}^\txt{T},
		\end{align*}
		 where $ \mathbf{F}_{\txt{x},\tilde{q}}\in \mathscr{F}_\txt{x}(1)\!\subset\!\mathbb{C}^{N_\txt{x}\times 2}$; $ \mathbf{c}_{\txt{x},\tilde{q}}\!\in\!\mathbb{C}^{2}$; $ \mathbf{f}_{\txt{x},{q}}\!\in\!\mathscr{F}_\txt{x}(1)\!\subset\!\mathbb{C}^{N_\txt{x}}$; and $ {c}_{\txt{x},{q}}\in\mathbb{C} $ (cf. Theorem~\ref{thm:M2F2J1}). Equations \eqref{eq:bx_analog_q} and \eqref{eq:gx_analog_q} then follow from \eqref{eq:Bx_opt_q}, \eqref{eq:gx_opt_q}{, and the indexing in Lemma~\ref{thm:M1F1}}.
	\end{proof}
	A direct corollary of Theorem~\ref{thm:M1F1J1} is that the number of component images of the analog beamformer is upper bounded by $ Q\leq 4N_\txt{r}N_\txt{t} $, since $ \mathscr{F}_\txt{x}(1)\subseteq  \mathscr{F}_\txt{x}({B}\geq1)$. The {bound is not necessarily tight, as the} gap between the bounds presented in this section {and the solutions found by Algorithm~\ref{alg:greedy}} can be significant, as we will show in the next section. Establishing tighter bounds is therefore an important topic for future work.
	
	\section{Numerical experiments} \label{sec:examples}
	This section presents numerical {results} using the beamforming weight optimization algorithms developed in Section~\ref{sec:numerical} and {the} closed-form beamformer designs derived in Section~\ref{sec:analytical}. {We first introduce the necessary preliminaries and describe the simulation setup. We then evaluate the performance of Algorithm~\ref{alg:altmin_d} and \ref{alg:greedy} for randomly drawn target PSFs, and} study how trade-offs {among} the main parameters $ M,{B}, $ and $ Q $ affect the {realized} PSF. {Lastly}, we simulate a planar array imaging far-field scatterers. We {show} that a sparse hybrid array {with coarsely quantized phase shifters} can achieve comparable {image quality} to a fully digital uniform array.
	
	\subsection{Preliminaries and simulation setup}\label{sec:num_def}
		
	\subsubsection{Linear array model}\label{sec:num_def_lin}
	{The linear array is a useful model for illustrating the impact of different design parameters in a simple and intuitive manner. Consequently, in Sections~\ref{sec:num_val} to \ref{sec:num_to}, we} consider two linear array configurations with co-located transceivers: the \emph{uniform linear array} (ULA), and the \emph{minimum-redundancy array} (MRA) \cite{moffet1968minimumredundancy,hoctor1996arrayredundancy}. The MRA has the largest uniform sum co-array for a given number of elements. {Since} each {sensor} is used for both transmission and reception, we denote $N\!=\!N_\txt{x} $ and $ M\!=\!M_\txt{x}$. {We assume that the elements are identical and omnidirectional with a {unit} inter-element spacing of {half a wavelength} (${d=} \lambda/2 $).} No mutual coupling between the elements is {considered}. {We {particularly} study the $ N\!=\!11 $ {element ULA, and $ N\!=\!7 $ element MRA in more detail (Fig.~\ref{fig:arrays_lin}). The two arrays span the same aperture $5 \lambda$ and are {sum} co-array equivalent.}} The {Tx} and {Rx} steering vectors {of the arrays are} given by {$ \mathbf{a}\!=\!\mathbf{a}_\txt{t}\!=\!\mathbf{a}_\txt{r}$, where}
	\begin{align*}
	\mathbf{a}(\varphi)= \exp(j\pi {\mathbf{d}}\sin\varphi).
	\end{align*}
	{Here,} $ \mathbf{d}\in\mathbb{Z}^N$ denotes the element positions on the $ x $-axis normalized by $ \lambda/2 $. For the ULA $ \mathbf{d}\!=\![-5, -4, \ldots,5]^\txt{T} $, and for the MRA $ \mathbf{d}\!=\![-5, -4,-2,0,2,4,5]^\txt{T} $. A complete list of MRAs with $ N\!\leq\!42 $ elements can be found in \cite{kohonen2014meet}.
	\begin{figure}[t]
		\centering
		\includegraphics[width=.7\linewidth]{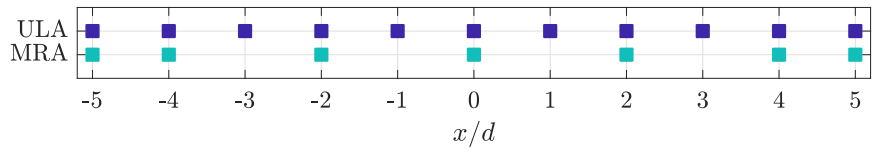}
		\caption{Linear array configurations (co-located transceivers). The MRA with $ N\!=\!7 $ elements is (sum) co-array equivalent to the ULA with $ N\!=\!11 $ elements.}\label{fig:arrays_lin}
	\end{figure}
	
	\subsubsection{Planar array model}\label{sec:num_def_plan}
	{Planar arrays are {often used} in active sensing and imaging applications. Consequently,} {in Section~\ref{sec:examples_plan}, we consider} the square uniform rectangular array (URA) and boundary array (BA) \cite{hoctor1990theunifying} {shown} in Fig.~\ref{fig:planar_array}. Both {arrays have} a side length of $16 $ unit inter-element spacings and an equivalent uniform sum co-array. The unit distance between elements is set to {$d= \lambda/2 $}, and the number of elements is $ N=289 $ in the case of the URA, and $ N=64 $ in case of the BA. All elements are used for both transmission and reception, which means that the fully digital beamforming architecture requires $ N $ ADCs/DACs. The BA in Fig.~\ref{fig:planar_array}~(b) also satisfies the minimum-redundancy property, which implies that it has the fewest elements {of} all arrays that are sum co-array equivalent with the URA in Fig.~\ref{fig:planar_array}~(a) \cite{kohonen2018planaradditive}. {Note that other sparse array configurations with this property also exist \cite{rajamaki2018sparseactive}.}
	\begin{figure}[t]
		\begin{minipage}[b]{.49\linewidth}
			\centering
			\centerline{\includegraphics[width=.7\textwidth]{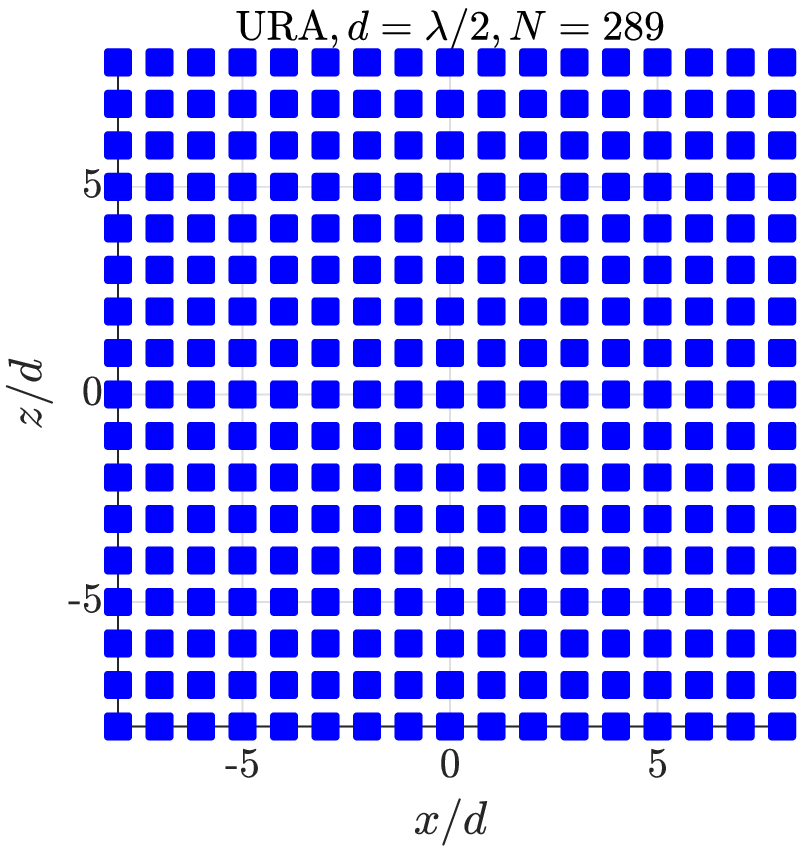}}
			\centerline{(a) Uniform rectangular array}\medskip
		\end{minipage}
		\hfill
		\begin{minipage}[b]{.49\linewidth}
			\centering
			\centerline{\includegraphics[width=.7\textwidth]{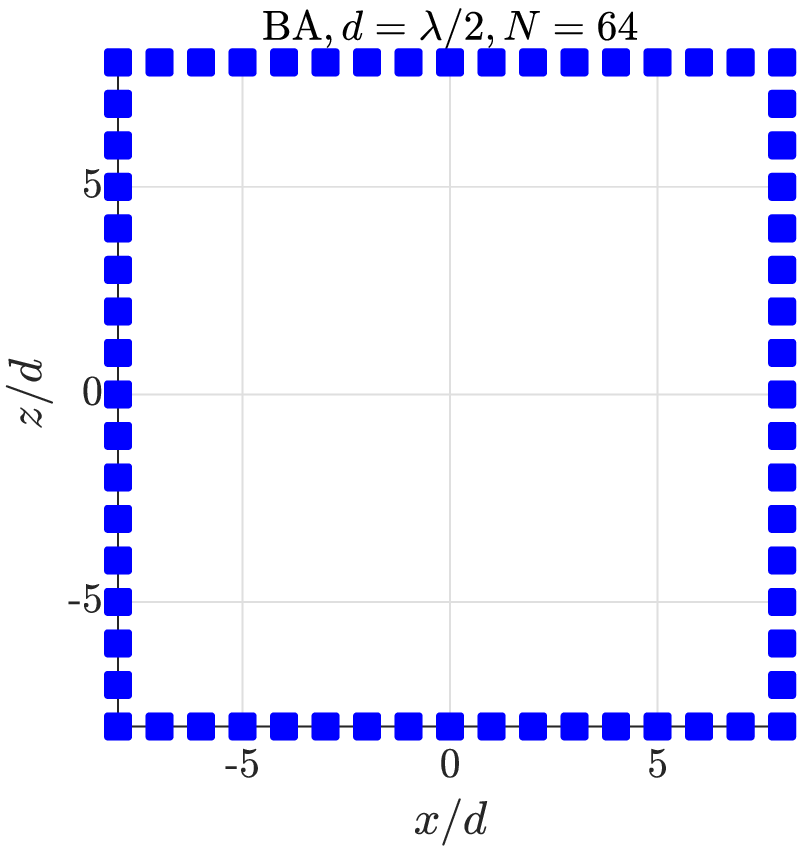}}
			\centerline{(b) Boundary array}\medskip
		\end{minipage}
		\caption{Planar square arrays (co-located transceivers). The two arrays are co-array equivalent, although the BA in (b) uses almost $ 78\% $ fewer physical elements than the URA in (a).}\label{fig:planar_array}
	\end{figure}

	We ignore mutual coupling and assume that the array elements have identical {sinusoidal gain patterns} $ g(\varphi,\theta) =  \cos \varphi \sin\theta$. Consequently, the (transmit and receive) steering vectors assume the form
	\begin{align*}
	\mathbf{a}(\varphi,\theta) = \cos \varphi \sin\theta\exp(j\pi (\mathbf{d}_x\sin\varphi \sin \theta+\mathbf{d}_z\cos\theta )),
	\end{align*}
	where $ \mathbf{d}_x {\in\mathbb{Z}^N}$ and $ \mathbf{d}_z {\in\mathbb{Z}^N}$ are the $ x $ and $ z $ coordinates of the elements {normalized by $ \lambda/2 $, {as illustrated in Fig.~\ref{fig:planar_array}}}.
	
	\subsubsection{Stochastic PSF model} \label{sec:num_def_psf_sto}
	{For performance evaluation purposes, we generate the desired co-array weight vector $ \mathbf{w}_\Sigma $ randomly from a uniform distribution (within the complex unit sphere). Specifically,} the $ i $th entry of the vector {$ \mathbf{w}_\Sigma $ becomes}
	\begin{align*}
	[\mathbf{w}_\Sigma]_i\!=\!\sqrt{r_i}e^{j\phi_i}, \txt{where\,}  r_i\!\sim\!\mathcal{U}(0,1) \txt{ and } \phi_i\!\sim\!\mathcal{U}(0,2\pi).
	\end{align*}
	Using this model, we may {conveniently} sample the parameter space of feasible PSFs uniformly at random {in Section~\ref{sec:num_val}}.
	
	\subsubsection{Deterministic PSF model}\label{sec:num_def_psf_det}
	{From the application point of view,} the stochastic model in {Section~\ref{sec:num_def_psf_sto}} may not generate interesting PSFs {that have} a narrow main lobe and low side lobe levels. We {therefore} also consider {four} deterministic PSFs that are commonly used in beamforming \cite{vantrees2002optimum} and power spectrum estimation \cite{stoica2005spectral}.  {Fig.~\ref{fig:psf_det} shows} the {magnitudes} of the rectangular, triangular, Hann, and Dolph-Chebyshev \cite{dolph1946acurrent} {beamforming weights $\mathbf{w}_\Sigma$ {and} the corresponding PSFs $\boldsymbol{\psi}$.}
	\begin{figure}[t]
		\begin{minipage}[b]{.49\linewidth}
			\centering
			\centerline{\includegraphics[width=1\textwidth]{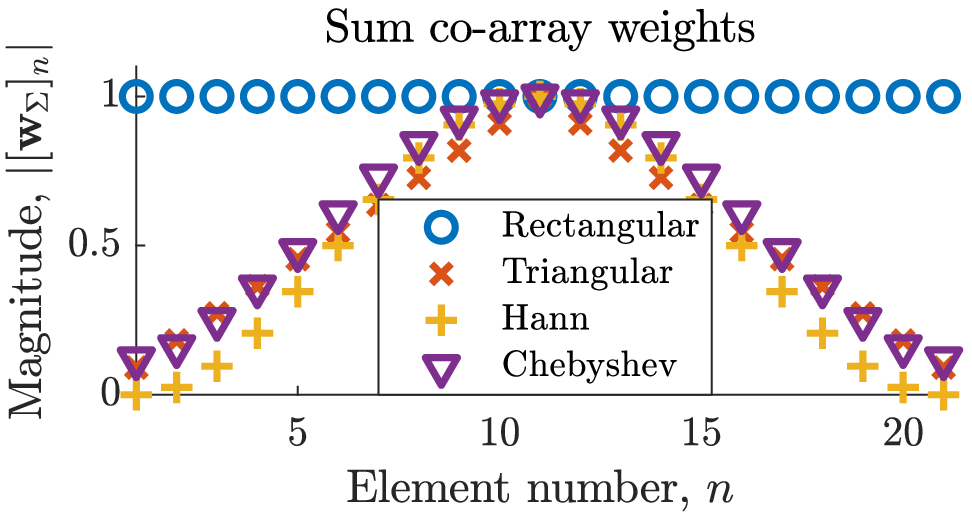}}
		\end{minipage}
		\hfill
		\begin{minipage}[b]{.49\linewidth}
			\centering
			\centerline{\includegraphics[width=1\textwidth]{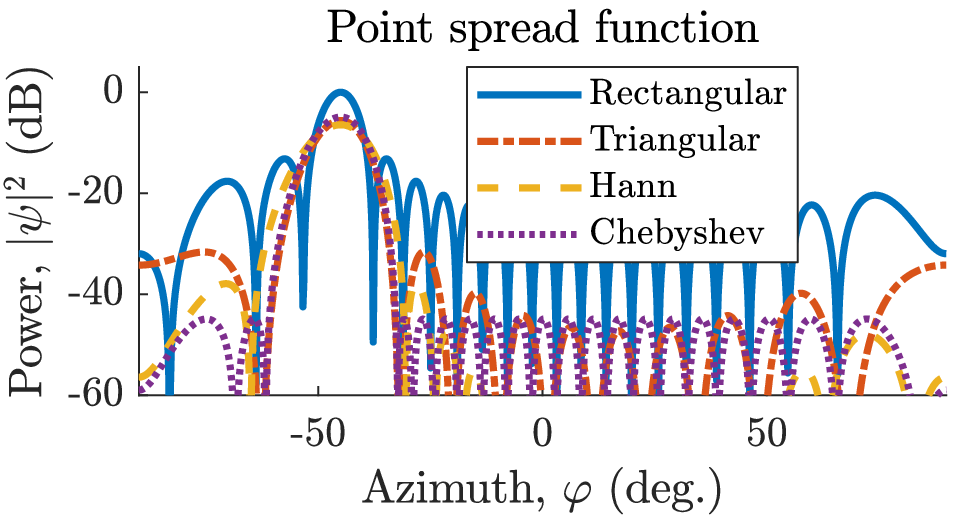}}
		\end{minipage}
		\caption{Typical sum co-array beamforming weights (left) and corresponding PSFs (right) {of the arrays in Fig.~\ref{fig:arrays_lin} (steering direction} $ \varphi = -45^\circ $). Each PSF makes a different trade-off among the main lobe width, array gain, and side-lobe levels.}\label{fig:psf_det}
	\end{figure}	
			
	\subsubsection{Algorithm parameters and performance criterion}\label{sec:num_def_alg}
	{In order to speed up {computations}, we solve \eqref{p:h_alt} for a desired co-array beamforming weight vector $\mathbf{w}_\Sigma\!\in\!\mathbb{C}^{N_\Sigma}$ instead of the sampled PSF $\boldsymbol{\psi}\!\in\!\mathbb{C}^V$ (see Section~\ref{sec:remarks_on_comp_complexity}).} {We set the maximum number of iterations in Algorithm~\ref{alg:altmin_d} to $ k_{\max}\!=\!100 $, and to $ k_{\max}\!=\!10 $ in Algorithm~\ref{alg:greedy}. We use an approximation error tolerance of $ \varepsilon_{\max}\!=\!10^{-16} \|\mathbf{w}_\Sigma\|_2^2$, {except for Section VII-E, where we use} $ \varepsilon_{\max}\!=\!10^{-6} \|\mathbf{w}_\Sigma\|_2^2$.} {Our performance criterion of choice is the \emph{relative approximation error}}
	\begin{align*}
	\epsilon = \|\mathbf{w}_{\Sigma}-\boldsymbol{\Upsilon}\,\txt{vec}(\mathbf{W})\|_2/\|\mathbf{w}_{\Sigma}\|_2.
	\end{align*}
	For an ensemble of realizations of $\epsilon$, we {evaluate the sample} mean, or {alternatively} the median and $ 90\% $ confidence interval ($ 5\% $ and $ 95\% $ percentiles) {of the sample}.
	
	\subsubsection{Computation of pseudo-inverse}
	{For numerical stability,} we compute the (approximate) pseudo-inverse of a matrix $\mathbf{X}$ using diagonal loading ({ridge regression}) as
	\begin{align}
	(\mathbf{X})^\dagger_\alpha  =  (\mathbf{X}^\txt{H}\mathbf{X}+\alpha\mathbf{I})^{-1}\mathbf{X}^\txt{H},\label{eq:tikh}
	\end{align}
	where $ (\mathbf{X})^\dagger_\alpha \approx \mathbf{X}^\dagger $ holds for small values of the diagonal loading parameter $ \alpha\!>\!0 $. {Heuristics, such as regularization or truncated SVD, are often employed when $\mathbf{X}^\txt{H}\mathbf{X}$ is ill-conditioned.} We choose the value of $ \alpha $ by trial-and-error, {since determining a rigorous selection rule is out of scope of this paper.} We {generally} set $ \alpha = 10^{-9} $, with the exception of Section~\ref{sec:examples_plan}, where we use $ \alpha= 10^{-4} $.
		
	\subsection{Validation of beamforming algorithms}\label{sec:num_val}
	{In the following, we evaluate {the two main algorithms}, Algorithm~\ref{alg:altmin_d} and \ref{alg:greedy}, for $ 100 $ random i.i.d. realizations of the desired co-array weight vector $\mathbf{w}_\Sigma $ {following {the stochastic} model in Section~\ref{sec:num_def_psf_sto}}.}
	
	\subsubsection{Algorithm~\ref{alg:altmin_d} (alternating minimization)}
	Fig.~\ref{fig:alt_min} shows the {{mean} relative approximation error} of the fully digital {beamforming weights} found by Algorithm~\ref{alg:altmin_d} as a function of the number of array elements $ N $ and component images $ Q $. The lower bound on $ Q $ in \eqref{eq:Q_d} is {also} shown in red (dashed line). For the ULA, this bound is constant (equal to one), and for the MRA it has a weak linear dependence on $ N $, {since the MRA has larger sum co-array than the ULA for given $ N $} (see Section~\ref{sec:bounds_d}). {We observe in both cases that a \emph{phase transition}, where the error drops rapidly, coincides with the lower bound in \eqref{eq:Q_d}. This empirically validates Algorithm~\ref{alg:altmin_d}.}
	\begin{figure}[t]
	\begin{minipage}[b]{.49\linewidth}
		\centering
		\centerline{\includegraphics[width=1\textwidth]{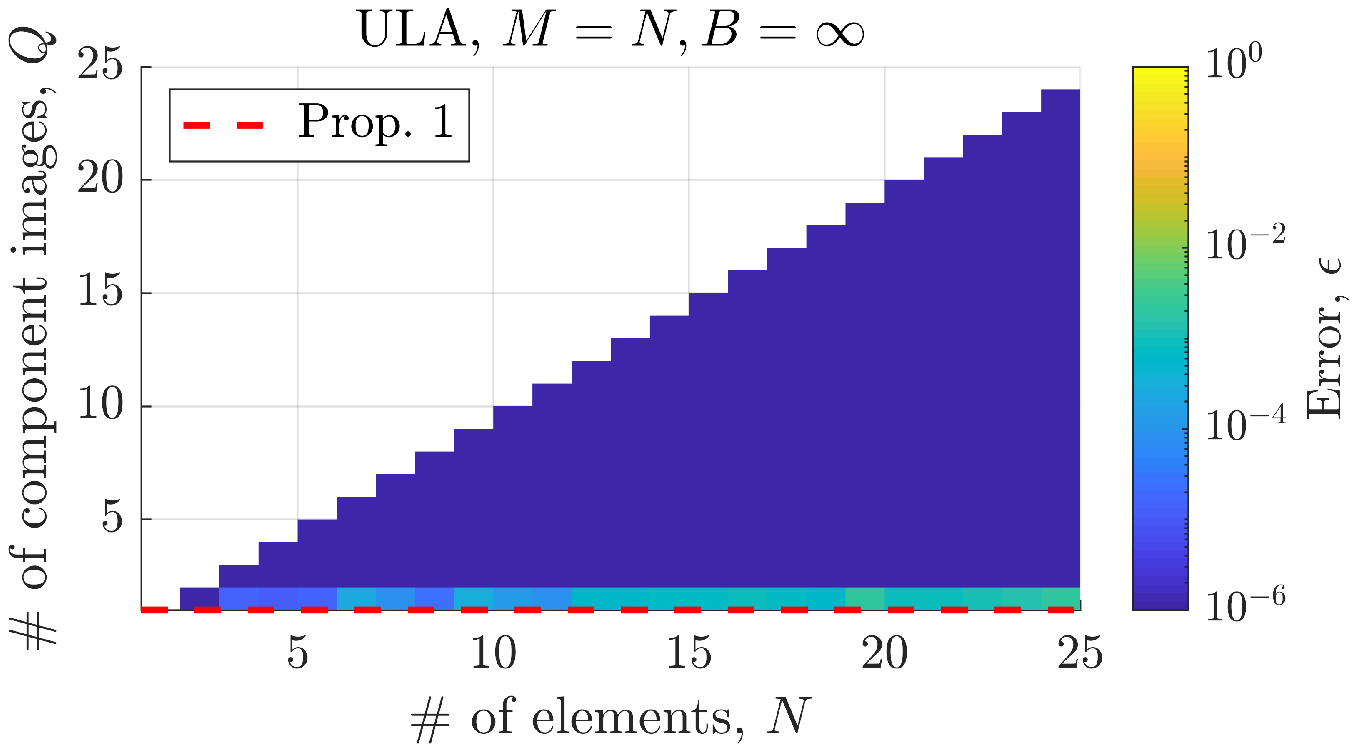}}
	\end{minipage}
	\hfill
	\begin{minipage}[b]{.49\linewidth}
		\centering
		\centerline{\includegraphics[width=1\textwidth]{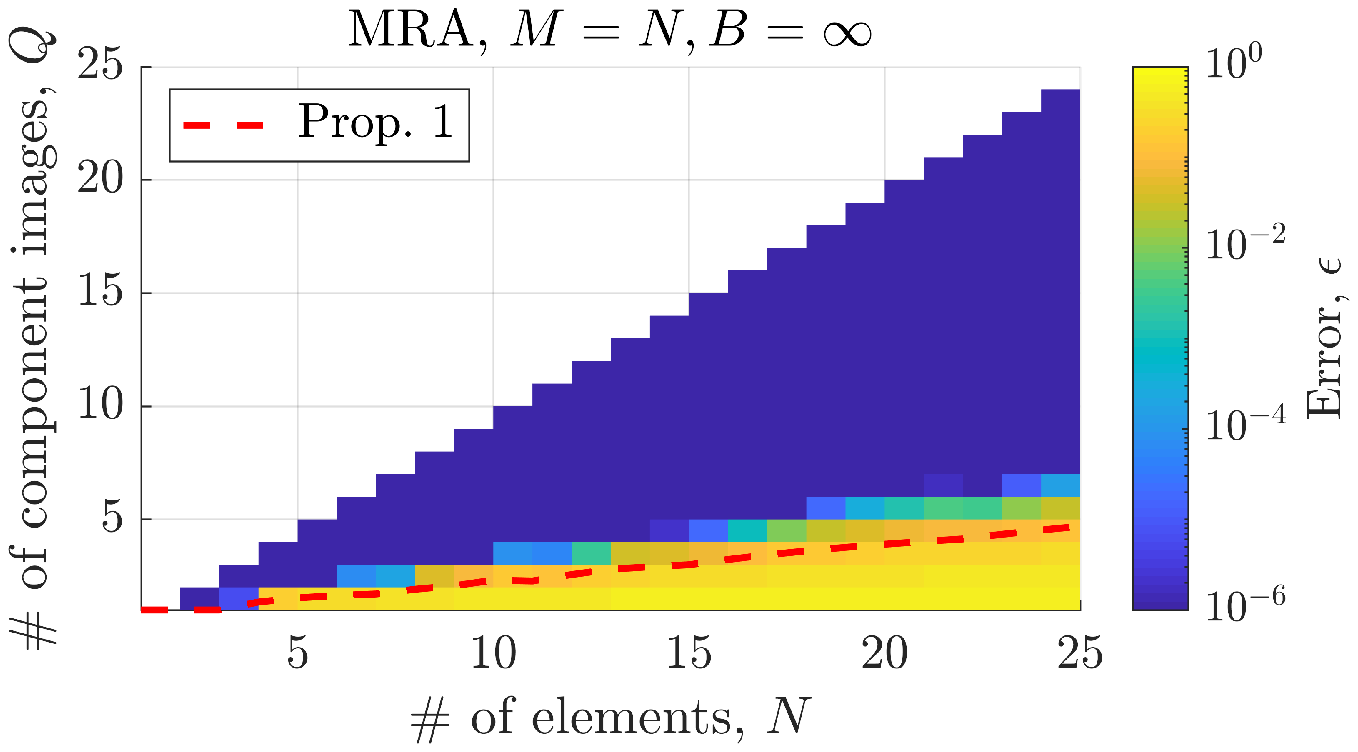}}
	\end{minipage}
	\caption{Mean error of fully digital {ULA (left) and MRA (right)} beamformers (Algorithm~\ref{alg:altmin_d}). The {error} shows a phase transition {approximately following} the lower bound on $ Q $ in \eqref{eq:Q_d}. This empirically validates the efficiency of Algorithm~\ref{alg:altmin_d}.}\label{fig:alt_min}
	\end{figure}

	{Note that by Theorem~\ref{thm:M2F2}, Fig.~\ref{fig:alt_min} also applies to the hybrid beamformer with continuous phase shifters ($ B\to \infty $) and at least two Tx-Rx front-ends {($ M=2 $)}. By Theorem~\ref{thm:M1F1Jinf}, a fully analog beamformer with continous phase shifters ($ M=1 $ and $ B\to\infty $) would achieve the same {error} {level} as the fully digital beamformers in Fig.~\ref{fig:alt_min} using at most four times as many component images.}
	
	\subsubsection{Algorithm~\ref{alg:greedy} (greedy method)}
	{Fig.~\ref{fig:greedy_h} shows the {{mean} error} of the hybrid {beamforming weights} found by Algorithm~\ref{alg:greedy}. The number of Tx/Rx front-ends is $ M=2 $, and the number of phase shift bits is {$ B=5 $ (top row) and $ B=1 $ (bottom row)}. The quantization of the phase shifts degrades the quality of the solution compared to the fully digital beamformer {shown} in Fig.~\ref{fig:alt_min}. However, even in the one bit case, the phase transition boundary of the {error} {is far below} the upper bound $ Q\leq N^2 $ suggested by Theorem~\ref{thm:M2F2J1}. {In fact, the phase transition obeys {a tighter} bound $ Q<N_\Sigma$ (dotted line), where {$N_\Sigma \leq N^2$} is the number of co-array elements.} {These findings suggest the possibility of both algorithmic improvements and tighter bounds {in future work.}}}
	\begin{figure}[t]
	\begin{minipage}[b]{1\linewidth}
	\centering
	{\includegraphics[width=.49\textwidth]{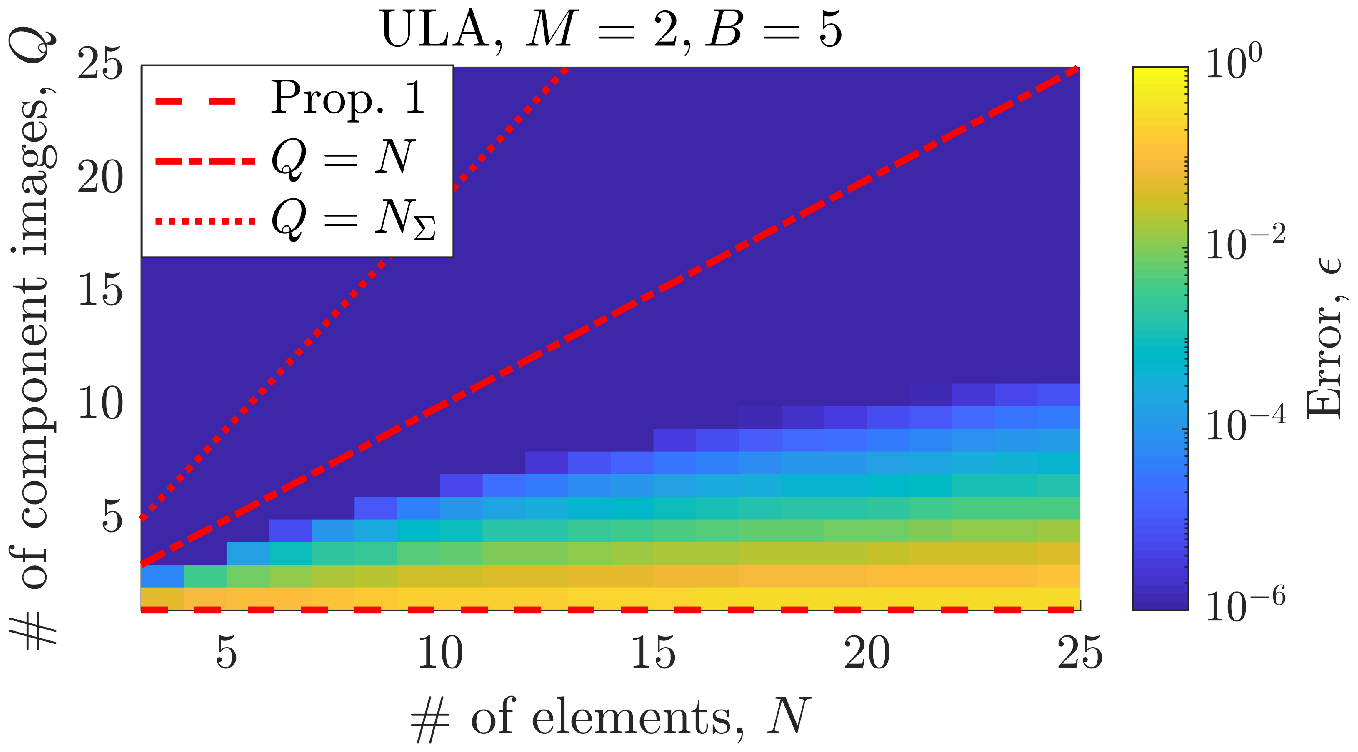}}
	{\includegraphics[width=.49\textwidth]{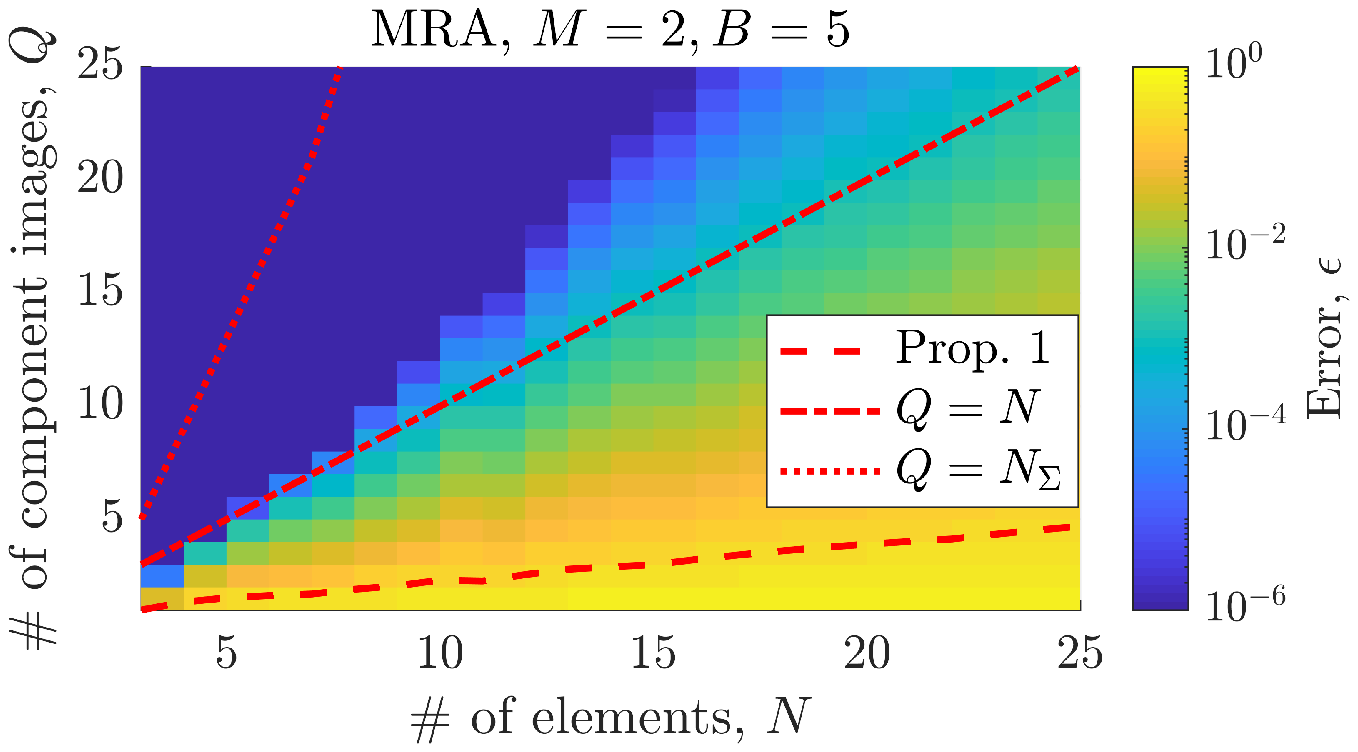}}
	{\includegraphics[width=.49\textwidth]{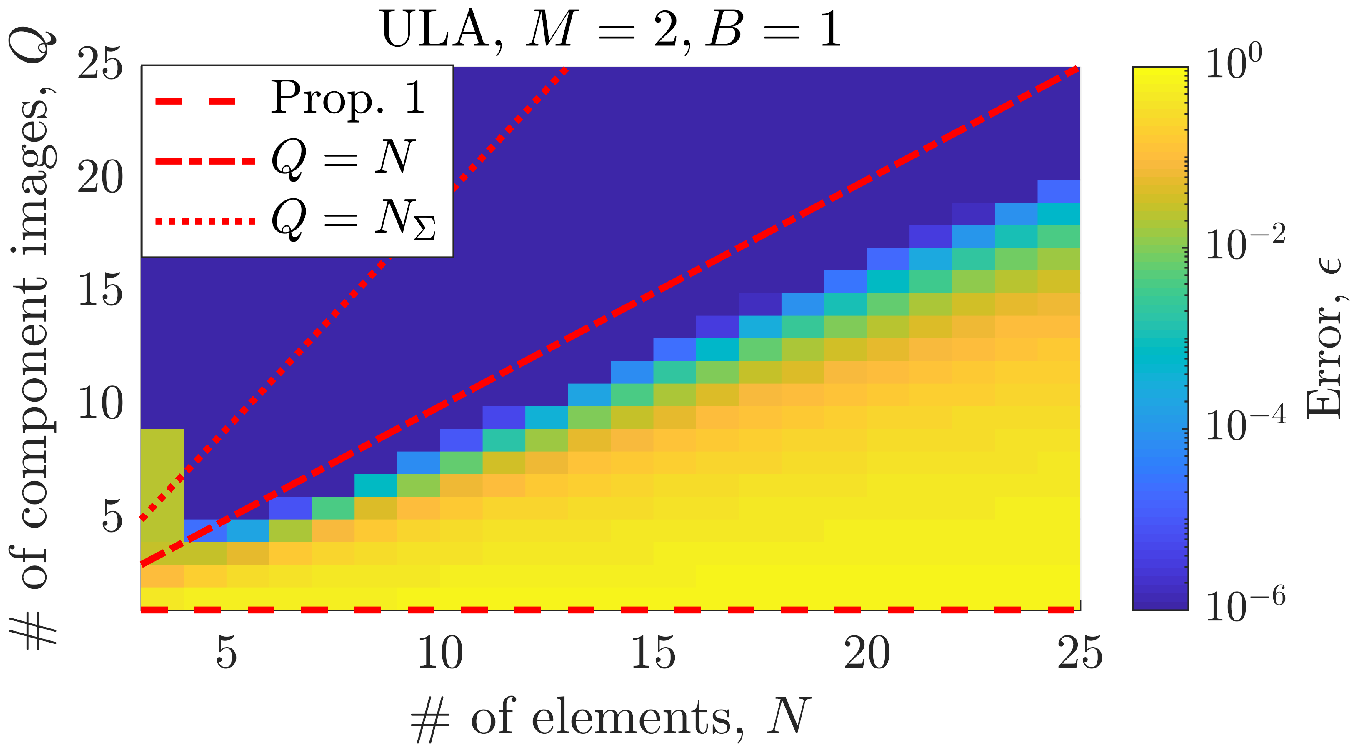}}
	{\includegraphics[width=.49\textwidth]{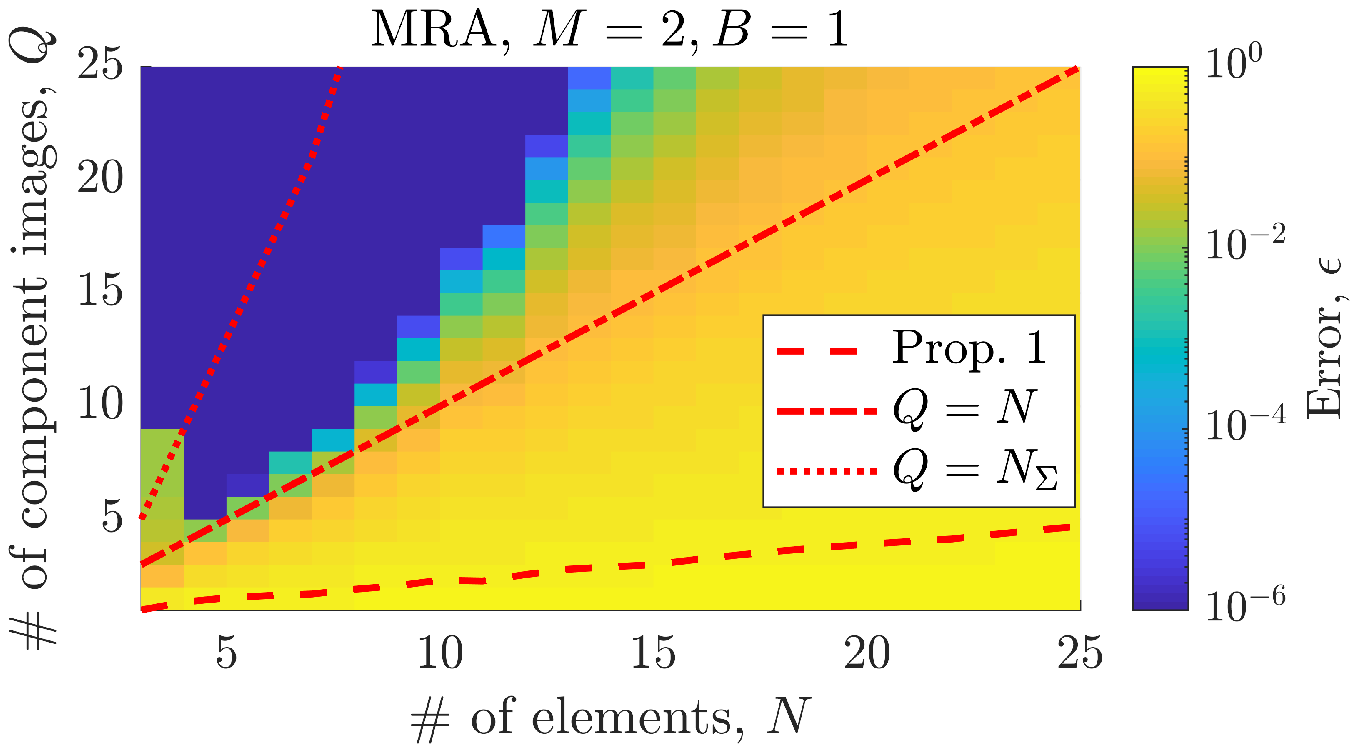}}
\end{minipage}
		\caption{Mean error of hybrid ULA {(left) and MRA (right)} beamformers with $ M\!=\!2 $ {Tx/Rx} front-ends, {and $ B\!=\!5$  (top) and $ B\!=\!1 $ (bottom) phase shift bits} (Algorithm~\ref{alg:greedy}). The phase transition occurs for {$Q\!<\!N_\Sigma $} even in the one bit case.}\label{fig:greedy_h}
	\end{figure}

	Fig.~\ref{fig:greedy_B_2} shows the {median} error {and 90\% confidence intervals} as a function of $ B $ and $ Q $ for the $ N=11 $ element ULA {and $ N=7$ element MRA. For the ULA (left),} the error decreases rapidly as $ B $ increases up to approximately $ B=8 $. After this, {we see diminishing returns in} increasing $ B $. For the MRA (right), increasing $ B $ beyond $ B=6 $ leads to little or no improvement in the error. This {point of diminishing returns} is lower than {for} the ULA, since the MRA has fewer elements.
	
	{A better solution may {sometimes} be obtained by quantizing the {phase shifts of the} beamformer provided by Theorem~\ref{thm:M2F2}. Unlike Algorithm~\ref{alg:greedy}, this solution converges to that of Algorithm~\ref{alg:altmin_d} (the fully digital beamformer) when $ B\to \infty $, provided $ M\geq 2 $. However, the solution does not improve by increasing $ Q $ or $ M $. As shown in Fig.~\ref{fig:greedy_B_2}, Algorithm~\ref{alg:greedy} (colorful non-solid lines) achieves a lower {median} error than directly quantizing Theorem~\ref{thm:M2F2} (black solid line\footnote{We use the fully digital solution found by Algorithm~\ref{alg:altmin_d} ($ Q=2 $), and recompute the digital weight matrices $ \mathbf{C}_\txt{x} $ using alternating minimization as on line~\ref{line:Cx_altmin} of Algorithm~\ref{alg:greedy} {with $ k_{\max}=100 $ iterations}.}), for example, when {$ B \leq 11 $} and $ Q=2 $ {in case of the ULA (left), or $ B \leq 6 $ and $ Q=3 $ in case of the MRA (right)}. Even in the worst case, based on the {$ 90\% $ confidence intervals} (shaded area), Algorithm~\ref{alg:greedy} {mostly} yields {a} better solution when {$ B \leq 8 $} {(ULA) or $ B \leq 4 $ (MRA)}. This cross-over point shifts to higher values of $ B $ as the number of component images $ Q $ increases.}
	\begin{figure}[t]
	\begin{minipage}[b]{1\linewidth}
		\centering
		{\includegraphics[width=.49\linewidth]{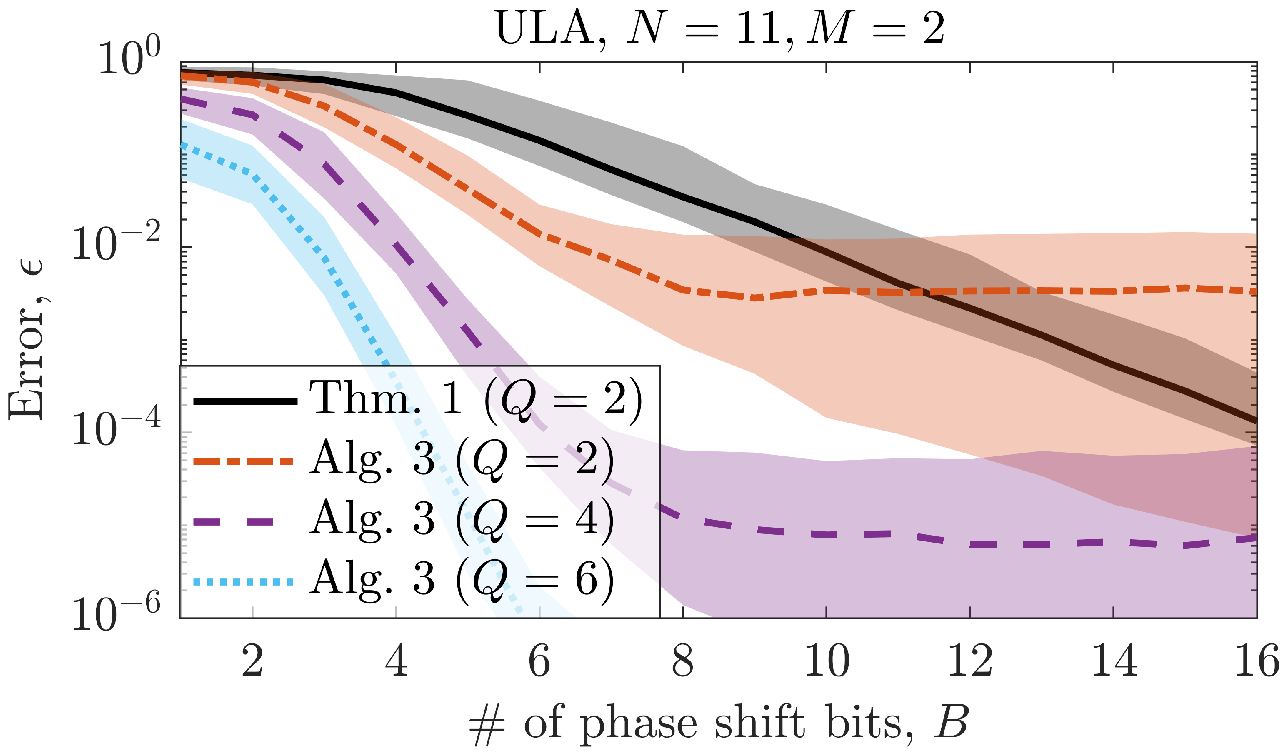}}
		{\includegraphics[width=.49\linewidth]{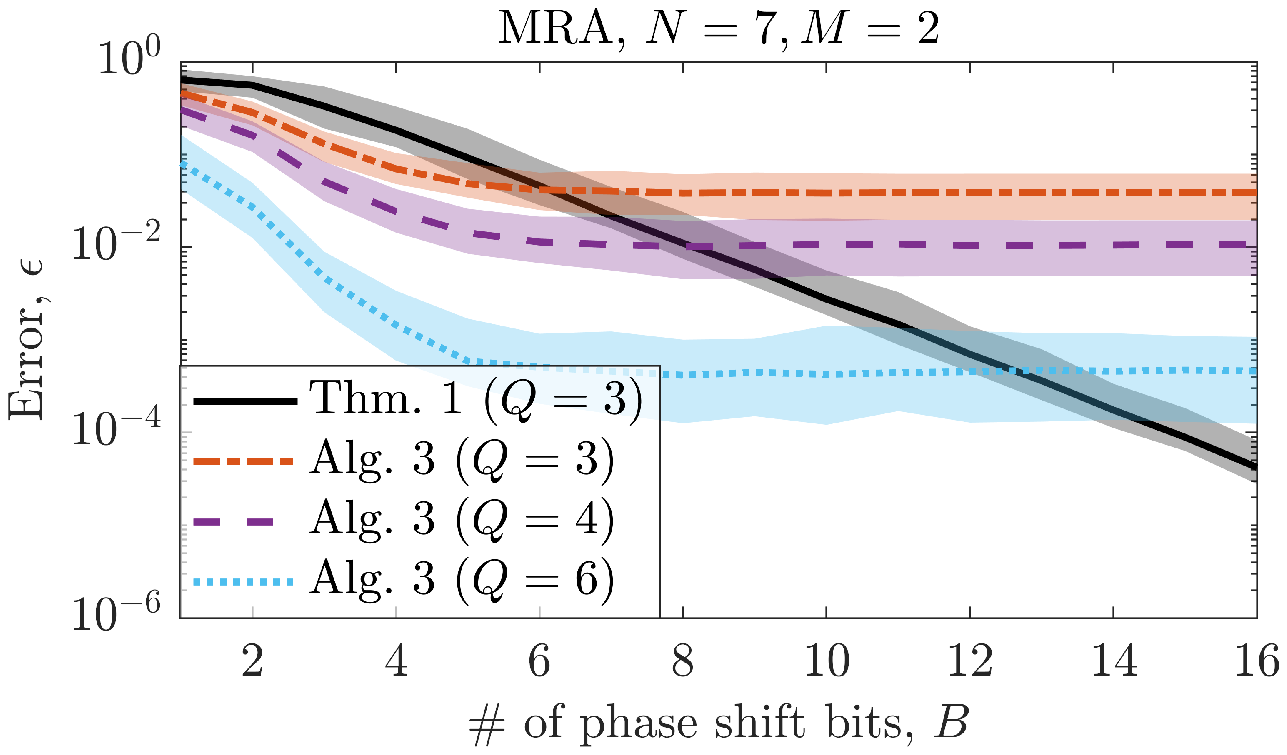}}
	\end{minipage}
	\caption{{Median error {and 90\% confidence intervals} of hybrid ULA (left) and MRA (right) beamformers {(two Tx/Rx front-ends)}. Algorithm~\ref{alg:greedy} {experiences diminishing returns in $ B $, but} achieves a lower median error than directly quantizing Theorem~\ref{thm:M2F2}, when $ B \leq 11$ (ULA) or $ B \leq 6$ (MRA).}}\label{fig:greedy_B_2}
\end{figure}
		
	\subsection{Point spread function of linear arrays}\label{sec:num_psf}
	We {now} {qualitatively study} the point spread function {of the arrays in Fig.~\ref{fig:arrays_lin}} {as a function of $ B $ and $ Q $. For simplicity, we limit ourselves to the Dolph-Chebyshev beampattern in Fig.~\ref{fig:psf_det}.}
	
	Fig.~\ref{fig:bp_lin_h} shows the realized PSF of the hybrid ULA (left column) and MRA (right column) with $ M=2 $ Tx/Rx front ends for {$ Q=1$ (top row), $Q=2 $ (bottom row),} and $ B\in\{1,5,\infty\} $. {When $ {B}\to \infty $, the ULA achieves the desired PSF using $ Q=1 $ component image by application of Theorem~\ref{thm:M2F2} to the fully digital solution found by Algorithm~\ref{alg:altmin_d}.} {The sparser MRA requires $ Q=2 $ for the same result.} When $ {B} $ is finite, {Algorithm~\ref{alg:greedy} needs to be employed. The elevated sidelobes of the PSFs are} reduced by increasing either $ {B} $ or $ Q $.
	\begin{figure}[t]
	\begin{minipage}[b]{1\linewidth}
	\centering
	{\includegraphics[width=.49\linewidth]{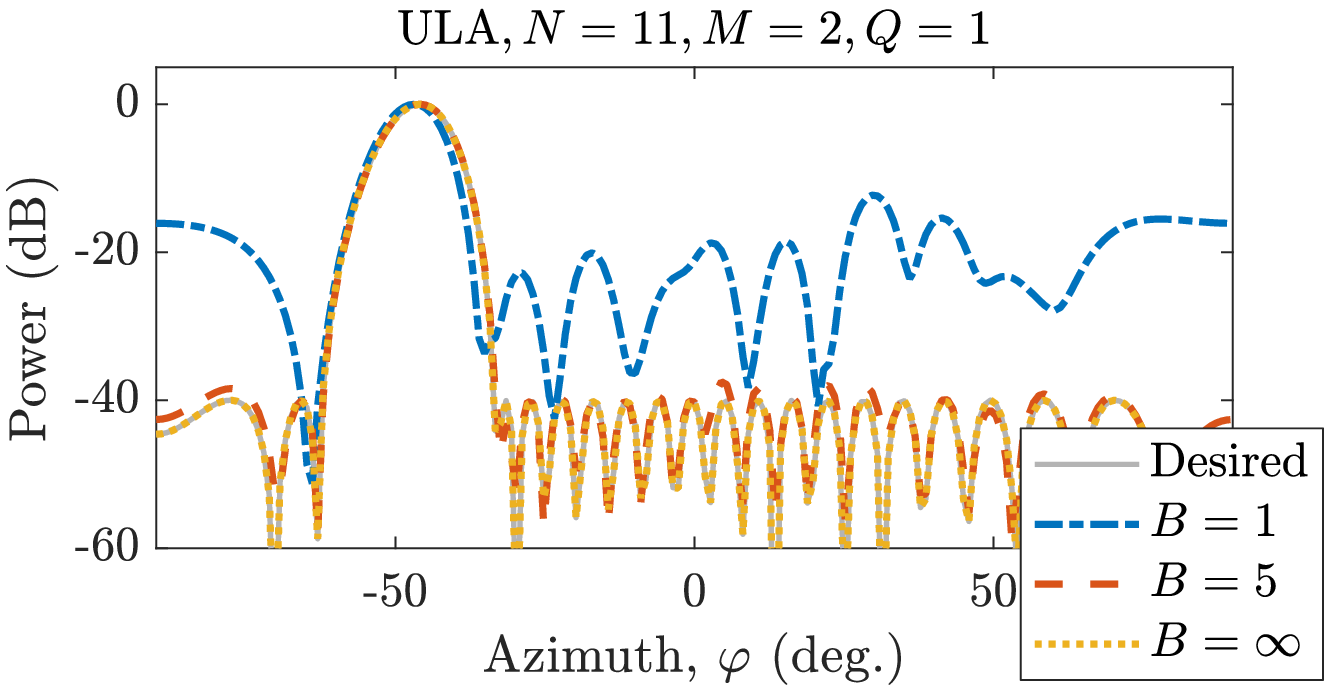}}
	{\includegraphics[width=.49\linewidth]{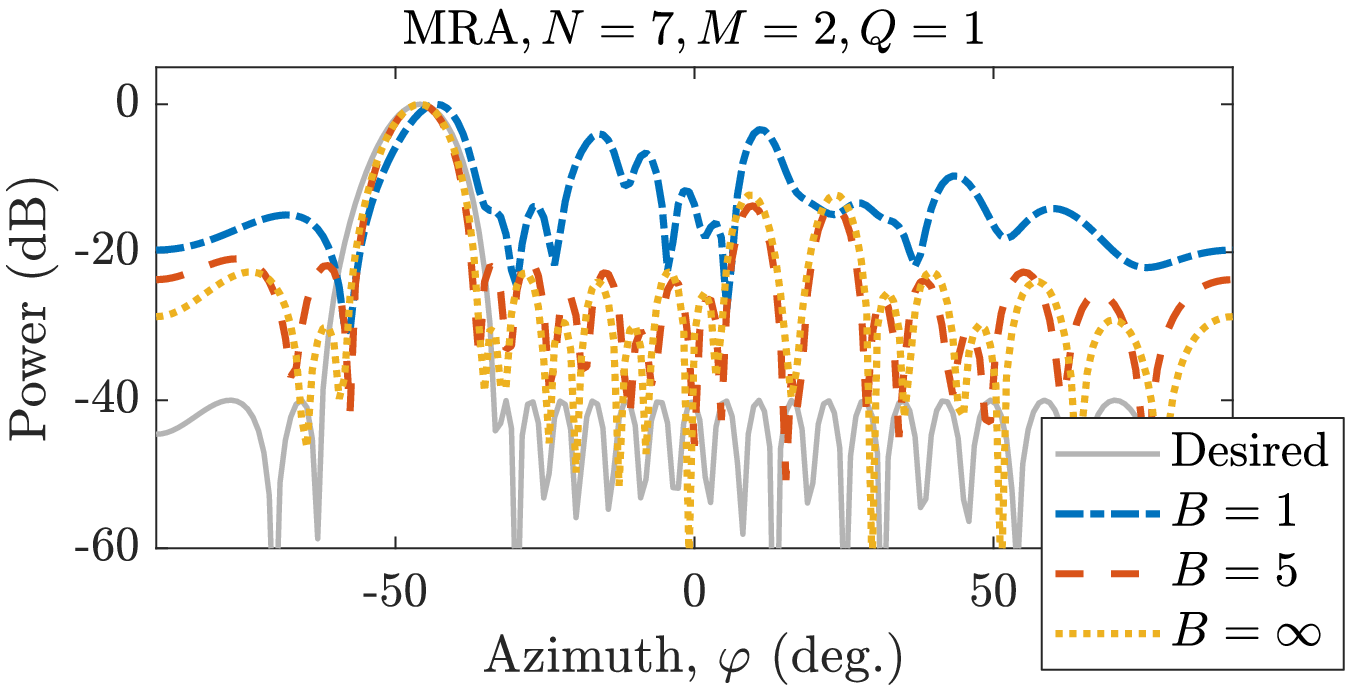}}
	{\includegraphics[width=.49\linewidth]{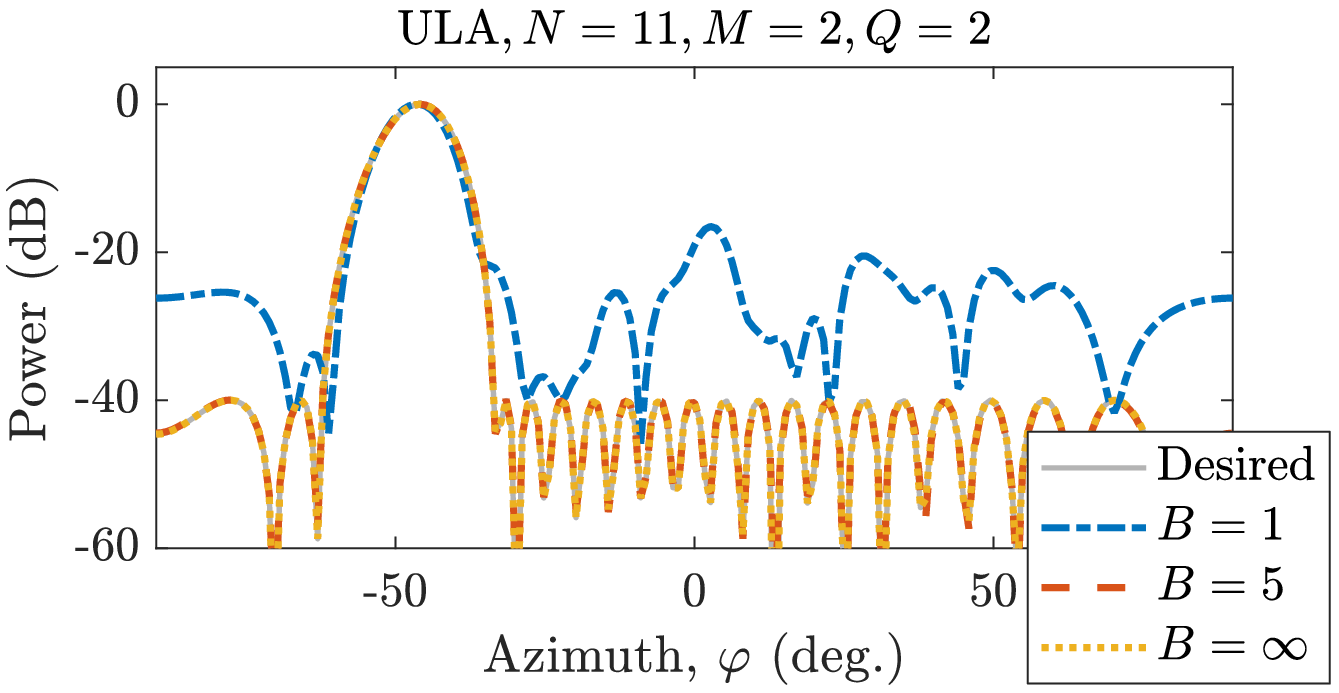}}
	{\includegraphics[width=.49\linewidth]{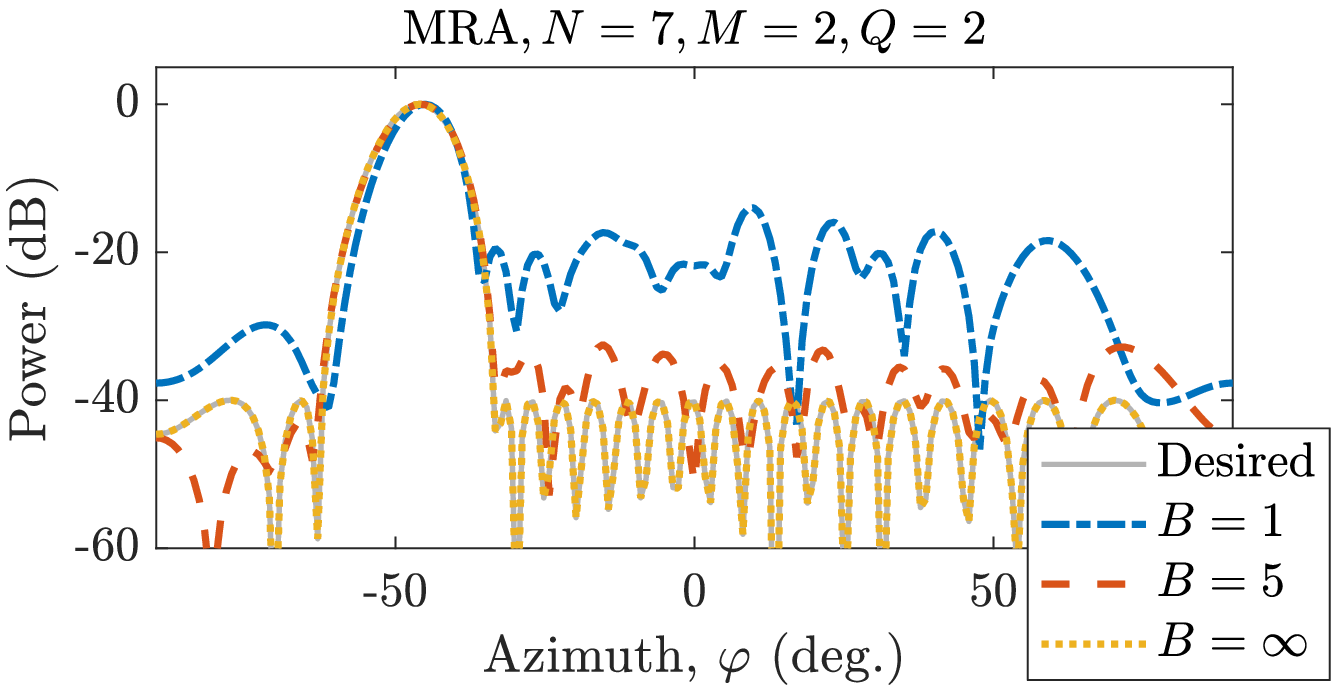}}
	\end{minipage}
	\caption{PSF of hybrid {ULA} (left) and MRA (right) beamformers (two Tx/Rx front ends). The {PSF} improves when increasing the number of component images $ Q $, or phase shift bits $ B$.}\label{fig:bp_lin_h}
	\end{figure}			

{Fig.~\ref{fig:bp_lin_a} shows the realized PSF of the fully analog ($ M=1 $) ULA (left column) and MRA (right column) for $ Q=4$ (top row), $Q=8 $ (bottom row), and $ B\in\{1,5,\infty\} $. Increasing $ Q $ decreases the mismatch between the desired and realized PSFs also in the analog case, although the rate of improvement is slower, and more component image are required compared to the hybrid case ({cf.} Fig.~\ref{fig:bp_lin_h}). When $ B\to \infty $, the ULA and MRA achieve the desired PSF using $ Q=4 $, respectively, $ Q=8 $ component images by application of Theorem~\ref{thm:M1F1Jinf}.}
	\begin{figure}[t]
	\begin{minipage}[b]{1\linewidth}
	\centering
	{\includegraphics[width=.49\linewidth]{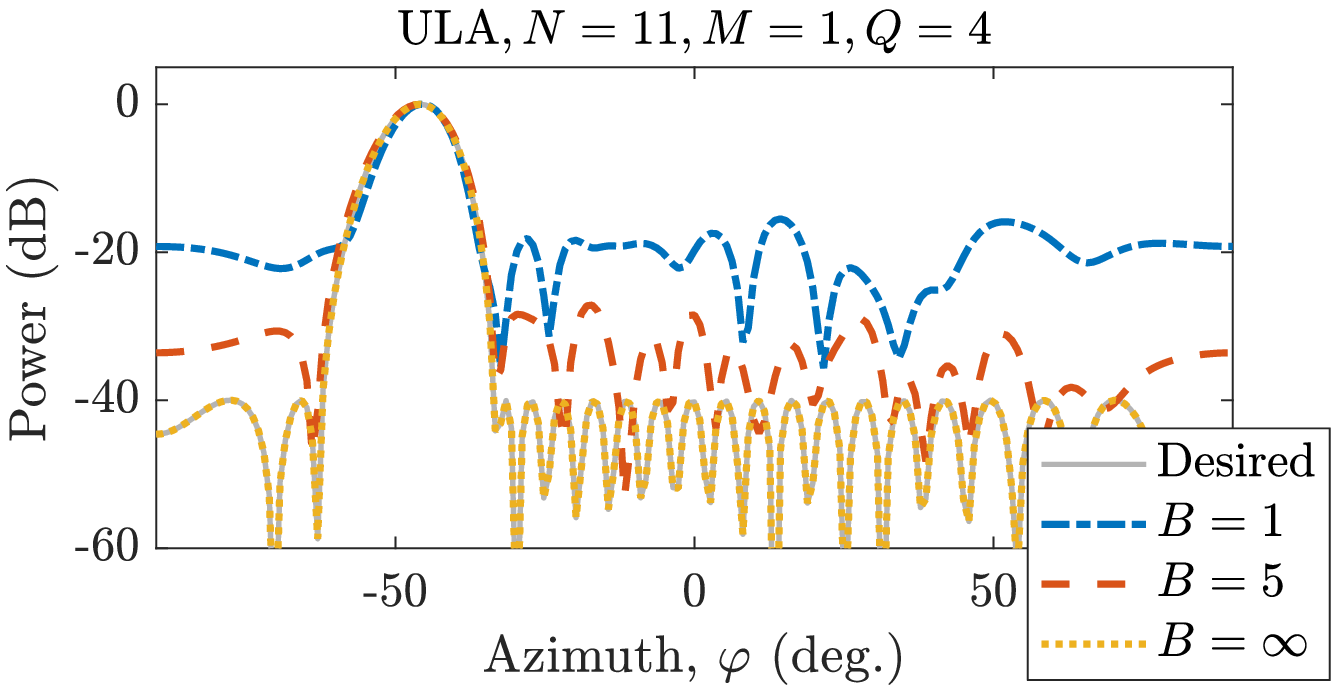}}
	{\includegraphics[width=.49\linewidth]{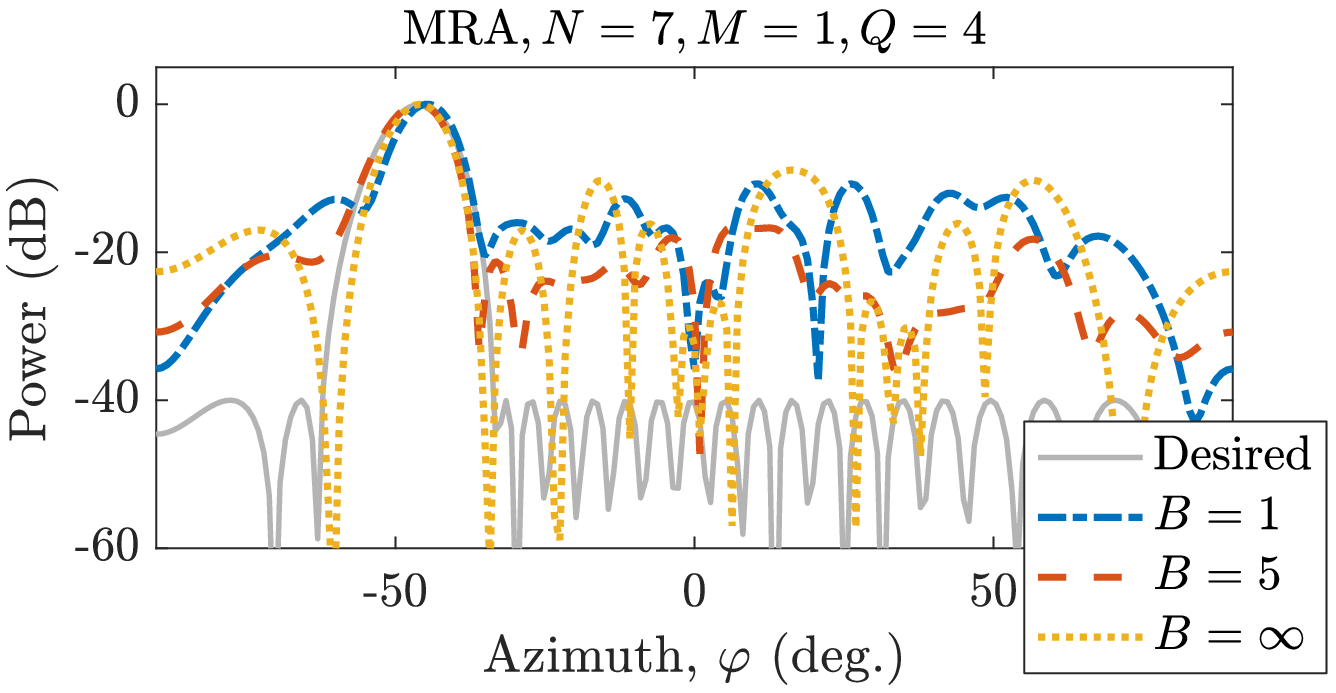}}
	{\includegraphics[width=.49\linewidth]{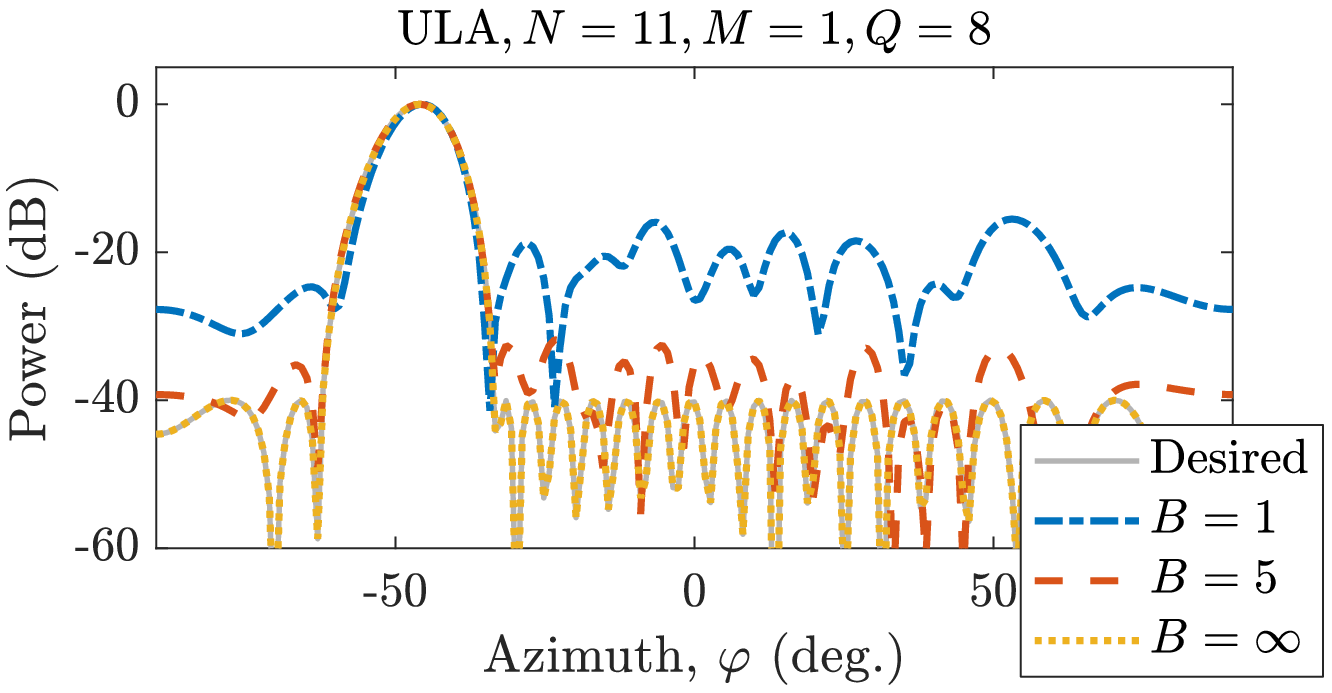}}
	{\includegraphics[width=.49\linewidth]{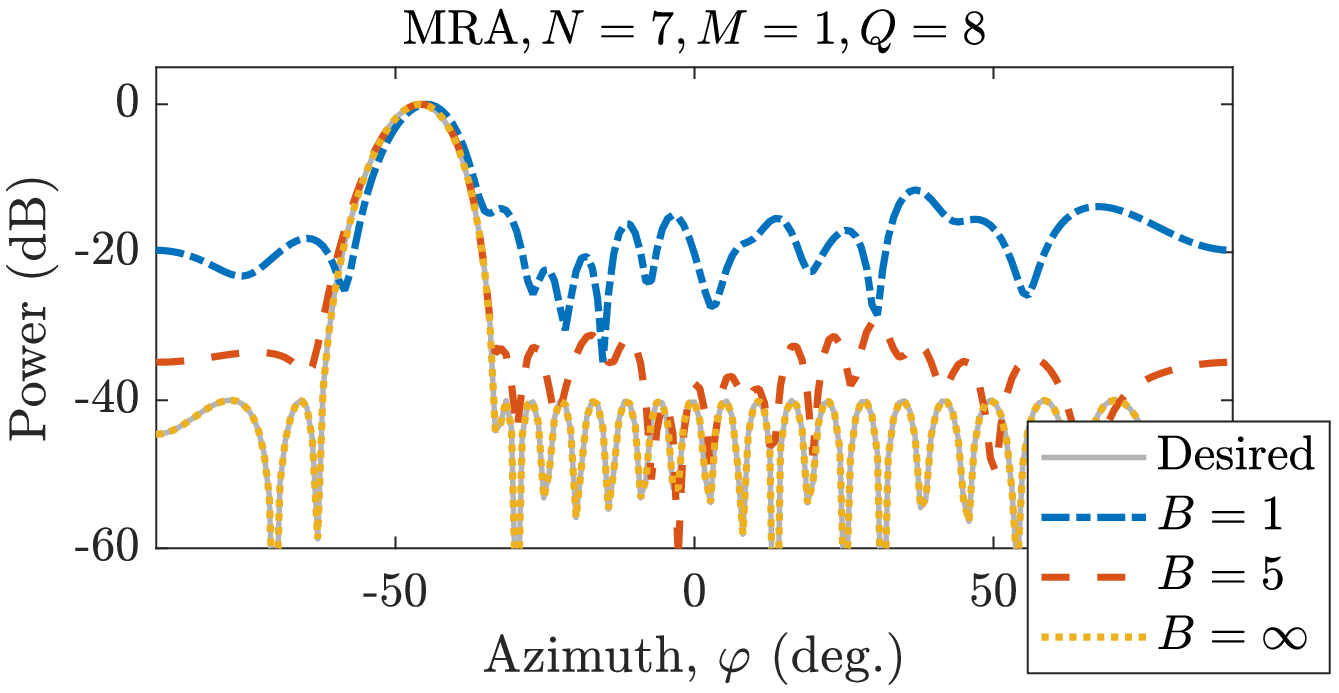}}
\end{minipage}
	\caption{PSF of {fully} analog {ULA} (left) and MRA (right) beamformers (one Tx/Rx front end). A high-fidelity PSF requires more component images than in the hybrid case.} \label{fig:bp_lin_a}
	\end{figure}

	\subsection{Trade-offs among main parameters $ M $, $ Q $ and $ B $}\label{sec:num_to}
	Next, we study the {trade-offs among} the three main {design} parameters $ M$, $Q$, and $ B $. {We fix the number of {quantization} bits $ B $} and evaluate the {relative} approximation error against the number of component images $ Q $ for different numbers of Tx/Rx front ends $ M $. We consider the {linear arrays in Fig.~\ref{fig:arrays_lin}} and the deterministic PSFs in Section~\ref{sec:num_def_psf_det}. Each PSF is steered in $ 100 $ different {scan} directions uniformly sampling the interval $ {[-\pi/2,\pi/2] }$.

	{Fig.~\ref{fig:sweep_ula} shows} {the median error {and 90\% confidence intervals} of the ULA (left column) and MRA (right column) as function of $ Q $ for $ B\!=\!1 $ (top row), $ B\!=\!5 $ (middle row), and $ B\!\to\!\infty $ (bottom row).} The number of Tx/Rx front ends $ M $ has a significant impact on the approximation error of the solution found by Algorithm~\ref{alg:greedy}. When $ M=1 $ (fully analog case), the {error} decreases at a slower rate as a function of $ {B} $ and $ Q $. When {$ M\geq 2 $} and $ B\to\infty $, the {error} {drops abruptly,} since Theorem~\ref{thm:M2F2} can be applied to achieve the {PSF of the} fully digital beamformer using {the same number of} component images. {By Theorem~\ref{thm:M1F1Jinf},} the fully analog beamformer requires four times as many component images as the fully digital beamformer. Consequently, the discontinuities for $ M\!=\!1, B\!\to\!\infty $ occur {when $Q\!=\!4,8,12,16,$ etc.}
	\begin{figure}[t]
		\begin{minipage}[b]{1\linewidth}
			\centering
			{\includegraphics[width=.49\linewidth]{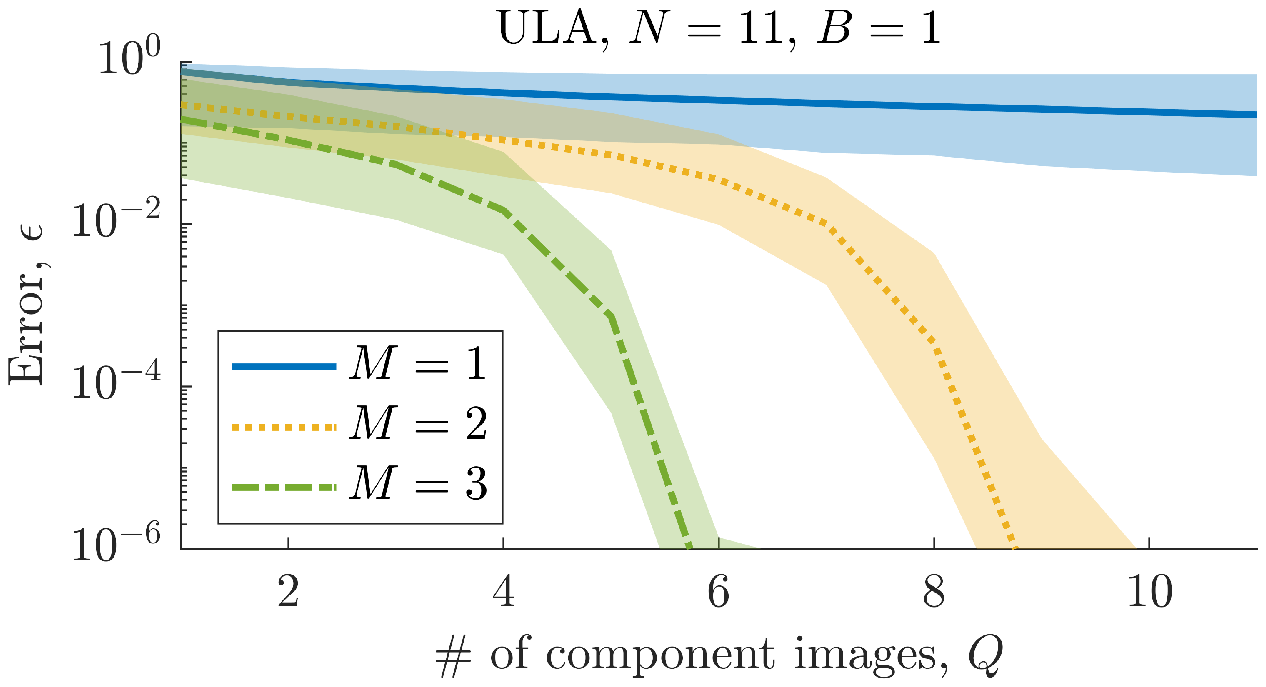}}
			{\includegraphics[width=.49\linewidth]{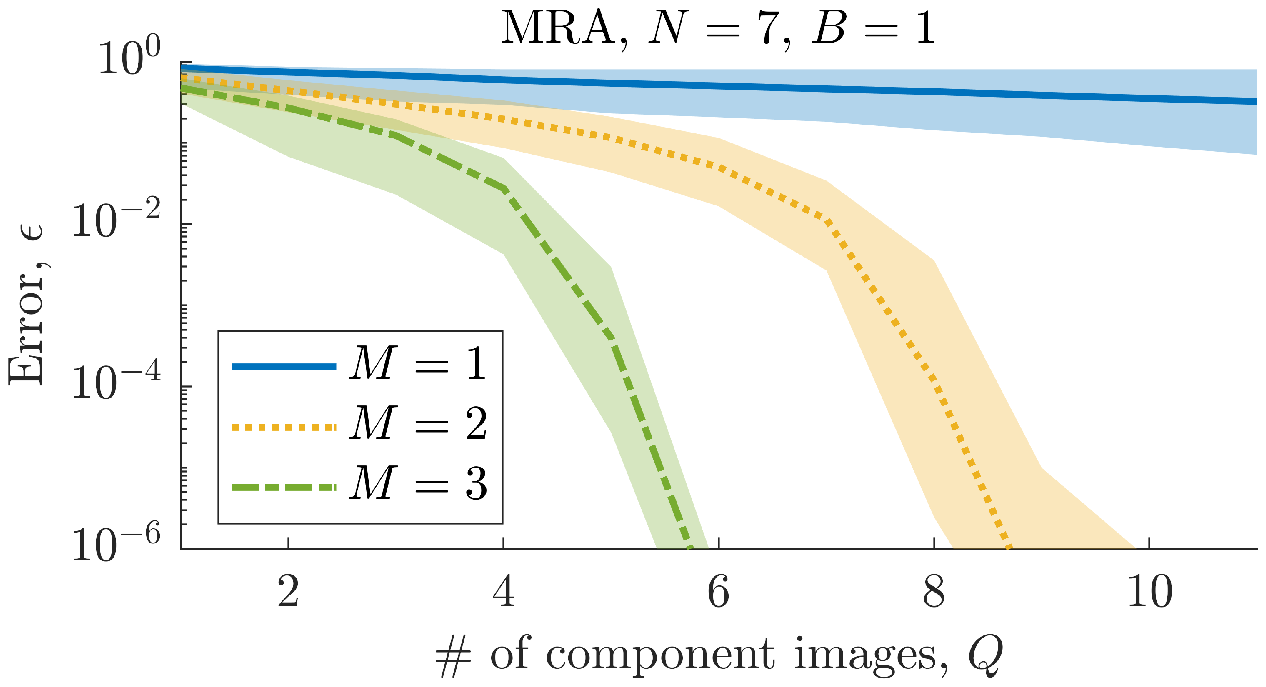}}
			{\includegraphics[width=.49\linewidth]{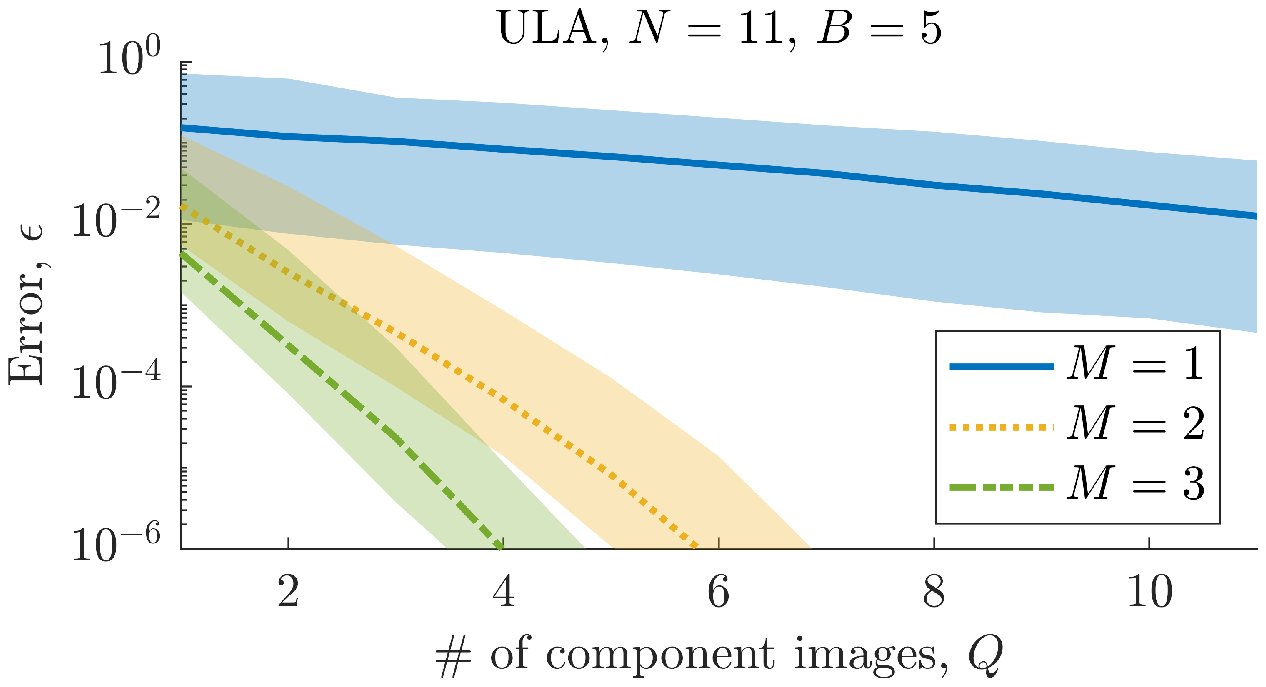}}
			{\includegraphics[width=.49\linewidth]{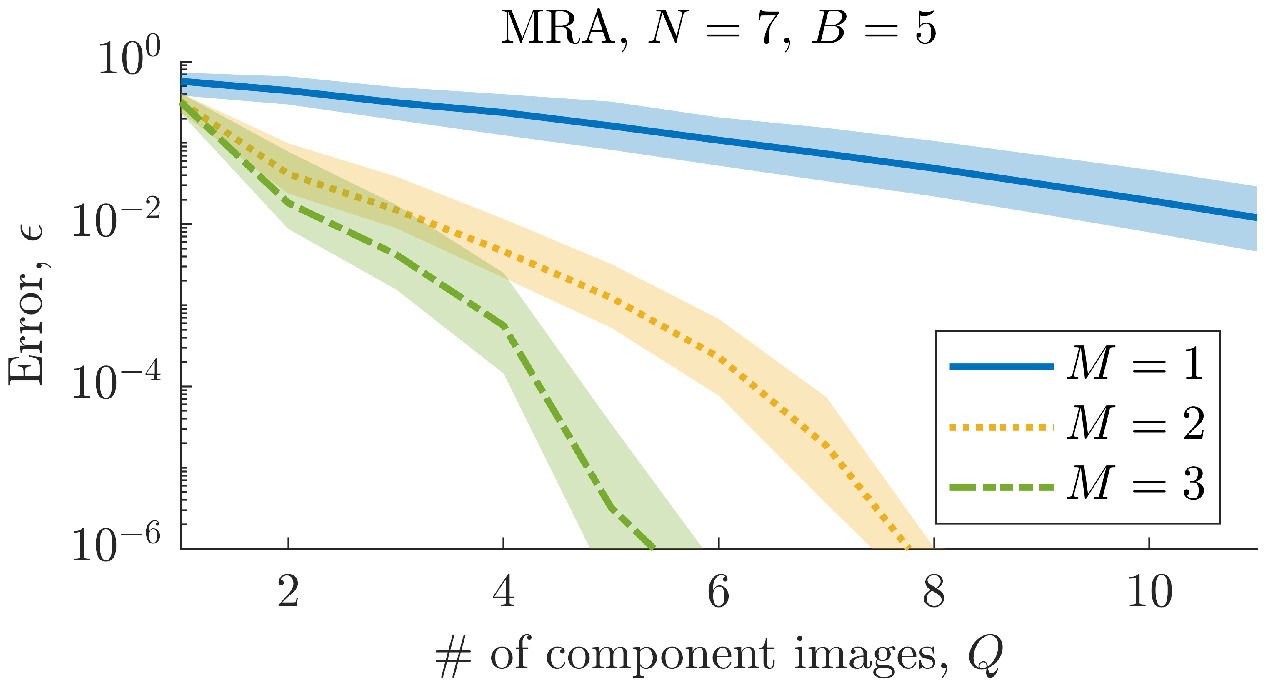}}
			{\includegraphics[width=.49\linewidth]{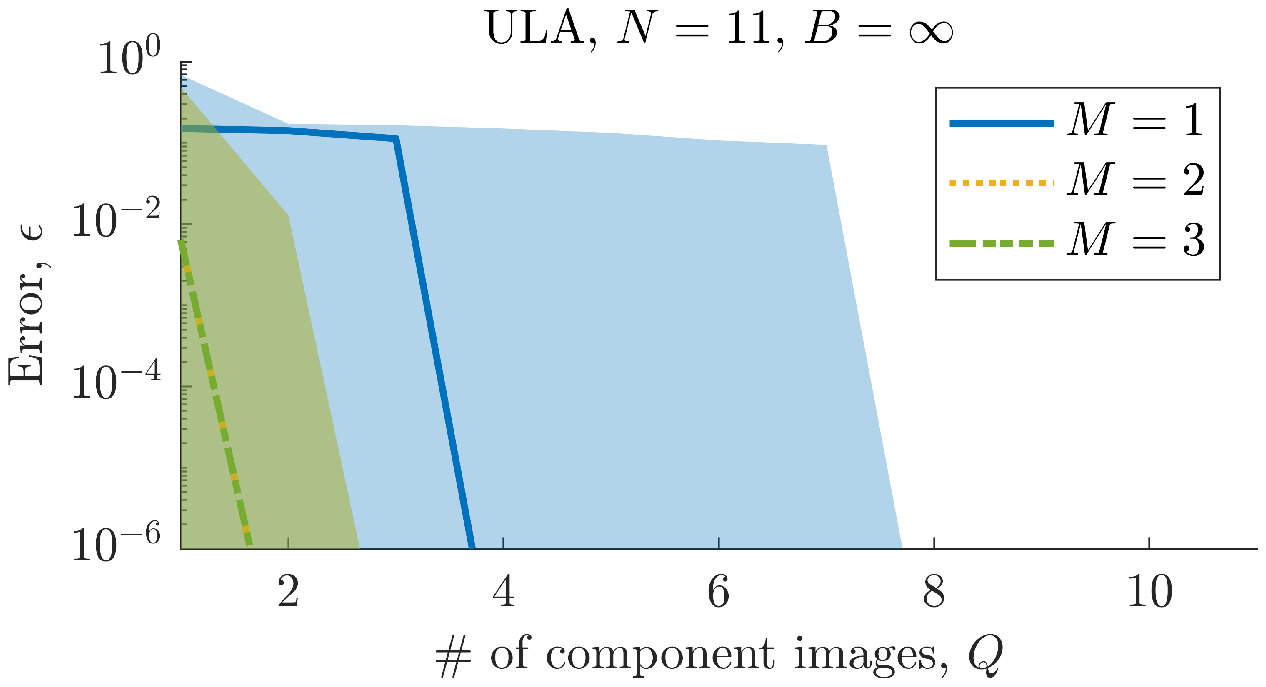}}
			{\includegraphics[width=.49\linewidth]{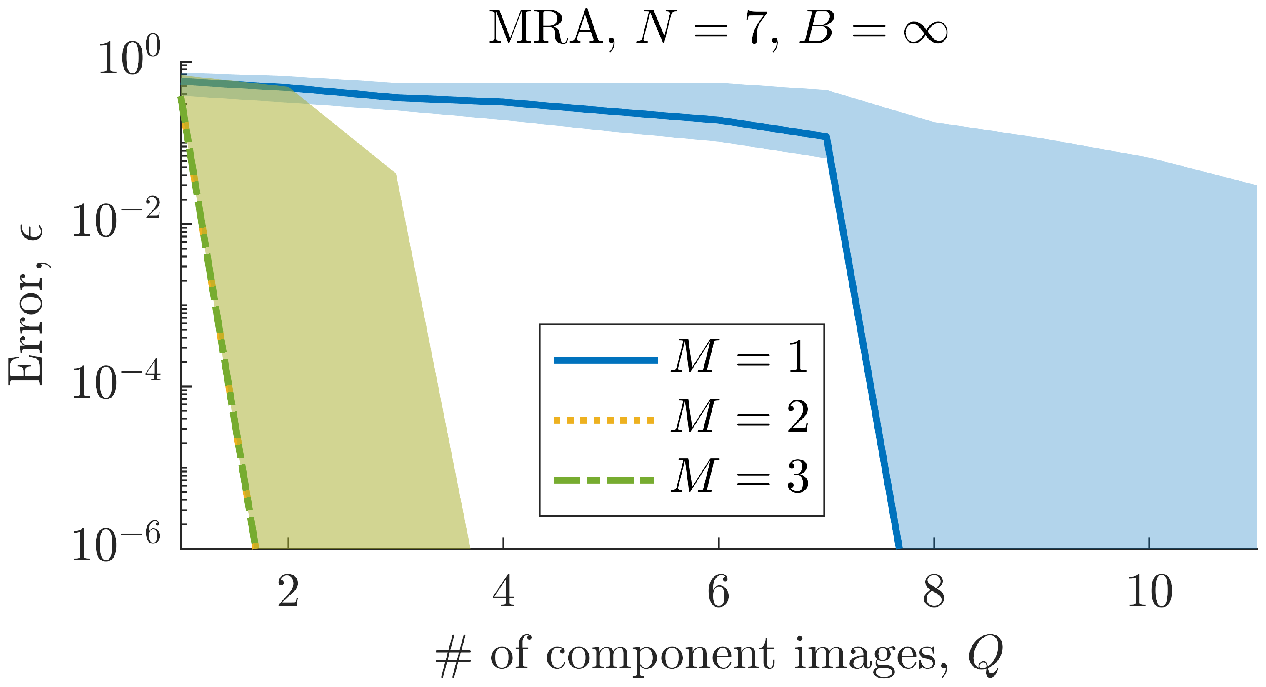}}
		\end{minipage}
		\caption{Median error and {90\% confidence intervals} of hybrid ULA (left) and MRA (right) beamformers. Generally, the solution improves more by increasing $ M$ or $Q $, rather than~$ {B} $.}\label{fig:sweep_ula}
	\end{figure}

	Fig.~\ref{fig:sweep_ula} suggests that $ M $ and $ Q $ have a larger impact on the realized PSF than $ {B} $. This is not surprising, since $ M $ and $ Q $ control the dimensions of $ \mathbf{F}_\txt{x} $ and $\mathbf{C}_\txt{x}$ in \eqref{eq:W_Bg}, whereas $ {B} $ only adjusts the quantization of $ \mathbf{F}_\txt{x} $. In other words, $ {B} $ does not affect the number of phase shifters and digital weights, unlike $ M $ and $ Q $. When $ {B} $ is finite, it is difficult to establish whether increasing $ M $ or $ Q $ will generally have a larger impact on the {error} of the realized {PSF}. However, when $ {B} $ is infinite, $ Q $ is practically the only free parameter, because $ M\!=\!2 $ suffices to achieve any fully digital factorization of $ \mathbf{W} $ by Theorem~\ref{thm:M2F2}.
	
	\subsection{Coherent imaging with planar arrays}\label{sec:examples_plan}
	{Fig.~\ref{fig:scatterers} shows the scattering scene {imaged by} the two sum co-array equivalent planar array configurations in Section~\ref{sec:num_def_plan}.} {The continuous rough surfaces of the reflecting objects are modeled by $ K = 6424$ i.i.d. points scatterers following a complex normal distribution: $ \gamma_k \sim \mathcal{CN}(\frac{1}{\sqrt{2K}},\frac{1}{2K})$. The variance of the measurement noise in \eqref{eq:gamma_hat} is $ \sigma^2=1$.} The desired (vectorized) two-dimensional co-array weighting is $ \mathbf{w}_\Sigma= \mathbf{w}_\txt{DC}\otimes \mathbf{w}_\txt{DC} $, where $ \mathbf{w}_\txt{DC}\in\mathbb{R}^{{17}} $ is a one-dimensional Dolph-Chebyshev window with $ -40 $ dB sidelobes. We evaluate the PSF and image at {$ {256\times 256} = 65536 $} pixels where the {reduced} azimuth and elevation angles {$ \sin(\varphi)$ and $ \cos(\theta)$} are sampled uniformly at {$ 256 $} points each in the interval {$ [-1,1] $.}
	\begin{figure}[t]
		\centering
		\includegraphics[width=0.5\textwidth]{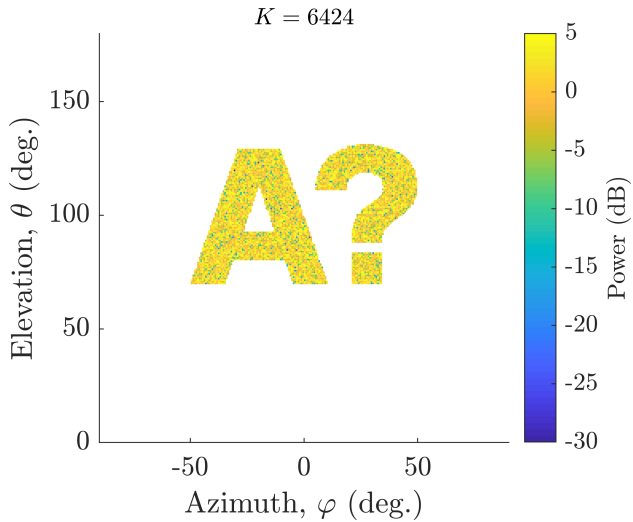}
		\caption{Point scatterer distribution and normalized reflectivity.}\label{fig:scatterers}
	\end{figure}

	Fig.~\ref{fig:planar_psf_images}~(a) shows the noiseless PSF and the noisy image of the scattering scene produced by the fully digital URA. Algorithm~\ref{alg:altmin_d} achieves the desired PSF using a single component image. In comparison, the fully digital BA {in Fig.~\ref{fig:planar_psf_images}~(b)} requires $ Q=6 $ component images to attain the same PSF. By Theorem~\ref{thm:M2F2}, the hybrid BA achieves exactly the same PSF as the fully digital BA, when $ M=2 $, $ {B}\to\infty $, and $ Q=6 $. Remark~\ref{thm:M1F2} {of Section~\ref{sec:prior_work}} also allows us to reduce the number of Tx/Rx front ends of the hybrid BA to $ M=1 $, and still achieve the PSF of the fully digital BA using $Q=6 $ component images. Alternatively, we could reduce the number of phase shifters from $ 128 $ to $ 64 $ at the expense of increasing the number of component images to $ Q=4\cdot 6 = 24$ {(cf. Theorem~\ref{thm:M1F1})}. 
	
	Fig.~\ref{fig:planar_psf_images}~{(c)} shows the PSF and image produced by the hybrid BA using Algorithm~\ref{alg:greedy} with $ {B}=5 $ bit phase {quantization}, $ M=2 $ {Tx/Rx front ends}, and {$ Q=8 $ component images.} The phase quantization slightly degrades the PSF compared to Fig.~\ref{fig:planar_psf_images}~{(b)}. However, the effect on the final image is {not drastic}. The main difference between the images {produced by the BA (both fully digital and hybrid) and the URA} is the lower noise level in the {latter}. Since the URA has more elements than the BA, it has at most {$20\log(289/64)^{\frac{3}{2}}\!\approx\!20 $} dB {\cite[Eq.~(20)]{johnson2005coherentarray}} higher array gain. {However, the difference in the final SNR may be smaller than the difference in array gain, depending on the transmit power used in each component image.}

	{Lastly, we note that the main computational effort in forming the discussed images stems from the number of times the beamforming weights need to be computed (see Section~\ref{sec:remarks_on_comp_complexity}). In the case of continuous phase shifters, solving \eqref{p:h} once suffices, since the array may be steered in an arbitrary direction by applying appropriate phase shifts to its elements. This was the case in Fig.~\ref{fig:planar_psf_images}~(a) and (b), where a MATLAB \cite{matlab2018} implementation of Algorithm~\ref{alg:altmin_d} yielded a solution in a matter of seconds on a $ 2.3 $ GHz Intel Core i5 processor. In the case of Fig.~\ref{fig:planar_psf_images}~(c), the beamforming weights were recomputed for each pixel due to the quantized phase shifters. {While a single call of Algorithm~\ref{alg:greedy} was on the order of a second, the complete image took {$32$} hours of processor time to compute (elapsed time was {$ 20$} hours).} {It is important to point out that, although it may be computationally intense, the beamforming weights can be computed offline and in parallel.}}
		\begin{figure}[h]
		\begin{minipage}[b]{1\linewidth}
			\centering
			\includegraphics[width=.39\textwidth]{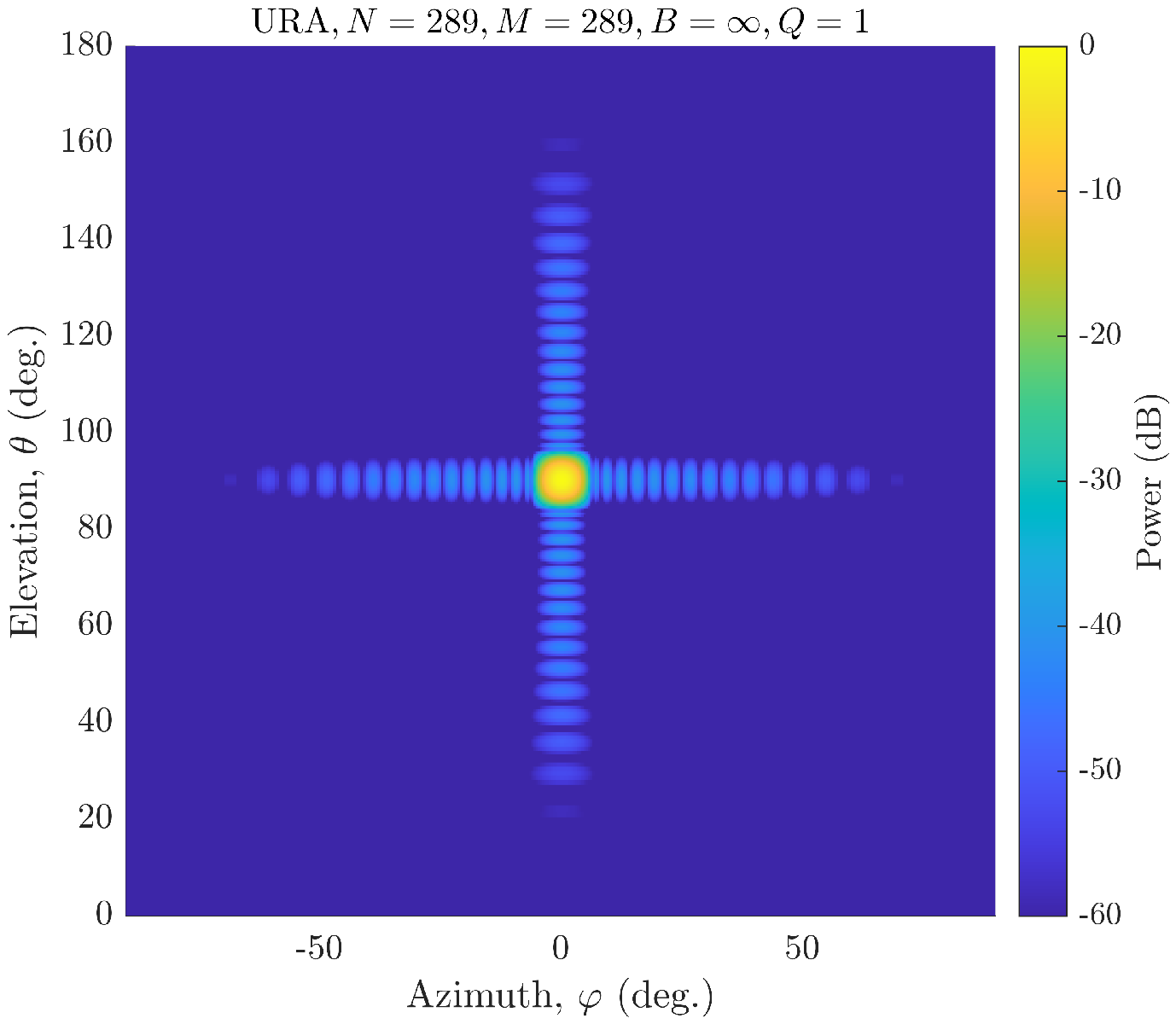}
			\includegraphics[width=.39\textwidth]{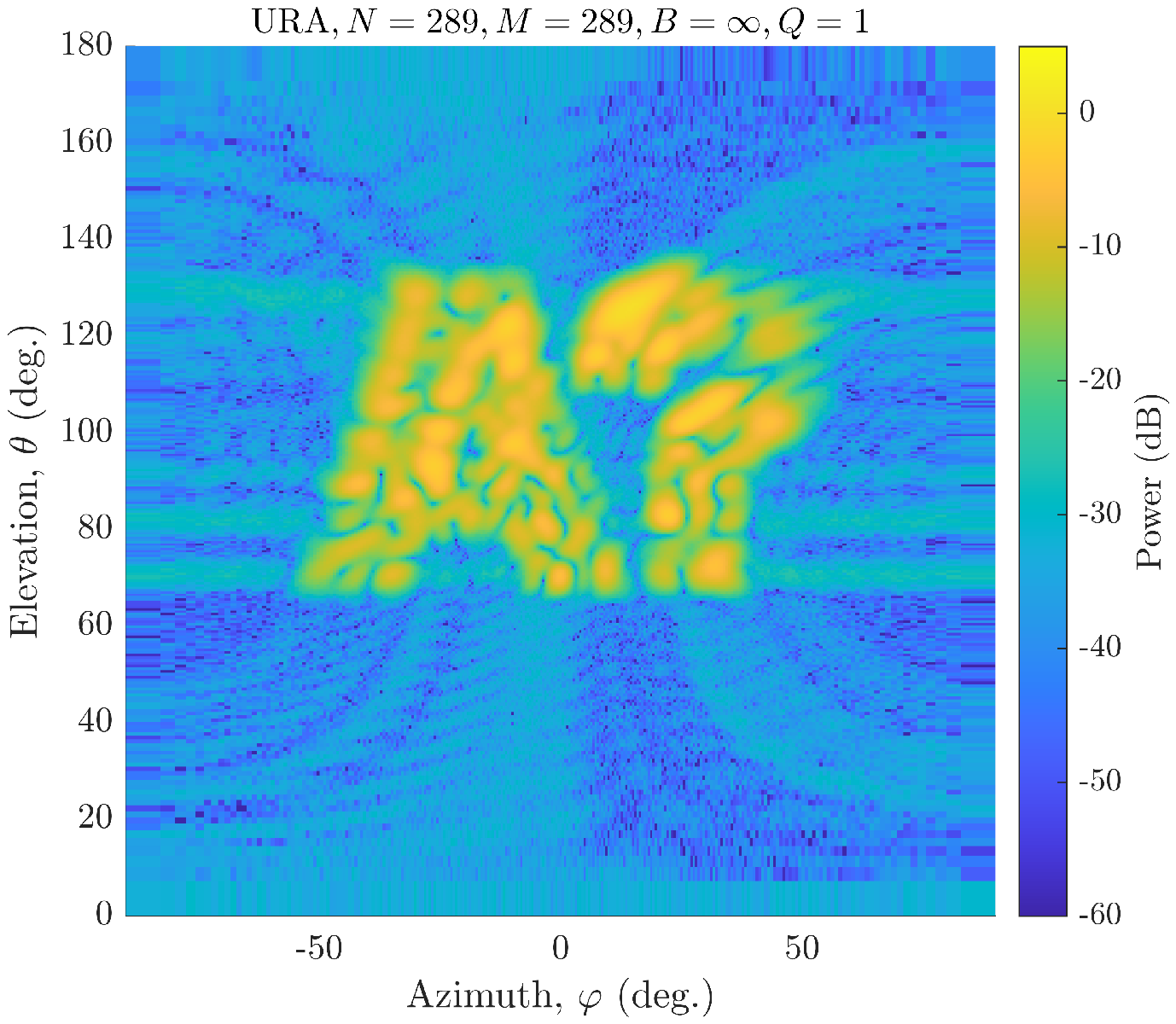}
			\centerline{(a) {Fully} digital URA ($M=289, Q=1 $)}\medskip
		\end{minipage}
		\hfill
		\begin{minipage}[b]{1\linewidth}
			\centering
			\includegraphics[width=.39\textwidth]{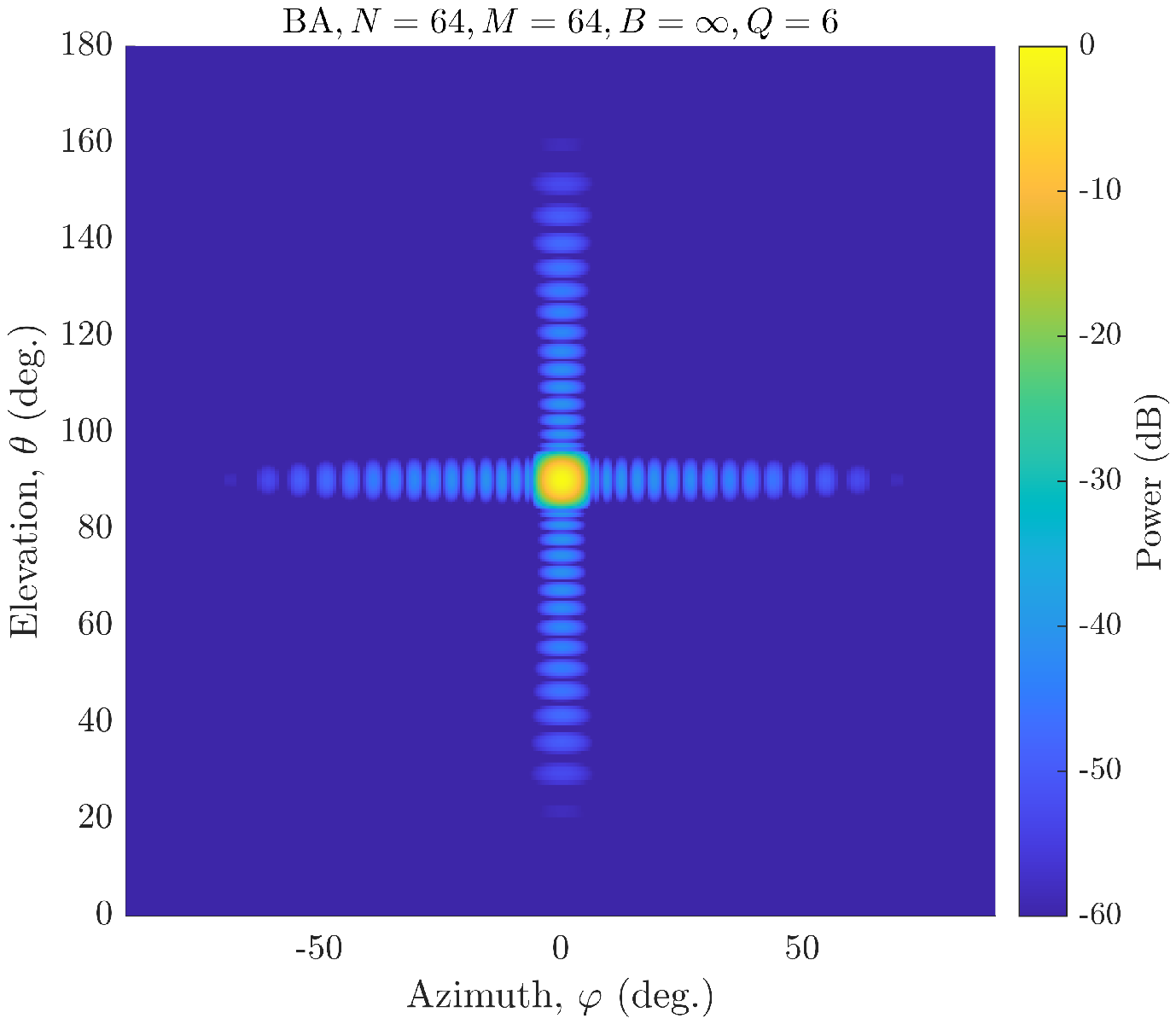}
			\includegraphics[width=.39\textwidth]{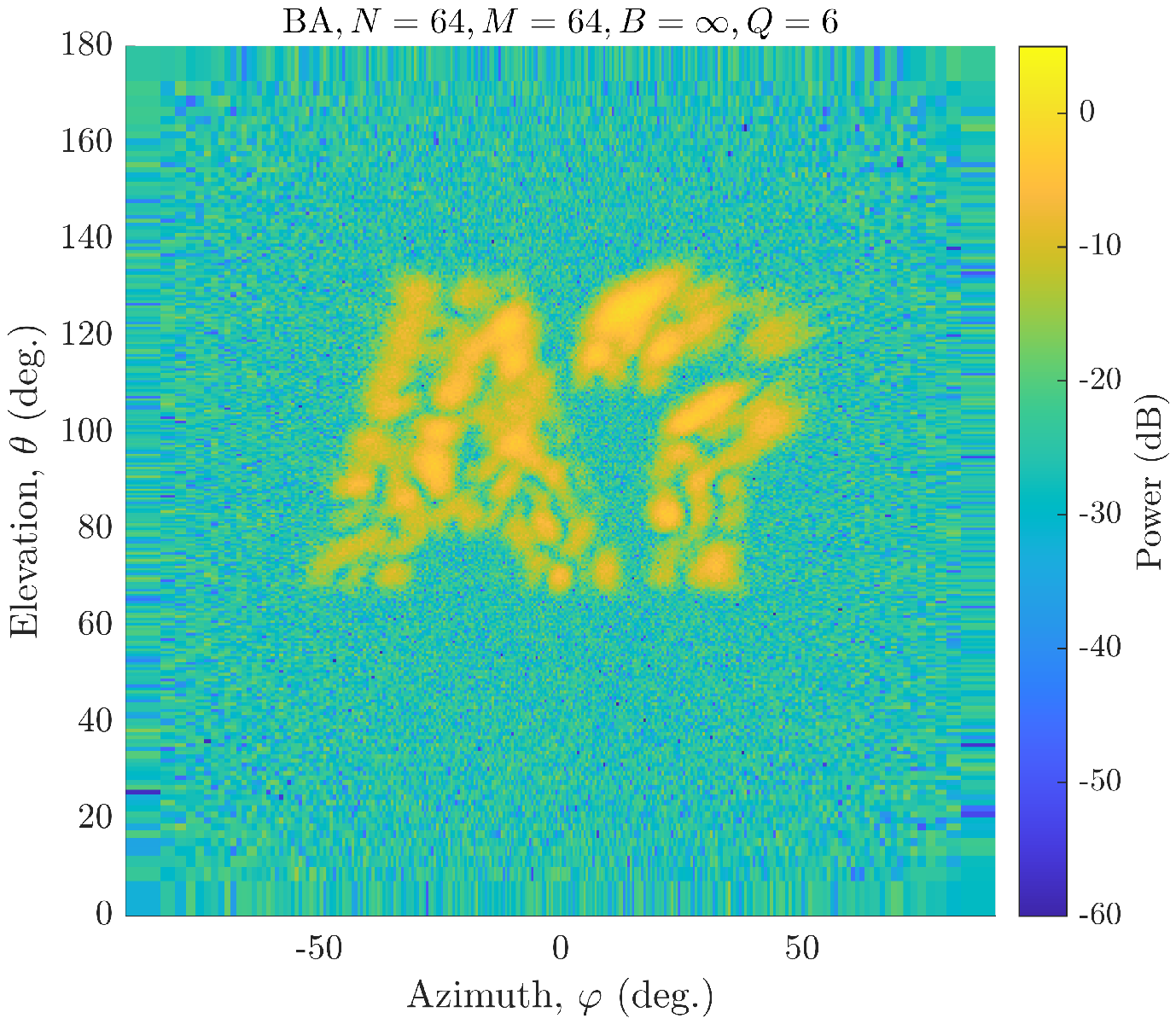}
			\centerline{(b) {Fully} digital BA ($M=64, Q=6 $)}\medskip
		\end{minipage}
		\hfill
		\begin{minipage}[b]{1\linewidth}
			\centering
			\includegraphics[width=0.39\linewidth]{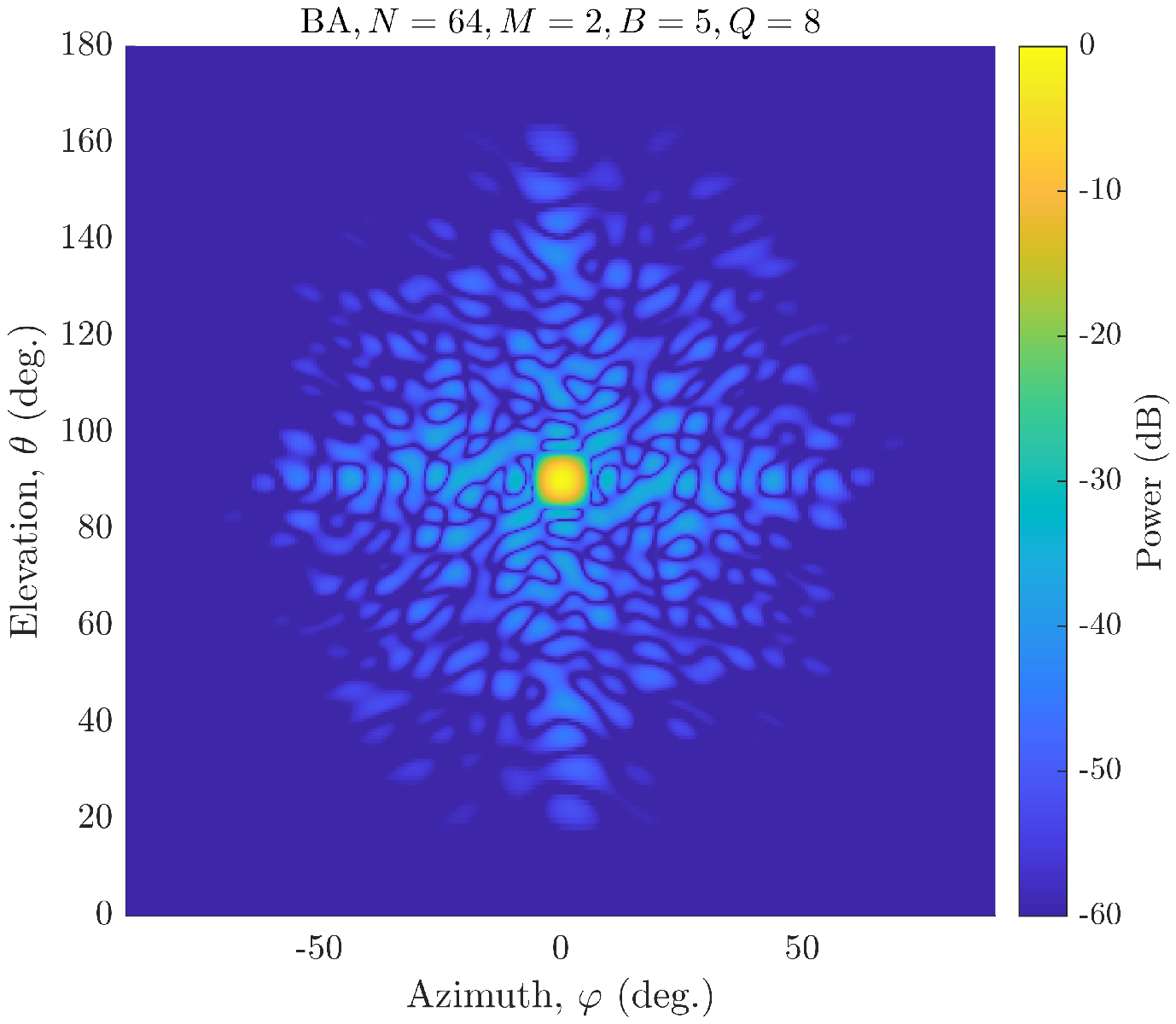}
			\includegraphics[width=0.39\linewidth]{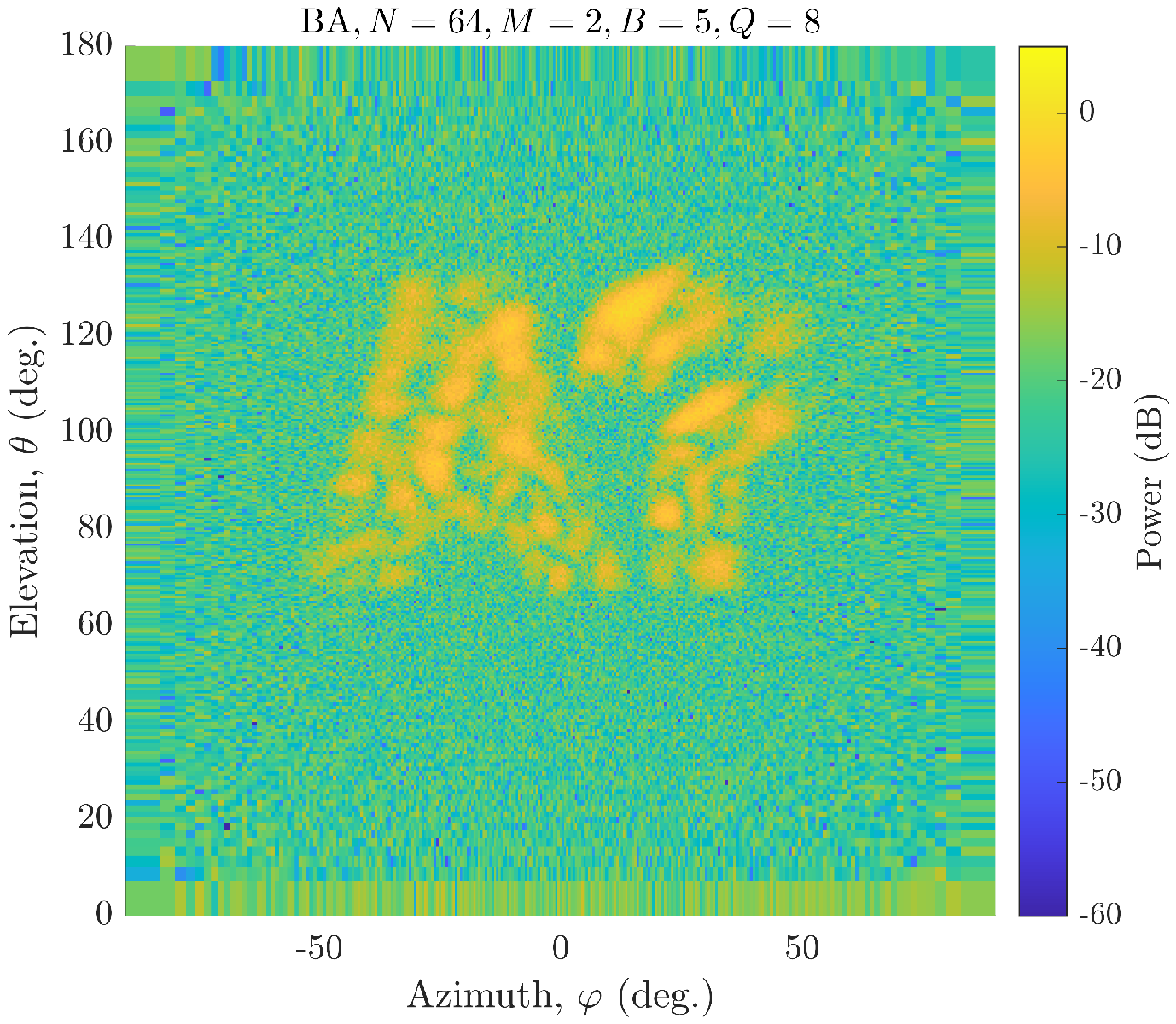}
			\centerline{(c) Hybrid BA ($M=2, Q=8,  {B}= 5 $)}\medskip
		\end{minipage}
		\caption{{PSF} (left) and image of {scatterer distribution} (right). The hybrid BA attains a comparable image to the digital URA, at the expense of more component images and lower SNR.}\label{fig:planar_psf_images}
	\end{figure}

	\section{Conclusions} \label{sec:conclusions}
	This paper considered active sensing using phased arrays with a hybrid beamforming architecture. The hybrid beamformers consist of a few Tx/Rx front ends, each connected to a network of {inexpensive} analog phase shifters with digitally controlled phase. {Such hybrid beamformers} {have} low cost and power consumption, {which} may be attractive in {applications, such as}, medical radar/ultrasound {or automotive radar}.
	
	We formulated an optimization problem, where the transmit and receive {hybrid beamforming} weights are jointly {found}, such that a desired PSF is achieved using as few component images as possible. We proposed numerical methods for finding solutions in both the fully digital, as well as the hybrid and fully analog cases. Furthermore, we derived bounds on the maximum number of component images required by some of these hybrid and analog architectures for attaining {the same} PSF {as} their fully digital counterparts. {Simulations demonstrated} that combining sparse arrays with hybrid beamforming allows for significant reductions in the number of elements and front ends. In particular, we showed {a design example, where} a hybrid sparse planar array {attained} the PSF of a $ 17\times 17 $ element fully digital uniform square array using $ 78 \%$ fewer elements and $ 99 \%$ fewer Tx/Rx front ends. These hardware savings come at the price of {an} increase in the number of component images {to $ 8 $} and {a} $ 20 $ dB reduction in array gain. {However, multiple transmissions may simultaneously increase SNR.} We observe that increasing the number of front ends or component images generally leads to {higher fidelity} PSFs than increasing the number of phase shift bits. {Generally,} only a few front ends {are} necessary for {achieving} the beamforming capabilities of a fully digital array. Indeed, two front ends are sufficient in the case of continuous phase shifters. 
	
	{In future work, the proposed hybrid beamforming framework could be extended to the more general MIMO case, with several (possibly correlated) waveforms. It would also be of practical interest to explore constraints such as quantized ADCs/DACs or unit-modulus {transmit} beamforming weights.}
		
\appendices

\section{Proof of Lemma~\ref{thm:M2F2_zhang}}\label{proof:thm:M2F2_zhang}
{Let} $ w_n\in\mathbb{C} $ be decomposed as 
\begin{align}
	w_n=|{c}_1|e^{j(\measuredangle {c}_1+\phi_n)}+|{c}_2|e^{j(\measuredangle {c}_2+\vartheta_n)},\label{eq:wn}
\end{align}
where {the} phases $ \phi_n,\vartheta_n\!\in\![0,2\pi) $ are functions of index {$ n\!=\!1,2,\ldots,N $}, but {the} complex amplitudes $ {c}_1,{c}_2\!\in\!\mathbb{C} $ are not. {By the law of cosines, angles $ \phi_n$ and $\vartheta_n $ can be written as}
\begin{align*}
	\phi_n &= \measuredangle w_n - \measuredangle {c}_1+\cos^{-1}\Big(\frac{|w_n|^2+|{c}_1|^2-|{c}_2|^2}{2|w_n||{c}_1|}\Big)\\
	\vartheta_n&= \measuredangle w_n-\measuredangle {c}_2 - \cos^{-1}\Big(\frac{|w_n|^2+|{c}_2|^2-|{c}_1|^2}{2|w_n||{c}_2|}\Big).
\end{align*}
{Assume without loss of generality that $ |{c}_1|\geq|{c}_2| $.} {It then follows from elementary trigonometry that \eqref{eq:wn} holds for all $ n $ only if} $ {c}_1,{c}_2 $ satisfy conditions $ |{c}_1|+|{c}_2|\geq \max_n(|w_n|)$, and $ |{c}_1|-|{c}_2|\leq \min_n(|w_n|) $. {For example, a particularly convenient choice is $ c_1=c_2=\max_n |w_n|/2 $, which leads to
	\begin{align*}
	\phi_n &= \measuredangle w_n +\cos^{-1}\Big(\frac{|w_n|}{\max_m |w_m|}\Big)\\
	\vartheta_n&= \measuredangle w_n - \cos^{-1}\Big(\frac{|w_n|}{\max_m |w_m|}\Big).
	\end{align*}		
	Using matrix notation then yields \eqref{eq:B_opt} and \eqref{eq:g_opt}.}\hfill\qedsymbol

\section{Proof of Lemma~\ref{thm:M1F1_approx}} \label{proof:thm:M1F1_approx}
We seek $ \argmin_{{c}\in\mathbb{C},\mathbf{f}\in\mathscr{F}(\infty)}\|\mathbf{w}-{c}\mathbf{f}\|_2^2$, or equivalently $ \argmin_{{c}\in\mathbb{C},\boldsymbol{\phi}\in\mathbb{R}^N} J(c,\boldsymbol{\phi})$, where {$ f_n\!=\!e^{j\phi_n} $ and}
\begin{align*}
	J(c,\boldsymbol{\phi})&=\sum_{n=1}^{N}\big||w_n|e^{j\measuredangle w_n}\!-\!|{c}|e^{j(\phi_n+\measuredangle{c})}\big|^2\\
	&=\sum_{n=1}^{N}|w_n|^2\!+\!|{c}|^2\!-\!2|w_n||{c}|\cos(\phi_n\!+\!\measuredangle {c}\!-\!\measuredangle w_n).
\end{align*}
The minimizer is $ \phi_n\!=\!\measuredangle w_n - \measuredangle {c} $, {which yields $ f_n\!=\!e^{j(\measuredangle w_n-\measuredangle {c})} $}. The least squares solution of $ {c} $ is then given by 
\begin{align*}
	{c}=(\mathbf{f}^\txt{H}\mathbf{f})^{-1}\mathbf{f}^\txt{H}\mathbf{w},
\end{align*}
where $ \mathbf{f}^\txt{H}\mathbf{f} = N$ and $\mathbf{f}^\txt{H}\mathbf{w} = \|\mathbf{w}\|_1 e^{j\measuredangle 
	{c}}$. Consequently, $ |{c}|\!=\!\|\mathbf{w}\|_1/N$, and 
\begin{align*}
	J\Big(\frac{\|\mathbf{w}\|_1}{N}e^{j\measuredangle 
		{c}},\measuredangle \mathbf{w} - \measuredangle {c}\Big)&=\sum_{n=1}^N |w_n|^2\!+\!\frac{\|\mathbf{w}\|_1^2}{N^2}-2\frac{|w_n|\|\mathbf{w}\|_1}{N}\\
	&=\|\mathbf{w}\|_2^2\!-\!\|\mathbf{w}\|_1^2/N.
\end{align*}
Since $ \measuredangle {c}$ is a free parameter, we may select $ \measuredangle {c}\!=\!0$ for simplicity, which yields \eqref{eq:b_opt_1} and \eqref{eq:g_opt_1}.\hfill\qedsymbol

\section{Proof of Lemma~\ref{thm:M1F1}} \label{proof:thm:M1F1}
{Note} that any $ \mathbf{w}\!=\!\mathbf{Fc}$ can be written as $ \mathbf{w}\!=\!\sum_{m=1}^M{c}_m \mathbf{F}_{:,m}$, where $ \mathbf{F}_{:,m} $ is the $ m $th column of matrix $ \mathbf{F}\!\in\!\mathbb{C}^{N\times M}$, and $ {c}_m\!\in\!\mathbb{C}$ is the $ m $th element of vector $\mathbf{c}\!\in\!\mathbb{C}^M $. It follows that 
\begin{align*}
	\mathbf{W}\!&=\!\sum_{\tilde{q}=1}^{Q}\mathbf{w}_{\txt{r},\tilde{q}}\mathbf{w}_{\txt{t},\tilde{q}}^\txt{T}\\
	&=\!\sum_{\tilde{q}=1}^{Q}\Bigg(\sum_{m_\txt{r}=1}^{M_\txt{r}}[\mathbf{c}_{\txt{r},\tilde{q}}]_{m_\txt{r}}[\mathbf{F}_{\txt{r},\tilde{q}}]_{:,m_\txt{r}}\Bigg)\Bigg(\sum_{m_\txt{t}=1}^{M_\txt{t}}[\mathbf{c}_{\txt{t},\tilde{q}}]_{m_\txt{t}}[\mathbf{F}_{\txt{t},\tilde{q}}]^\txt{T}_{:,m_\txt{t}}\Bigg)\\
	&=\!\sum_{{q}=1}^{QM_\txt{r}M_\txt{t}}{c}_{\txt{r},{q}}{c}_{\txt{t},{q}}\mathbf{f}_{\txt{r},{q}}\mathbf{f}_{\txt{t},{q}}^\txt{T}.
\end{align*}
{A feasible choice relating indices $ \tilde{q}, m_\txt{r} $ and $ m_\txt{t} $ to index $ q $ is} $ \tilde{q}\!=\!\lceil{q}/(M_\txt{r}M_\txt{t}) \rceil$; $ m_\txt{r} = \lceil (1+({q}-1) \bmod M_\txt{r}M_\txt{t})/M_\txt{t} \rceil $; and $ m_\txt{t}= 1+({q}-1) \bmod M_\txt{t}$. {This then yields \eqref{eq:bx_analog} and \eqref{eq:gx_analog}.}\hfill\qedsymbol
	
\bibliographystyle{IEEEtran}
\bibliography{IEEEabrv,references}

\end{document}